\newif\ifdraft
\newcommand{\htodo}[1]{\todo[color=orange]{#1}}
\newcommand{\IN}{\mathds{N}}
\newcommand{\EE}{\mathcal{E}}
\renewcommand{\AA}{\mathcal{A}}
\newcommand{\BB}{\mathcal{B}}
\newcommand{\HH}{\mathcal{H}}
\newcommand{\XX}{\mathcal{X}}
\DeclareMathOperator{\Bin}{Bin}
\newcommand{\Exp}[1]{\mathbb{E}\left[\,#1\,\right]}
\renewcommand{\Pr}[1]{\mathbb{P}\left[#1\right]}
\newcommand\E[1]{\mathbb{E}\left[#1\right]}
\newcommand{\expp}[1]{\exp\left( #1 \right)}
\DeclarePairedDelimiter{\ceil}{\lceil}{\rceil}
\DeclarePairedDelimiter{\floor}{\lfloor}{\rfloor}
\newtheorem{theorem}{Theorem}
\newtheorem{lemma}[theorem]{Lemma}
\newtheorem{claim}[theorem]{Claim}
\newcommand{\push}{\textsc{push}\xspace}
\newcommand{\ppull}{\textsc{push-pull}\xspace}
\newcommand{\visit}{\textsc{visit-exchange}\xspace}
\newcommand{\tvisit}{\textsc{t-visit-exchange}\xspace}
\newcommand{\rvisit}{\textsc{r-visit-exchange}\xspace}
\newcommand{\meet}{\textsc{meet-exchange}\xspace}
\newcommand{\tmeet}{\textsc{t-meet-exchange}\xspace}
\newcommand{\Tpush}{T_{\rm push}}
\newcommand{\Tppull}{T_{\rm ppull}}
\newcommand{\Tvisit}{T_{\rm visitx}}
\newcommand{\Tmeet}{T_{\rm meetx}}
\newcommand{\Rvisit}{R_{\rm visitx}}
\title{How to Spread a Rumor: Call Your Neighbors or Take a Walk?}
\author{
    George Giakkoupis\\ %\footnote{Supported in part by ANR Project Pamela.}
    INRIA, Rennes, France
    \and
    Frederik Mallmann-Trenn \\
    King's College London, UK
    \and
    Hayk Saribekyan\\ %\footnote{Supported by Gates Cambridge Trust.} \\
    University of Cambridge, UK
}
\date{}
\begin{document}

\maketitle

\begin{abstract}
We study the problem of randomized information dissemination in networks.
We compare the now standard \ppull protocol, with agent-based alternatives where information is disseminated by a collection of agents performing independent random walks.
In the \visit protocol, both nodes and agents store information, and each time an agent visits a node, the two exchange all the information they have.
In the \meet protocol, only the agents store information, and exchange their information with each agent they meet.

We consider the broadcast time of a single piece of information in an $n$-node graph for the above three protocols,
assuming a linear number of agents that start from the stationary distribution.
We observe that there are graphs on which the agent-based protocols are significantly faster than \ppull, and graphs where the converse is true.
We attribute the good performance of agent-based algorithms to their inherently fair bandwidth utilization, and conclude that, in certain settings, agent-based information dissemination, separately or in combination with \ppull, can significantly improve the broadcast time.

The graphs considered above are highly non-regular.
Our main technical result is that on any regular graph of at least logarithmic degree, \ppull and \visit have the same asymptotic broadcast time.
The proof uses a novel coupling argument which relates the random choices of vertices in \ppull with the random walks in \visit.
Further, we show that the broadcast time of \meet is asymptotically at least as large as the other two's on all regular graphs, and strictly larger on some regular graphs.

As far as we know, this is the first systematic and thorough comparison of the running times of these very natural information dissemination protocols.
\end{abstract}

\section{Introduction}
\label{sec:introduction}

We investigate the problem of spreading information (or rumors) in a distributed network using randomized communication.
The archetypal paradigm solution is the so-called, \emph{randomized rumor spreading} protocol, where each informed node samples a random neighbor in each round, and sends the information to it.
This is the \push version of rumor spreading, introduced by Demers et al.\ in the 80's~\cite{Demers1988}, as a robust and lightweight protocol for distributed maintenance of replicated databases~\cite{Demers1988,Feige1990}.

The \ppull variant of rumor spreading, popularized by Karp et al.\ in 2000~\cite{KarpSSV00}, allows for bidirectional communication: In each round, every node calls a random neighbor and the two nodes exchange all information they have.
\ppull was initially proposed as a way to reduce the message complexity of \push on the complete graph~\cite{KarpSSV00}.
It was subsequently observed that it is significantly faster than \push in several families of graphs, including graph models of social networks~\cite{Chierichetti2011tcs,DoerrFF11}.

The above two protocols have been studied extensively over the past 15 years, and have also found several applications, including data aggregation~\cite{Kempe2003,Boyd2006,Mosk-Aoyama2008}, resource discovery~\cite{Balter1999}, failure detection~\cite{vanRenesse1998}, and even efficient simulation of arbitrary distributed computations~\cite{Censor-Hillel2017}.

We compare the above well-established protocols for information spreading, with agent-based alternatives that have received almost no attention so far, even though they have very attractive properties, as we will see.
These alternative protocols use a collection of agents performing independent random walks to disseminate information.
In the \visit protocol, both nodes and agents store information, and each time an agent visits a node, the two exchange all the information they have.
In the \meet protocol, only the agents store information, and exchange their information with each agent they meet.

Independent parallel random walks have been studied since the late 70s~\cite{Aleliunas1979}, mainly as a way to speed-up cover and hitting times and related graph problems~\cite{Broder1994,AlonAKKLT11,Elsasser2011,Efremenko2009}.
As far as we know, \visit has not been studied before.
For \meet there is some limited previous work.
%There has been some limited work on \meet.
It was studied for specific graph families, namely grids~\cite{PPPU10,LamLMSW12} and random graphs~\cite{CooperFR09}.
Also, general bounds on the broadcast time of \meet with respect to the meeting time were shown~\cite{DimitriouNS06}.

In this paper, we restrict our attention to the case where the number of agents in the network is linear in the number of nodes $n$, and we assume that all agents start from the stationary distribution.

Under the assumption that there is a linear number of agents, the agent-based protocols have similar amount of communication as the rumor spreading protocols, both in terms of the (maximum) total number of messages sent per round, which is linear, and the total number of bits.
One can think of the agents simply as tokens passed between nodes, along with the actual information (if there is any).
Agents need not be labeled, so each node only needs to send a counter of the number of agents in each message.

The assumption that agents start from the stationary distribution makes sense in a setting where several pieces of information (or rumors) are generated frequently and distributed in parallel over time by the same set of agents, which execute perpetual independent random walks.
As discussed later, our results for regular graphs hold also in the case where there is exactly one agent starting from each node.
%\hnote{Our results for regular graphs also hold in the case when there is one agent starting from each node.
%This setting can be implemented by the nodes locally, without having any information on the underlying network, e.g., the total number of edges that is necessary to compute the stationary distribution.
%}

One distinct advantage of the agent-based protocols is their \emph{locally fair} use of bandwidth, i.e., all edges are used with the same frequency, since the random walks are independent and start from stationarity.
Interestingly, the superiority of \ppull over \push is commonly attributed to a similar fairness property: that nodes of larger degree contribute more to the dissemination ---
except that \ppull satisfies this property only for \emph{some} graph topologies, and \emph{approximately}, as we will see below.
In the agent-based protocols, on the other hand, this property is satisfied in a very precise and exact way.

We will see that this fairness property results in a significant performance advantage of \visit and \meet over \push and \ppull in certain families of graphs, on which the first two processes need only logarithmic time to spread an information, whereas the other two need polynomial time.

%\fnote{for me the main selling point is that it is actually faster, and that's what most people will care about. I think we need to push for that. In particular since all of this paper is about time. }
%\hnote{Not sure about main point because the fastness is argued only in c-examples. But I agree that we should mention it explicitly here. So some of what you say in your version.}

\paragraph{Contribution.}

We compare the broadcast times of a single piece of information, originated at an arbitrary node $s$ of an $n$-node graph $G=(V,E)$, when \push (or \ppull), \visit, and \meet are used.
In the first three, the broadcast time is the time until all vertices are informed, while in \meet it is the time until all \emph{agents} are informed.
Also, for \meet, we assume that the first agent to visit the source $s$ becomes informed, and from that point on, information is exchanged only between agents.\footnote{This is a technicality used to allow for direct comparison between the protocols, and has limited effect on our results.}
As mentioned before, we assume a linear number of agents, each starting from the stationary distribution.

\begin{figure*}[t]
    \centering
    \begin{subfigure}{0.195\textwidth}
      \centering
    \includegraphics[width=\linewidth,page=1]{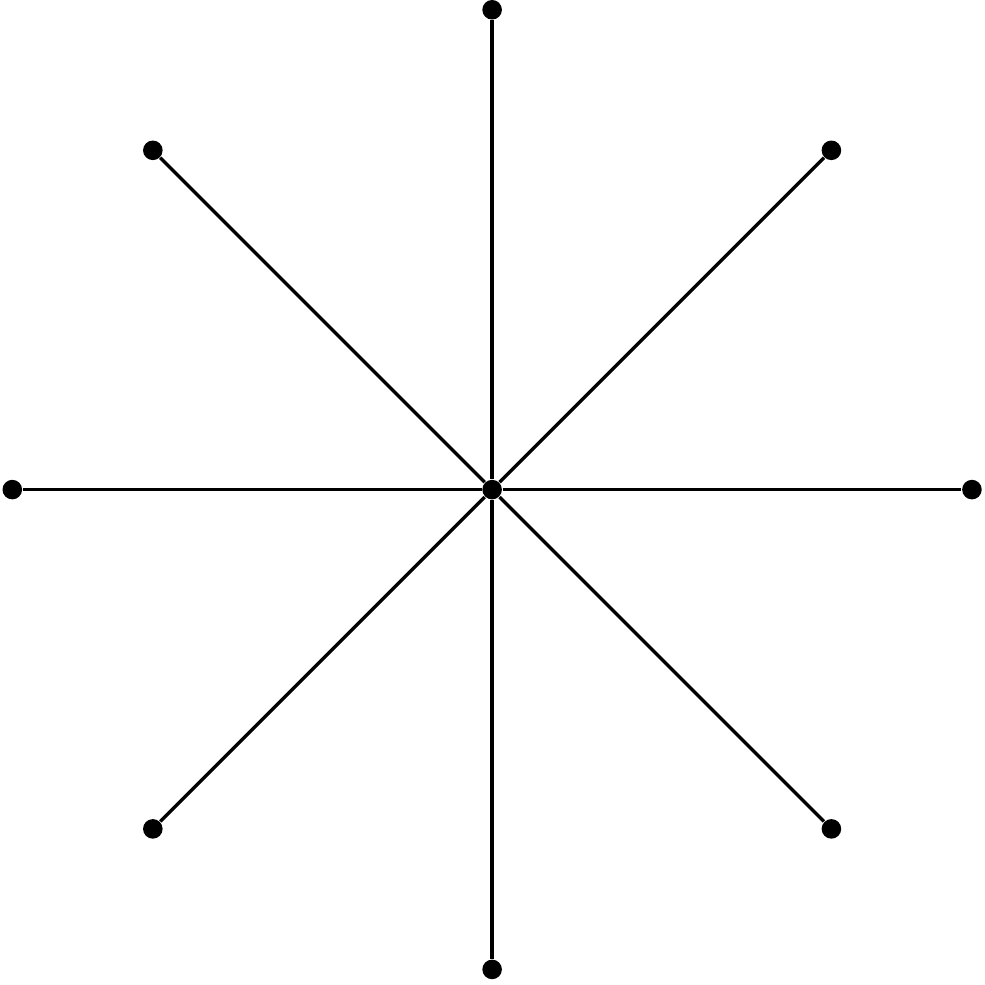}
      \caption{ }
      \label{fig:sub1}
    \end{subfigure}%
    \begin{subfigure}{0.195\textwidth}
      \centering
    \includegraphics[width=\linewidth,page=3]{pics/stars}
      \caption{}
      \label{fig:sub2}
    \end{subfigure}
    \begin{subfigure}{0.195\textwidth}
      \centering
    \includegraphics[width=\linewidth,page=3]{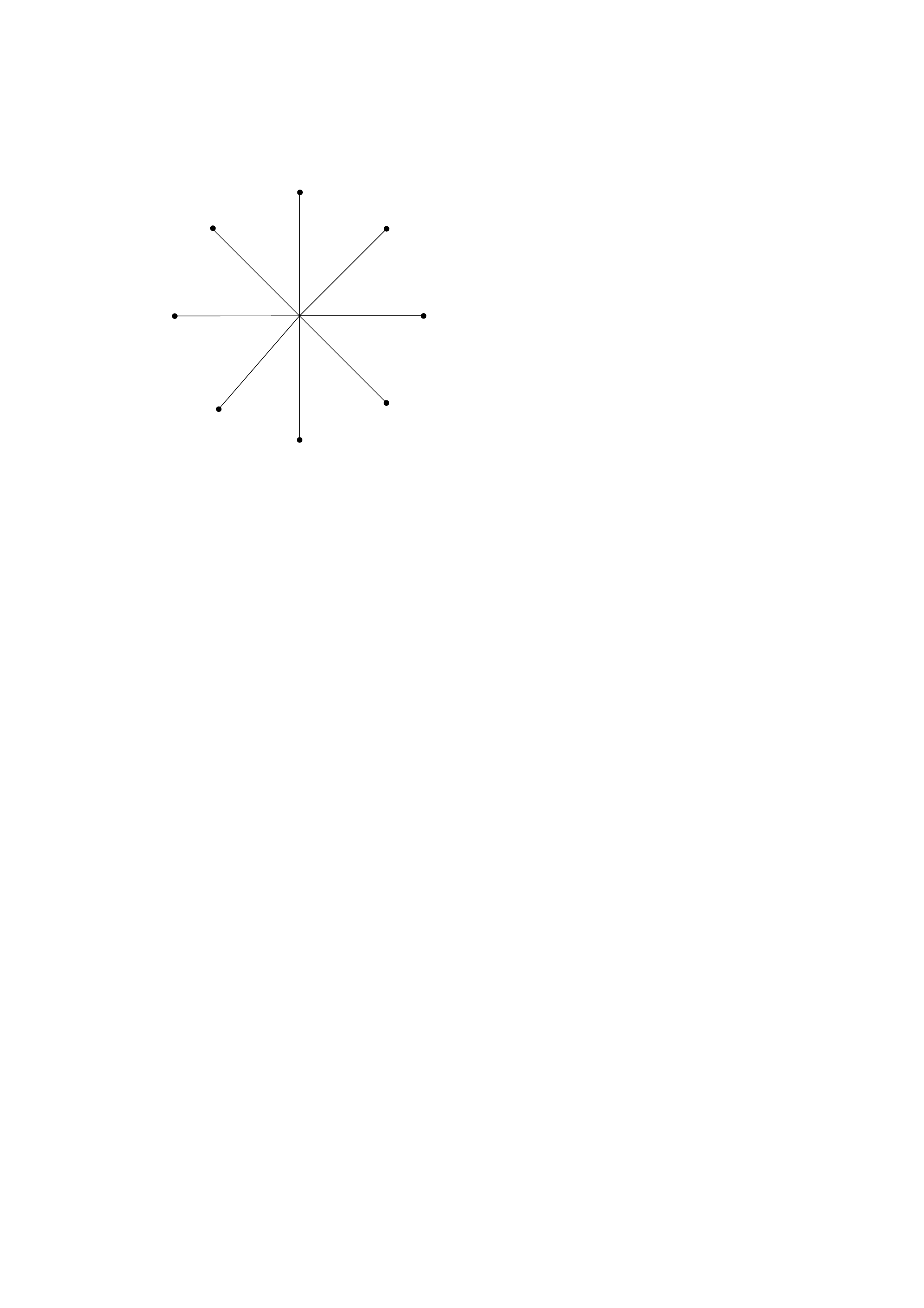}
      \caption{}
      \label{fig:sub3}
    \end{subfigure}
    \begin{subfigure}{0.195\textwidth}
      \centering
    \includegraphics[width=\linewidth,page=4]{pics/graph_examples}
      \caption{}
      \label{fig:sub4}
    \end{subfigure}
    \begin{subfigure}{0.195\textwidth}
      \centering
    \includegraphics[width=\linewidth,page=5]{pics/graph_examples}
      \caption{}
      \label{fig:sub5}
    \end{subfigure}
    \caption{
        \textbf{(a)}~Star $S_n$, on which $\Exp{\Tpush} = \Omega(n \log n)$ and all other processes take $O(\log n)$ time w.h.p.
        \textbf{(b)}~Double-star $S_n^2$, on which $\Exp{\Tppull} = \Omega(n)$, and $\Tvisit,\Tmeet=O(\log n)$ w.h.p.
        \textbf{(c)}~Heavy binary tree $B_n$ (leaves are connected to a clique), on which $\Tpush=O(\log n)$ w.h.p., $\Exp{\Tvisit} = \Omega(n)$, and, for a leaf source, $\Tmeet=O(\log n)$ w.h.p.
        \textbf{(d)}~Siamese heavy binary tree $D_n$, on which $\Tpush=O(\log n)$ w.h.p., and $\Exp{\Tvisit},\Exp{\Tmeet} = \Omega(n)$.
        \textbf{(e)}~Cycle-of-stars-of-cliques ($n^{1/3}$ stars with $n^{1/3}$ leaves each, $n^{1/3}$ nodes per clique), on which  $\Exp{\Tvisit} = O(n^{2/3})$ and $\Exp{\Tmeet} = \Omega(n^{2/3}\log n)$.
    }
    \label{fig:examples}
\end{figure*}

We observe that in general graphs, the broadcast times of the above protocols are \emph{incomparable}:
For any pair of protocols, there are examples of graphs where the first protocol is significantly faster than the other, by a polynomial factor in most cases.
The examples we use, depicted in Fig.~\ref{fig:examples}, are fairly simple, mainly trees or superpositions of trees with cliques.

The star graph in Fig.~\ref{fig:examples}(a) is an example where \push is known to take $\Omega(n\log n)$ rounds, as the center must contact all leaves.
\visit and \meet, on the other hand, take only logarithmic time,
%as all walks are quickly informed at the center and subsequently visit all leaves quickly.
as roughly half of the walks visit the center in each round, and a constant number visits each leaf on average.

%\fnote{I think we should compare ourselves to push-pull (we do this in the next paragraph, but it feels too late)} \fnote{we also need to point out that this is a known example and that we don't claim any credit for this. in particular since we have this example in the contribution section.}
%This speed-up is due to the local fairness property we mentioned earlier: Each of the edges is traverses by agents with the same frequency, which is $k/|E| = \Theta(1)$ in this case.

In the star, \ppull is also (extremely) fast.
The next example, the double-star in Fig.~\ref{fig:examples}(b), is a graph where \ppull (and thus also \push) is slow, whereas \visit and \meet are still fast.
This %is a prime example
demonstrates the advantages of the local fairness property we pointed out earlier, and the impact it can have on the broadcast time:
%\hnote{I think fairness on its own is a useful thing. If you have a hanging node, just pushing uselessly, it's a waste of 'energy', no? And we can explicitly mention this in one of the counterexamples.}
Here \ppull
%under-utilizes the bandwidth of the root node, as there are only two communications on average per round between the root and its children.
selects the edge between the two stars only with probability $O(1/n)$, %
which results in an expected broadcast time of $\Omega(n)$.
In \visit and \meet, on the other hand,
%there are $\Theta(\sqrt n)$ communications on average
the probability that some agent crosses the edge in a round is constant,
resulting in a logarithmic broadcast time.

Fig.~\ref{fig:examples}(c) and Fig.~\ref{fig:examples}(d) illustrate examples where rumor spreading protocols have an advantage over agent-based protocols.
In both examples \push (and thus \ppull) has logarithmic broadcast time.
For \visit, at least linear time is needed: Since almost all the volume of the graph is concentrated on the leaves, it is likely that all agents are on the leaves at time zero, and then it takes linear time before the first walk reaches the root.
For \meet, we have that it is fast in the first example, as all walks meet quickly in the clique induced by the leaves.
However, in the second example, where agents are roughly split between the two induced cliques, the broadcast times of both \meet and \visit is $\Omega(n)$.

The above results suggest that in certain settings, agent-based information dissemination, separately or in combination with \ppull, can significantly improve the broadcast time.
We stress that, even though the examples presented may seem contrived, they are intentionally simple to demonstrate the principle reasons that make the protocols perform differently, and we expect that similar result can be observed in a wide range of networks.
In particular, we believe that the observations for the double-star example of Fig.~\ref{fig:examples}(b),  extend to more general tree-like topologies with high-degree internal nodes.
%akin to the structure of the internet backbone (Fig.~\ref{fig:internetbb}).

All examples we have discussed so far, involve highly non-regular graphs.
Our main technical result concerns regular graphs, and can be stated somewhat informally as follows.
(For the formal, stronger statements see Sections~\ref{sec:push<msg} and~\ref{sec:msg<push}.)

\begin{theorem}
    \label{thm:intro}
    For any $d$-regular graph on $n$ vertices, where $d = \Omega(\log n)$, and any source vertex, the broadcast times of \push and \visit are asymptotically the same both in expectation and w.h.p.,\footnote{By with high probability (w.h.p.) we mean with probability at least $1-n^{-c}$, with some constant $c>0$ that can be made arbitrary large, by adjusting the constants in the statement.}
    modulo constant multiplicative factors.
\end{theorem}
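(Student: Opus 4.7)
The plan is to prove $\Tvisit = \Theta(\Tpush)$ (both w.h.p.\ and in expectation) via two explicit couplings between the processes. The key fact we exploit is that on a $d$-regular graph the uniform distribution is stationary for each independent random walk, so at every round the expected number of agents at every vertex equals the density $c = |A|/n$, which is a positive constant under our hypotheses (where $A$ denotes the agent set). Combined with $d = \Omega(\log n)$, this forces sharp concentration of agent counts and of the directions of their movements over any window of $\Theta(\log n)$ rounds.

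First, I would establish an agent-concentration lemma: with probability $1 - n^{-\omega(1)}$, for every vertex $v$, every window $[t, t+\Theta(\log n)]$ with $t \leq \mathrm{poly}(n)$, and every set $S$ of neighbors of $v$, the number of agent departures from $v$ into $S$ during the window is $\Theta(c|S|\log n / d)$. Conditional on the agent positions at the start of each round, the individual moves are independent, so Chernoff plus a union bound (using $d = \Omega(\log n)$) delivers this statement. As a special case, each informed node sees $\Omega(\log n)$ outgoing agent moves in every window, each landing on an essentially uniformly random neighbor, and each uninformed node sees $\Omega(\log n)$ incoming agents from an essentially uniformly random neighbor.

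For the upper bound $\Tvisit = O(\Tpush)$, I would partition time into blocks of $\Theta(\log n)$ rounds and couple each \visit block with the corresponding block of \push. Within a block, the concentration lemma guarantees that every informed node $v$ dispatches $\Omega(\log n)$ agents to uniformly random neighbors; I would pair these moves with the $\Theta(\log n)$ push choices of $v$ so that every edge used by \push to inform a new node during the block is also traversed by an informed agent in \visit, making the \visit-informed set dominate the \push-informed set block by block. The harder direction is $\Tpush = O(\Tvisit)$, since \visit also spreads information via agents that first become informed at an informed node and then carry the rumor along their walks. I would control these ``chain informations'' by attributing each newly informed node in \visit, through the coupling, to a short chain of uniform-random-neighbor hops starting from a node already informed at the beginning of the current $\Theta(\log n)$-round block; each such chain can then be reproduced by $O(1)$ consecutive \push steps per \visit round, with the concentration lemma again aligning the randomness.

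The main obstacle is precisely this lower bound: one has to show that the chain/cascade effect does not give \visit a superconstant advantage over \push. Handling it requires a coupling that tracks not only where agents go but also when and how they acquire the rumor, and uses regularity together with $d = \Omega(\log n)$ to guarantee that, aggregated over any $\Theta(\log n)$-round window, the randomness used by the two processes is essentially the same. Once the coupling is in place, a standard Markov-style argument promotes the w.h.p.\ bounds to bounds in expectation, yielding the full statement of the theorem.
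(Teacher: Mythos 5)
Your high-level framing is right---two couplings, one per direction, driven by the fact that on a regular graph each vertex sees $|A|/n=\Theta(1)$ agents per round in expectation---and you correctly single out $\Tpush = O(\Tvisit)$ as the hard direction. But the step you cover with ``the concentration lemma again aligning the randomness'' is exactly where the proof has to work hardest, and your lemma does not deliver it. The number of agents arriving at a single vertex in a single round is roughly $\Bin(\Theta(d),1/d)$: constant in expectation but not $O(1)$ with high probability, and your per-vertex, per-window bound of $\Theta(\log n)$ departures does not prevent an information path from adaptively hopping, at each step, to a vertex at precisely a round in which it receives $\omega(1)$ agents. Under the natural coupling (the $i$th neighbor $u$ samples in \push equals the destination of the $i$th informed-agent departure from $u$ in \visit), the \push delay along a path is the total number of agent-visits accumulated along it---the paper's ``congestion'' of a canonical walk---and one must show that \emph{every} one of the exponentially many length-$k$ walks the process could realize has congestion $O(k)$. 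That needs an exponentially small failure probability per walk \emph{and} a count of the relevant walks that beats it; the paper manufactures the latter with its concise descriptors, after first truncating the process (\tvisit) so that each neighborhood holds $O(d)$ agents and per-round arrivals are dominated by $\Bin(O(d),1/d)$ independently of the past. Nothing in your proposal substitutes for this union bound over adaptively chosen paths.

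The other direction has a gap too, though a more repairable one. Block-by-block domination of the \push-informed set by the \visit-informed set fails inductively: within a single $\Theta(\log n)$-round block, \push can inform a chain of $\Theta(\log n)$ vertices in consecutive rounds, and a vertex informed near the end of the block has dispatched no agents yet, so pairing ``moves within the block'' cannot certify that \visit has traversed the whole chain by the block boundary. The paper instead fixes the first-information path of \push to each vertex and shows that the per-edge waiting time in \visit is stochastically dominated by twice a geometric random variable with constant parameter, so the total along the path concentrates at $O(\tau_u + \log n)$; crucially, it couples only the \emph{odd} rounds of \visit to the choices of \push, so that the even-round arrivals are independent of the (future-determined) path---sidestepping the conditioning-on-the-future problem that your block pairing would also run into.
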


Recall that \push and \ppull have asymptotically the same broadcast times on regular graphs~\cite{GiakkoupisNW16}.
Note also that the broadcast times of \push and \ppull on $d$-regular graphs can vary from logarithmic, e.g., in random $d$-regular graphs, to polynomial, e.g., in a path of $d$-cliques where the broadcast time is $\Omega(n)$.

The proof of Theorem~\ref{thm:intro} uses a novel coupling argument which relates the random choices of vertices in \push, with the random walks in \visit.
Roughly speaking, for each node $u$, we consider the list of neighbors that $u$ samples in \push, and the list of neighbors to which informed agents move to in their next step after visiting $u$ in \visit.
Our coupling just sets the two lists to be identical for each $u$.
Even though the coupling is straightforward, its analysis is not.
On the one direction of the proof, showing that the broadcast time of \push is dominated by the broadcast time of \visit, the main step is to bound the congestion, i.e., the number of agents encountered along a path, for all possible paths through which information travels.
%\fnote{add a sentence that we divide the paths into different categories depending on their likelihood }
On the reverse direction, we focus only on the fastest path through which information reaches each node in \push, and show that an equally fast path exists in \visit.
A useful trick we devise, to consider only every other round of \visit in the coupling, simplifies the proof of this second direction.
We expect that our proof ideas will be useful in other applications of multiple random walks as well.

In addition to Theorem~\ref{thm:intro}, we observe that the broadcast time of $\meet$ is asymptotically at least as large as \visit's on any regular graph of at least logarithmic degree.
The idea is that once all agents are informed it takes at most logarithmic time to cover the graph.
It is probably surprising that the converse direction is not true, i.e., there are regular graphs where \meet is strictly slower than \visit.
%In view of the above results, which suggest that \push, \visit, and \meet perform more or less the same on regular graphs, it is probably a surprising observation that there are in fact regular graphs where \meet is strictly slower than \visit.
Fig.~\ref{fig:examples}(e) presents one such example of a $d$-regular graph, where $d = n^{1/3}$, for which a logarithmic-factor gap exists between the broadcast times of the two protocols.

\paragraph{Road-map.}
In Section~\ref{sec:related-work}, we survey additional related work.
In Section~\ref{sec:processes}, we provide a formal description of the protocols we study.
In Section~\ref{sec:cexamples}, we analyze the broadcast times for the example graphs in Fig.~\ref{fig:examples}.
In Section~\ref{sec:push<msg}, we prove the first direction of Theorem~\ref{thm:intro}, namely, that \push is at least as fast as \visit;
the other direction is proved in Section~\ref{sec:msg<push}.
The result that \visit is at least as fast as \meet on regular graphs is provided in Section~\ref{sec:visit<meet}.
Finally, some open problems are discussed in Section~\ref{sec:open}.

%Due to space limitations, several proofs, including the analysis of the broadcast times of the example graphs in Fig.~\ref{fig:examples}, are only available in the full version of the paper.
%The broadcast time of these protocols on various examples of graphs is studied in Sect.~\ref{sec:cexamples}.
%appendix

\section{Related work}
\label{sec:related-work}

The \push variant of rumor spreading was first considered in~\cite{Demers1988}.
It was subsequently analyzed on various graphs in~\cite{Feige1990}, where also bounds with the degree and diameter were shown for general graphs.
The \ppull variant was introduced in \cite{KarpSSV00}, and was studied initially on the complete graph.
More recently, there has been a lot of work on showing that in several settings $O(\log n)$ rounds of rumor spreading suffice w.h.p.\ to broadcast information~\cite{doerr2011social,BerenbrinkEF16,DoerrFS14}.
In addition, general bounds in terms of expansion parameters of the graph have been studied extensively, e.g., in~\cite{Giakkoupis2014_vertex_tight,ChierichettiGLP18}.

Another line of work compares synchronous and asynchronous versions of rumor spreading, where in the latter each node takes steps at the arrival times of an independent unit-rate Poisson process.
%As is the case in this paper, the equivalence between these versions tends to hold only for regular graphs.
In \cite{Sauerwald10_async_sync_push}, it is shown that the asynchronous version of \push has the same broadcast time as standard \push on regular graphs.
%The same holds for the \push process where at each round each node can fail, i.e.~stop pushing, with constant probability \cite{Elsasser2009_robustness}.
In~\cite{GiakkoupisNW16,AngelMP17}, tight bounds are given for the relation between the broadcast times of synchronous and asynchronous \ppull.

On the random walk literature, there has been some previous work on models related to \meet, motivated mainly by the study of the spread of infectious diseases.
The earliest work considering a process equivalent to \meet is \cite{DimitriouNS06}, which studies general graphs.
It shows that the broadcast time of \meet is at most $O(\log n)$ times larger than the meeting time of two random walks in the graph, and that this upper bound is tight.
Later, the authors of~\cite{CooperFR09} studied \meet for the case of random regular graphs and $k\leq n^\epsilon$ random walks.
They showed that the expected broadcast time is $O(n\log k/k)$.
In~\cite{PPPU10}, the $2$-dimensional finite grid was studied and a broadcast time of $\tilde\Theta(n/\sqrt{k})$ was shown for $k$ random walks.
This work was extended to $d$-dimensional grids in~\cite{LamLMSW12}, where a tight lower bound up to a polylogarithmic factor was also shown.

The continuous variant of \meet in the infinite grid was studied in \cite{kesten2005,kesten2008}.
In these works the initial number of agents at each vertex is a Poisson random variable, with constant mean, and initially the information is placed at the origin.
The authors prove a theorem for the asymptotic shape formed by the set of informed agents.
A similar process is the frog model, where only the informed agents move, while the uninformed ones stay put until they are hit by an informed agent.
This process has been studied for infinite grids \cite{popov2003frogs,alves2002} and finite $k$-ary trees~\cite{hermon2018frogs}.

Other superficially related processes include coalescing random walks \cite{BerenbrinkGKM16,KanadeMS19}, and
coalescing branching walks \cite{MitzenmacherRR18, BerenbrinkGK18}.
See also \cite{Cooper11} for a survey on multiple random walks.

\section{Protocol Descriptions}
\label{sec:processes}

We compare four information spreading protocols.
The first two, \push and \ppull, are standard versions of randomized rumor spreading.
The other two, \visit and \meet, use a system of interacting agents performing independent random walks, and are less standardized.
In \push and \ppull, information is communicated between adjacent vertices, whereas in \visit and \meet information is passed between an agent and a vertex it visits, or between two agents when they meet.
All protocols proceed in a sequence of synchronous rounds.
They are applied on a connected undirected graph $G=(V,E)$ with $|V|=n$ vertices, and the information originates from an arbitrary source vertex $s\in V$.

\paragraph{Push.}
In round zero, vertex $s$ becomes informed.
In each round $t \geq 1$, every vertex $u$ that was informed in a previous round samples a random neighbor $v$ to send the information to, and if $v$ is not already informed, it becomes informed in this round.
We denote by $\Tpush(G,s)$ the number of rounds before all vertices are informed.

\paragraph{Push-Pull.}
As in \push, vertex $s$ is informed in round zero.
In each round $t \geq 1$, every vertex $u\in V$ (informed or not) samples a random neighbor $v$ to exchange information with, and if exactly one of $u$ and $v$ was informed before round $t$, then the other vertex becomes informed as well.
The number of rounds before all vertices are informed is denoted $\Tppull(G,s)$.

\paragraph{Visit-Exchange.}

Let $A$ be a set of \emph{agents}.
Every agent $g\in A$ performs an independent simple random walk on $G$, starting from a vertex sampled independently from the stationary distribution (i.e., each vertex $v$ is sampled with probability $\deg(v)/(2|E|)$).
In round zero, vertex $s$ becomes informed, and every agent that is on vertex $s$ becomes informed as well.
In each subsequent round $t\geq 1$, all agents do a single step of their random walk in parallel.
If an agent that was informed in a previous round visits a vertex $v$ that is not yet informed, then $v$ becomes informed in this round.
Also, if an agent $g$ that is not yet informed visits a vertex which got informed either in a previous round or in the current round (by some other informed agent), then $g$ becomes informed as well.
We denote by $\Tvisit(G,s)$ the number of rounds before all \emph{vertices} (and thus all agents) are informed.

\paragraph{Meet-Exchange.}

As in \visit, a set $A$ of agents perform independent random walks starting from the stationary distribution.
In round zero, all agents that are on vertex $s$ become informed.
If there is no agent on $s$ in round zero, then the first agent to visit $s$ after round zero becomes informed (if more than one agents visit $s$ simultaneously, they all get informed).
After that point, vertex $s$ does not inform any other agent that visits $s$.
In each subsequent round $t$, whenever two agent $g,g'$ meet and exactly one of them was informed in a previous round, the other agent becomes informed as well.
We denote by $\Tmeet(G,s)$ the number of rounds before all \emph{agents} are informed.

If $G$ is a bipartite graph, then, depending on the initial positions of the agents, it is possible that some agents are never informed, thus $\Tmeet(G,s)=\infty$.
To avoid this complication we will sometimes assume that the random walks of the agents are lazy, i.e., a walk stays put in a round with probability $1/2$.
This ensures that $\E{\Tmeet(G,s)} < \infty$, for any connected graph $G$.

\bigskip

We will collectively refer to $\Tpush(G,s)$, $\Tppull(G,s)$, $\Tvisit(G,s)$,
%$\Tvisit^\ast(G,s)$, $\Tmeet^\ast(G,s)$,
and $\Tmeet(G,s)$ as the \emph{broadcast time} of the corresponding protocol.
We will sometimes omit graph $G$ and source vertex $s$ in this notation, when they are clear from the context.

\section{Examples}
\label{sec:cexamples}

In this section, we provide examples demonstrating that \push or \ppull rumor spreading, \visit, and \meet can have very different broadcast times on the same graph.
%\hnote{Table ??[ref{tab:cexamples}] summarizes the results in this section.}
More precisely, we present graphs where rumor spreading takes polynomial time while \visit and \meet need only logarithmic time (Sections~\ref{sec:star} and~\ref{sec:double-star}), and also graphs where the converse is true (Sections~\ref{sec:tree} and~\ref{sec:twotrees}).
We demonstrate a similar separation between \visit and \meet (Sections~\ref{sec:twotrees} and~\ref{sec:cycle-stars-cliques}), but the gap is polynomial only in one direction, while in the other it is logarithmic.
We do not know whether there exist graphs where \visit is faster than \meet by more than a logarithmic factor.
In all examples below, we assume that the number of agents is
$ |A| = \alpha n = \Theta(n) $.

\subsection{Star Graph}
\label{sec:star}

Let $S_n$ denote an \emph{$n$-leaf star}, that is, a tree with one internal node (the center of the star), and $n$ leaves; see Fig.~\ref{fig:examples}(a) for an illustration.
This is an example of a graph where \push is very slow, whereas all other processes are very fast.

\begin{lemma}
    \label{lem:star}
    For the graph $S_n$ described above and any source vertex $s$,
%    $s\in V(S_n)$,
    (a)~$\Exp{\Tpush} = \Omega(n\log n)$,
%    (a)~$\Tpush = \Omega(n\log n)$, w.h.p.,
    (b)~$\Tppull \leq 2$,
    (c)~$\Tvisit = O(\log n)$, w.h.p.,
    and
    (d)~$\Tmeet = O(\log n)$, w.h.p.
\end{lemma}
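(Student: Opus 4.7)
The plan is to exploit the fact that on $S_n$ the stationary distribution places mass $1/2$ on the center, so at every round $\Theta(n)$ of the $\alpha n$ agents lie at the center with high probability, by a Chernoff bound on independent samples from stationarity. Parts \textbf{(a)} and \textbf{(b)} are short. For \textbf{(a)}, in \push only the center can inform leaves and samples one per round, so covering the $n$ leaves is a coupon collector problem giving $\Exp{\Tpush} = \Omega(n\log n)$ regardless of whether $s$ is the center or a leaf (in the latter case the center becomes informed in round $1$ and the same argument applies). For \textbf{(b)}, in \ppull every leaf has only the center as a neighbor and therefore pulls from it in every round, so if $s$ is the center all leaves are informed in round $1$, and if $s$ is a leaf the source pushes to the center in round $1$ and all other leaves pull in round $2$.

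For part \textbf{(c)}, I would analyse \visit starting from $s = \text{center}$. In round $0$ the $\Theta(n)$ agents on the center become informed, and in each subsequent round the $\Theta(n)$ informed agents currently at the center move to independent uniform leaves. Fixing a leaf $\ell$, the probability no informed agent visits $\ell$ in one round is at most $(1-1/n)^{\Theta(n)} = e^{-\Theta(1)}$; iterating for $O(\log n)$ rounds and union-bounding over the $n$ leaves gives the claim. Some care is needed because the set of agents at the center changes from round to round, but since the agents' trajectories are mutually independent and each marginal remains stationary, conditioning on the identities of agents at the center makes the choices of leaves independent across agents and rounds, and the standard Chernoff/union bound goes through. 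If $s$ is a leaf, I would first argue that $O(\log n)$ rounds suffice to inform the center: in each round $\Theta(n)$ agents at the center move to uniform leaves, so $s$ is hit with constant probability per round; the first such agent becomes informed and carries information back to the center, reducing to the previous case.

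Part \textbf{(d)} requires handling the periodicity of $S_n$: under simple walks the agents on the center at round $0$ and those on the leaves at round $0$ occupy opposite sides of the bipartition at every round and hence never meet, so I would invoke the paper's convention of lazy walks on bipartite graphs. A direct computation shows the lazy kernel on $S_n$ mixes in exactly one step: for every $t\geq 1$ and every starting vertex, the walker is at the center with probability exactly $1/2$. Consequently, at each round $t\geq 1$ the $\alpha n$ agents are i.i.d.\ stationary, and $\Theta(n)$ of them sit at the center w.h.p. With $s = \text{center}$, all $\Theta(n)$ agents initially on the center are informed in round $0$; since the laziness coins are independent, at every subsequent round a constant fraction of these informed agents remains at the center, so any uninformed agent arriving there meets an informed one and becomes informed. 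Each agent visits the center in any given round with probability $1/2$, so the probability an agent avoids the center for $c\log n$ rounds is at most $n^{-c}$, and a union bound over the $\alpha n$ agents finishes the case. When $s$ is a leaf, I would first show that $O(\log n)$ preliminary rounds suffice for some agent to reach $s$ (again because $\Theta(n)$ agents per round attempt a move from the center to a uniform leaf), after which the argument reduces to the previous case.

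The main subtlety lies in \textbf{(d)}: one must guarantee that an informed agent is present at the center at every round after round $0$, so that each uninformed agent is informed the first time it visits the center. This rests on the observation that the initial $\Theta(n)$-sized informed cohort is, at every future round, an i.i.d.\ stationary sample and therefore still has $\Theta(n)$ members at the center w.h.p., which a per-round Chernoff bound combined with a union bound over $O(\log n)$ rounds delivers.
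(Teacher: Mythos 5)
Your parts (a), (b), and (c) are correct and essentially match the paper's proof: coupon collector for \push, the trivial two-round bound for \ppull, and for \visit the observation that $\Theta(n)$ (informed) agents leave the center for uniformly random leaves every round, so each leaf is hit with constant probability per round and an $O(\log n)$-round union bound finishes. The paper packages the hitting step slightly more uniformly (``from any placement, any fixed vertex is visited within two rounds with probability at least $1/n$ per agent''), but the content is the same.

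For (d) you take a genuinely different, and mostly valid, decomposition. The paper tracks a \emph{single} informed agent $g^\ast$ (the first to visit $s$) and shows that every other agent meets $g^\ast$ at the center within $O(\log n)$ rounds, using that under lazy walks any two agents are simultaneously at the center in the next round with probability exactly $1/4$. You instead maintain a $\Theta(n)$-sized informed cohort that is present at the center in every round w.h.p., and argue each uninformed agent visits the center within $O(\log n)$ rounds. Your version buys a cleaner picture when $s$ is the center (a linear cohort is informed at round $0$), at the cost of a per-round Chernoff-plus-union-bound maintenance argument; the paper's pairwise-meeting version needs no such maintenance and handles both source cases identically. The one place where your write-up has a real (though easily repaired) gap is the leaf-source case of (d): after the first agent $g^\ast$ is informed at the leaf $s$, you say the argument ``reduces to the previous case,'' but the previous case assumed a $\Theta(n)$-sized informed cohort, and at that moment only $g^\ast$ is informed. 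Your cohort-based invariant (``an informed agent is at the center every round'') does not yet hold. You need one more step: $g^\ast$ reaches the center within $O(\log n)$ rounds w.h.p.\ (probability $1/2$ per round under laziness), and on that visit it meets the $\Theta(n)$ agents that are w.h.p.\ at the center in that round, which then seeds your cohort; alternatively, just switch to the paper's pairwise argument, since each other agent meets $g^\ast$ with probability $1/4$ per round regardless of cohort size. With that sentence added, your proof is complete.
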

\begin{proof}
\textbf{(a):}
This bound is well-known.
It follows from the observation that the center needs to sample each of the leaves (except possibly for one) before all vertices are informed.
The time for that is the time needed to collect all $n$ coupons (except possibly for one) in a coupon collector's problem, which is $\Theta(n\log n)$ %w.h.p.
in expectation.

\medskip

\textbf{(b):}
This bound is also well-known (and trivial).
It takes one round to inform all vertices if $s$ is the source, and two rounds if $s$ is a leaf.

\medskip

\textbf{(c):} For any pair of vertices $v,u$, the probability that an agent located at $v$ visits $u$ within the next two rounds is at least $1/n$.
%We observe that for each agent $g\in A$ and any vertex $u$, the probability that $g$ visits $u$ within the next two rounds is at least $1/n$.
Since agents do independent random walks, it follows from standard Chernoff bounds (Theorem~\ref{thm:chernoff}) that, for any placement of the agents at round $t$, at least one of the $|A|=\Theta(n)$ agents will visit a given vertex $u$ by round $t + O(\log n)$ w.h.p.
By this observation, it takes $O(\log n)$ rounds w.h.p.\ until the first agent gets informed (by visiting $s$).
If $s$ is not the center, then the center gets informed in the next round.
After that it takes at most two rounds before all agents are informed, because an agent visits the center every other round.
Finally, every leaf $u$ gets informed in an additional $O(\log n)$ rounds w.h.p., by the same observation we used above.

\medskip

\textbf{(d):} Since the graph is bipartite, we assume that the random walks are lazy (i.e., in every round, each random walk stays put with probability $1/2$).
Similarly to~(c),
for any pair $v,u$, the probability that an agent located at $v$ visits $u$ within the next two rounds is at least $1/(4n)$, thus for any placement of the agents at round $t$, at least one agent visits  $u$ by round $t + O(\log n)$ w.h.p.
%we have that for each $g\in A$ and vertex $u$, the probability that $g$ visits $u$ within the next two rounds is at least $1/(4n)$, and thus at least one of the $|A|=\Theta(n)$ agents visits $u$ within the next $O(\log n)$ rounds w.h.p.
It follows that it takes $O(\log n)$ rounds w.h.p.\ until the first agent gets informed (by visiting $s$);
let $g^\ast$ denote that agent (or one of them, if there are many).
We complete the proof by arguing that within an additional $O(\log n)$ rounds, w.h.p.\ every agent $g\neq g^\ast$ meets with $g^\ast$ at the center vertex, and thus, all agents become informed within $O(\log n)$ rounds w.h.p.
This follows from the observation that for any given placement of $g^\ast$ and $g$, the probability they are both at the center vertex in the next round is exactly $1/4$.
Thus, a Chernoff bound yields that $g^\ast$ and $g$ will meet w.h.p.\ within $O(\log n)$ rounds.
%\fnote{since you are reusing the same $g^*$ here there are dependencies. Instead I'd argue that $g^*$ will make $O(log n)$ visits to the middle and each random walk will likely intersect with them}
\end{proof}

%%%%%%%%%%%%%%%%%%%%%%%%%%%%%%%%%%%%%%%%%%%%%%%%%%%%%%%%%%%%%%%%%%%%
%%%%%%%%%%%%%%%%% START OF DOUBLE-STAR EXAMPLE %%%%%%%%%%%%%%%%%%%%%

\subsection{Double Star}
\label{sec:double-star}

In the star example above only the \push version of randomized rumor spreading is slow, while \ppull is extremely fast.
Next we present a graph where \ppull (and thus, \push) is slow, while \visit and \meet are fast.
Let $S^2_n$ denote a \emph{double-star} graph: two star graphs with $n / 2$ vertices with their centers connected by an edge; see Fig.~\ref{fig:examples}(b).
\begin{lemma}
    \label{lem:double-star}
    For the graph $S^2_n$ described above and any source vertex $s$,
%    $s\in V(S^2_n)$,
    (a)~$\Exp{\Tppull} = \Omega(n)$,
    (b)~$\Tvisit = O(\log n)$, w.h.p.,
    and
    (c)~$\Tmeet = O(\log n)$, w.h.p.
\end{lemma}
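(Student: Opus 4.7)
For part (a), the plan is to isolate the bridge edge $\{c_1, c_2\}$ between the two centers as the only connection between the two halves of $S^2_n$. In \ppull, this edge is used in a given round only if $c_1$ selects $c_2$ or $c_2$ selects $c_1$; since each center has degree $n/2+1 = \Theta(n)$, this happens with probability $O(1/n)$. Regardless of which side the source lies on, information must cross the bridge before the opposite side can be informed, so the expected broadcast time is at least the expected hitting time of the first round in which the edge is used, which is $\Omega(n)$.

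For part (b), I would mirror the three-phase argument of Lemma~\ref{lem:star}(c). First, within $O(\log n)$ rounds w.h.p.\ some agent visits $s$ (by a Chernoff bound over the $\Theta(n)$ independent walks, each of which is marginally stationary), and within a couple more rounds the source-side center $c_1$ is informed. Second, using stationarity and Chernoff, $\Theta(n)$ agents sit at $c_1$ in every round; once $c_1$ is informed all of them are informed, and each then moves across the bridge with probability $\Theta(1/n)$, so the probability $c_2$ becomes informed in any given round is $\Omega(1)$, yielding $O(\log n)$ rounds w.h.p. Third, once $c_2$ is informed, each leaf on the opposite side is visited by an agent within a further $O(\log n)$ rounds w.h.p., by exactly the same argument as in the star case.

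For part (c), the plan mirrors Lemma~\ref{lem:star}(d), using lazy random walks. Phase~1 (some agent $g^*$ is informed by visiting $s$ within $O(\log n)$ rounds w.h.p.) is identical to phase~1 of part~(b). For phase~2, each center has constant stationary mass, so $g^*$ visits some center within $O(1)$ rounds with constant probability; by a Chernoff bound, $\Theta(n)$ other agents occupy that center in the same round, meet $g^*$, and all become informed. Finally, in phase~3, once $\Theta(n)$ informed agents are present and marginally at stationarity, every remaining uninformed agent meets some informed agent at one of the two centers with constant probability per round, so all agents are informed within another $O(\log n)$ rounds w.h.p.

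The main technical obstacle I expect is justifying the Chernoff-style concentration claims when the ``informed'' set itself depends on the random walks. For \visit this is painless, because informed status is attached to vertices and the agents' positions within each round are mutually independent. For \meet, informed status is attached to agents, so to apply Chernoff to events like ``many other agents meet $g^*$ at a center'' or ``every uninformed agent visits some center together with an informed one within $O(\log n)$ rounds,'' I would condition on the trajectory of $g^*$ (and, in later rounds, of a fixed growing pool of informed agents) and then exploit the independence and stationarity of the remaining walks.
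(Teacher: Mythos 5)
Your parts (a) and (b) follow the paper's argument essentially verbatim: (a) is the same bridge-edge computation (the paper writes the two centers as $a,b$ and bounds the per-round crossing probability by $2/(n/2)$), and (b) is the same three-phase scheme. The paper formalizes your ``$\Theta(n)$ agents sit at the center every round'' step by modifying \visit to inject informed agents at a center whenever fewer than $|A|/8$ agents visit it, which makes the round-by-round independence you want explicit; this is precisely the dependency issue you flag in your final paragraph, so (a) and (b) are fine.

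Part (c) has a genuine gap in your phase 3. After phase 2 you have $\Theta(n)$ informed agents, but conditioned on \emph{which} agents these are, they are all located in the star of $c_1$ --- they were all at $c_1$ in a single round --- so they are emphatically not ``marginally at stationarity,'' and the claim that every remaining uninformed agent meets an informed agent at one of the two centers with constant probability per round is false for the agents of the second star: at that point no informed agent is anywhere near $c_2$. You must argue the bridge crossing for \meet just as you did for \visit in (b): a constant fraction of the informed agents is at $c_1$ in each round, each crosses to $c_2$ with probability $\Theta(1/n)$, so within $O(\log n)$ rounds w.h.p.\ some informed agent reaches $c_2$ and informs the $\Theta(n)$ agents present there. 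You also need informed agents to be present at each center \emph{persistently}, not just in one round, so that uninformed agents arriving later are still caught; the paper secures this with a second modification that forces at least one lazy agent to stay put at each center every round, so that once a center hosts an informed agent it does so in all subsequent rounds. With these two additions your outline matches the paper's proof.
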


\begin{proof}
    \textbf{(a):}
        Let $a,b$ be the centers of the two stars.
        For \ppull to complete, $a$ must sample $b$ or $b$  must sample $a$,  at least once.
        The probability of that happening in a given round is at most $2 / (n/2)$.
        Thus, the expected number of rounds until \ppull completes is at least $(n/2)/2$.

    \medskip

    \textbf{(b):}
        Let $\EE_u(t)$ denote the event that at least $|A|/8$ agents visit vertex $u\in \{a,b\}$ in round~$t$.
        We consider the following modification to process \visit.
        \begin{quote}
        \textbf{Modification 1:}
            For any round $t\geq0$ and $u\in\{a,b\}$, if event $\EE_u(t)$ does not hold, then  before round $t+1$ we add a number of new and informed agents to the graph, at node $u$, such that  there are $|A|/8$ agents at $u$.
        \end{quote}
        In \visit, at any round $t$, the expected number of agents that visit $u$ is greater than $|A| / 4$.
        It follows, $\Pr{\EE_u(t)} \geq 1 - e^{-\Omega(|A|)} = 1 - e^{-\Omega(n)}$ by a Chernoff bound.
        By applying a union bound for each $u\in\{a,b\}$ and round $t \leq \log^2 n$, we get that, with probability at least $1 - e^{-\Omega(n)}$, the modified process is identical to the original \visit for the first $\log^2 n$ rounds.
        Since our goal is to prove that $\Tvisit = O(\log n)$ w.h.p., it suffices to analyze the modified process.

        In the modified process, since there is at least a linear number of agents at each $u\in\{a,b\}$ before each round, it is straightforward to show that, w.h.p.:
        if $s\notin\{a,b\}$ and $s$ is adjacent, say, to $a$, it takes $O(\log n)$ rounds before $a$ gets informed
        (if $s = a$, $a$ is informed at round zero);
        then in $O(\log n)$ additional rounds $b$ gets informed;
        and finally in $O(\log n)$ extra rounds all leaves are informed.

        \medskip

    \textbf{(c):}
        We assume that the walks are lazy, as the graph is bipartite.
        We apply to \meet the same modification we made to \visit in part (b).
        We also make a second modification.
        Let $\EE_u'(t)$ denote the event that at least one of the agents at vertex $u\in \{a,b\}$ stays put in round~$t$.
        \begin{quote}
            \textbf{Modification 2:}
            For any round $t\geq 0$ and $u\in\{a,b\}$, if event $\EE_u'(t)$ does not hold, then  before round $t+1$ we add a new and informed agent to the graph, at node $u$.
        \end{quote}
        Once again, it is easy to show that with probability at least $1 - e^{-\Omega(n)}$, the modified process is identical to \meet in the first $\log^2 n$ rounds, thus we can analyze the modified process.

        Similarly to part (b), we have that the following hold w.h.p.\ for the modified process.
        If $s\notin\{a,b\}$ and $s$ is adjacent, say, to $a$, it takes $O(\log n)$ rounds before some agent visits $s$, thus gets informed, and then visits $a$.
        From that point on, by our second modification, there is always some informed agent at $a$.
        Then in $O(\log n)$ additional rounds some informed agent visits $b$, and again there is always an informed agent at $b$, thereafter.
        Finally, in $O(\log n)$ extra rounds every agent that is not already informed visits one of $a,b$ and thus gets informed.
\end{proof}

%%%%%%%%%%%%%%%%% END OF DOUBLE-STAR EXAMPLE %%%%%%%%%%%%%%%%%%%%%
%%%%%%%%%%%%%%%%%%%%%%%%%%%%%%%%%%%%%%%%%%%%%%%%%%%%%%%%%%%%%%%%%%

%%%%%%%%%%%%%%%%%%%%%%%%%%%%%%%%%%%%%%%%%%%%%%%%%%%%%%%%%%%%%%%%%%
%%%%%%%%%%%%%%%%% START OF HEAVY-BIN-TREE EXAMPLE %%%%%%%%%%%%%%%%

\subsection{Heavy Binary Tree}
\label{sec:tree}
Next
we describe a graph where \visit is slow, while the other processes are fast.
Let $B_n$ denote a \emph{heavy binary tree}, which is constructed by adding an edge between every pair of leaves of a balanced binary tree with $n$ vertices.
Even though $B_n$ is not a tree, we will refer to the leaves of the original binary tree as the leaves of $B_n$.
The set of leaves of $B_n$ induces a clique of $l = \ceil{n/2}$ vertices.
See Fig.~\ref{fig:examples}(c) for an illustration.

\begin{lemma}
    \label{lem:tree}
    For the graph $B_n$ described above and any source vertex $s$,
%    $s\in V(B_n)$,
    (a)~$\Tpush = O(\log n)$, w.h.p., and
    (b)~$\Exp{\Tvisit} = \Omega(n)$.
    If the source $s$ %\in V(B_n)$
%    is sampled at random from the stationary distribution,
    is a leaf,
    then (c)~$\Tmeet = O(\log n)$, w.h.p. %w.pr.~$1-O(\log n/n)$.
\end{lemma}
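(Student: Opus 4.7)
The graph $B_n$ has $l = \ceil{n/2}$ leaves forming a clique and $\Theta(n)$ internal nodes of degree at most~$3$, so $|E| = \Theta(n^2)$ and the stationary distribution of the simple random walk satisfies $\pi(v) = \Theta(1/n)$ for leaves and $\pi(v) = \Theta(1/n^2)$ for internal nodes. I would exploit these two very different scales throughout.

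For part~(a), the plan is to decompose \push on $B_n$ into two standard sub-processes. Starting from an internal source, each edge of the binary tree transmits with probability at least~$1/3$ per round, so I would first show that the root is reached in $O(\log n)$ rounds w.h.p.\ by bounding a sum of $O(\log n)$ i.i.d.\ $\mathrm{Geom}(1/3)$ waiting times along any root-to-source path, and then I would run the symmetric downward analysis to inform every internal node in another $O(\log n)$ rounds w.h.p., union-bounded over the $n$ vertices. Once any leaf is informed, the induced clique of $l = \Theta(n)$ leaves is covered by \push in $O(\log n)$ rounds w.h.p.\ by the classical $K_n$ bound; the parent-edge at each leaf only rescales clique-transmission probabilities by $(l-1)/l = 1-o(1)$, absorbed in constants. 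A leaf source would be handled by first running the clique phase and then invoking the internal-source argument. For part~(b), I would pick any internal node $v \neq s$. Since all agents start stationary, the probability that a fixed agent sits at $v$ in a given round equals $\pi(v) = \Theta(1/n^2)$, so a union bound over the $|A| = \Theta(n)$ agents gives probability $O(1/n)$ that some agent visits $v$ in a round. Because \visit cannot complete before $v$ is visited, a geometric lower bound yields $\E{\Tvisit} = \Omega(n)$.

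For part~(c), let the source $s$ be a leaf. I would argue in two phases. In the seeding phase, $\pi(s) = \Theta(1/n)$ together with $|A| = \Theta(n)$ makes the expected number of agents at $s$ per round a positive constant, so a Chernoff bound shows that some agent visits $s$ within $O(\log n)$ rounds w.h.p.; call that first informed agent $g^\ast$. In the dispersal phase I would use the leaf-clique as a densely populated meeting hall: a direct calculation on leaf degrees shows that any two agents both on the clique are co-located after one lazy step with probability $\Theta(1/l) = \Theta(1/n)$, and a stationarity argument like the one in~(b), applied in reverse, gives that $(1-o(1))|A|$ agents lie on the clique in any fixed round w.h.p. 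Consequently, given $I$ informed and $U = |A|-I$ uninformed agents, the expected number of new infections next round is $\Omega(\min\{I,U\})$, and a doubling argument (informed count multiplies by $\Theta(1)$ while small, then uninformed count divides by $\Theta(1)$ while small) would deliver $\Tmeet = O(\log n)$ w.h.p.

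The hard part will be the dispersal phase of~(c): the agents' positions and informed-states are intricately correlated across rounds. I would control this by conditioning on a high-probability ``clique-mass'' event (ensuring $(1-o(1))|A|$ agents in the clique each round), then applying Chernoff or martingale concentration to the one-step growth of the informed set at each doubling step; the final $O(\log n)$ uninformed agents would require a separate coupon-collector-style estimate using that each of them meets some informed agent in any given round with constant probability.
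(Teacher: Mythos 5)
Your overall strategy coincides with the paper's for all three parts: (a) is the same decomposition into a tree phase plus a clique phase with the robustness-to-failures observation for the parent edges; (b) is the same first-moment argument at an internal node (the paper uses the root); and (c) is the same two-phase plan of conditioning on a high-probability event that keeps all but $O(\log n)$ agents in the leaf clique and then running a multiplicative-growth argument followed by an endgame. For (b), note only that the waiting time until the first visit is not literally geometric (the per-round visit events are not independent), but the bound follows by summing the expected number of visits over the first $cn$ rounds and applying Markov, which is exactly what the paper does.

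The genuine gap is at the start of the dispersal phase of (c). Your doubling argument asserts that the informed count ``multiplies by $\Theta(1)$ while small'' with Chernoff or martingale concentration at each step, but when $|I_t|=O(1)$ this fails: immediately after the seeding phase there is a single informed agent $g^\ast$, and in any one round it meets no other agent with constant probability (the number of agents sharing its leaf is roughly $\Bin(|A|,1/l)$ with constant mean), so no per-round growth statement can hold w.h.p., and a union bound over the $\Theta(\log n)$ doubling steps collapses; moreover $g^\ast$ may wander onto internal nodes where it meets nobody. The paper devotes a dedicated bootstrap (Claim~\ref{clm:logfirst}) to this: it tracks $g^\ast$ over a window of $r'=\Theta(\log n)$ rounds, conditions on $g^\ast$ remaining in the clique throughout (probability $1-O(\log n/n)$), applies Chernoff to the number of \emph{distinct} agents that $g^\ast$ meets across the whole window (these events are essentially independent across the other agents, unlike single-round growth), and then amplifies the resulting $1-O(\log n/n)$ guarantee by repeating over a constant number of windows; only once $\Theta(\log n)$ agents are informed does per-round multiplicative growth become usable, and even then the paper settles for success probability $1/2$ per round, amplified over $O(\log n)$ rounds. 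You would need to add such a bootstrap. A smaller omission in (a): for a leaf source, between ``all leaves informed'' and ``invoke the internal-source argument'' you need the step in which some leaf pushes to its parent; each leaf does so with probability only $1/l$, but with all $l$ leaves informed some leaf succeeds in a given round with probability $1-(1-1/l)^l=\Omega(1)$, so this costs $O(\log n)$ rounds w.h.p.
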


\begin{proof}
\textbf{(a):}
First, we bound the number of rounds until some internal node is informed.
This is zero if $s$ is an internal node, so suppose $s$ is a leaf.
The number of rounds before all leaves are informed is $O(\log n)$ w.h.p.
This follows from the well-known logarithmic bound on the \push broadcast time on a clique,
and the fact that random failures of transmission with probability $1 / l$ (corresponding to the case when a leaf samples its parent)
do not change the broadcast time asymptotically~\cite{elsasser2009runtime}.
Once all leaves are informed, it takes at most $O(\log n)$ additional rounds, w.h.p., until the first internal node is informed, because
there are $l$ leaves and, in each round, each leaf samples its parent with probability $1 / l$.
Once some internal node becomes informed, then all internal nodes become informed after at most $O(\log n)$ rounds w.h.p.
This follows from the observation that the broadcast time of \push on $B_n$ starting from an internal node is dominated by the broadcast time on a balanced binary tree with $n$ vertices.
Since the binary tree has bounded degree and logarithmic diameter, the broadcast time of \push is $O(\log n)$ w.h.p.~\cite{Feige1990}.
Adding all these logarithmic bounds and applying a union bound proves~(a).
\medskip

\textbf{(b):}
%\hnote{A simpler argument for this case}
Since agents are initially distributed according to the stationary distribution, it follows that a given agent visits the root vertex with probability $2/(2|E|) \leq 8/n^2$ at any given round.
Therefore, the expected number of times agents visit the root during the first $n^2/(16|A|)$ rounds of \visit is at most $1/2$.
It follows that with probability at least $1/2$ no agent visits the root in any of the rounds $t$, $0\leq t < n^2/(16|A|) = \Theta(n)$.
From this it is immediate that the expected number of rounds before the first agent visits the root is at least $\Omega(n)$; this implies~(b).

\medskip

\textbf{(c):}
    Let $\EE(t)$ denote the event that at most $r = c\log n$ agents visit internal nodes at round $t$, where $c > 0$ is a large enough constant.
    We apply the following modification to \meet.
    \begin{quote}
        \textbf{Modification:}
        For any round $t \geq 0$, if $\EE(t)$ does not hold, then before round $t+1$ we move all agents that are at internal nodes  to leaf nodes. (It is not important to which leaves we move the agents.)
    \end{quote}
Since the random walks of the  $|A|=\Theta(n)$ agents start from the stationary distribution, the expected number of agents that visit internal nodes at any given round $t$ is $O(1)$.
Furthermore, since the random walks are independent, a Chernoff bound gives that event $\EE(t)$ holds w.h.p.\ (where the probability is controlled by the choice of $c$).
By a union bound, event $\bigcap_{0\leq t< \log ^2n} \EE(t)$ holds also w.h.p.
It follows that w.h.p.\ the modified process is identical to the original one in the first $\log^2 n$ rounds.
Next we analyze this modified process.

Let $t^\ast\geq 0$ be the first round when some agent visits source $s$, and let $g^\ast$ be an agent that visits $s$ in that round, and thus gets informed.
We have that $t^\ast = O(\log n )$ w.h.p., because by the modification above, there are $\Omega(n)$ agents on leaf nodes before each round, thus the probability at least one agent visits leaf $s$ in any given round is $\Omega(1)$.

For each $g\in A$, we denote by $t_g$ the round when $g$ gets informed.
In particular, $t_{g^\ast} = t^\ast$.
Also, let $I_t = \{g\colon t_g \leq t\}$ be the set of informed agents after round $t$.

Next we show that at least $2r$ agents are informed by some round $t^\ast + O(\log n)$.

\begin{claim}
    \label{clm:logfirst}
    W.h.p.,
    $\min\{k\colon |I_{t^\ast + k}| \geq 2r \} = O(\log n)$.
\end{claim}
\begin{proof}
Recall that $\alpha = |A| / n$ is a constant, and let
$
    r' = 5r / \alpha = \Theta(\log n)
    .
$
For any agent $g$, let $\EE_g$ be the event that $g$ visits only leaf vertices in rounds $t^\ast+1,\ldots, t^\ast + r'$.
Suppose that $g$ is at a leaf before round $t^\ast+1$.
Then
\[
    \Pr{\EE_g} = (1 - 1/l)^{r'} \geq 1 - r' / l.
\]
Also,
\[
    \Pr{t_g \leq t^\ast + r' \mid \EE_g,\EE_{g^\ast}}
    \geq
    1 - \left(1 - \frac{l-2}{(l-1)^2}\right)^{r'}
    \geq
    \frac{r'}{2(l - 1)},
\]
where $(l-2)/(l-1)^2$ is the probability that $g$ and $g^\ast$ visit the same leaf at a given round, assuming that they are at different leaves before the round, and that they both visit leaves at that round.
Using the fact that at least $|A| - r - 1$ agents $g\neq g^\ast$ are on leaves before round $t^\ast + 1$ (due to the modification above),
we obtain for the number of informed agents after round $t^\ast + r'$,
\[
    \Exp{|I_{t^\ast + r'}| \mid \EE_{g^\ast}}
    \geq
    1 + (|A| - r - 1) \cdot \left(1 - \frac{r'}{l}\right) \cdot \frac{r'}{2(l - 1)}  \geq 1 + 4r,
\]
where the extra $1$ accounts for $g^\ast$.
We can thus apply a Chernoff bound to obtain
\[
    \Pr{|I_{t^\ast + r'}| \geq 2r \mid \EE_{g^\ast}}
    \geq
    1 - 1/n,
\]
for $c$ large enough.
From that and
$\Pr{\EE_{g^\ast}} \geq 1 - r' / l = 1 - O(\log n /n)$, it follows
\begin{equation}
    \label{eq:itast}
    \Pr{|I_{t^\ast + r'}| \geq 2r} = 1 - O(\log n / n).
\end{equation}

We can amplify the above probability as follows.
Suppose that  $|I_{t^\ast + r'}| < 2r$.
Consider the first round $t' \geq t^\ast + r'$ such that $g^\ast$ is at a leaf vertex before round $t'+1$.
Then $t' = t^\ast + r' + O(\log n)$, w.h.p.
The reason is that from any internal vertex, an agent reaches a leaf after at most $O(\log n)$ rounds w.h.p., by the properties of a biased random walk on the line \cite[Section~14.2]{feller_introduction_2009}, as the probability of the agent moving closer to the root in a round is $1/3$, while the probability of moving closer to the leaf level is $2/3$.

We can now apply the same argument as in the proof of~\eqref{eq:itast}, using $t'$ in place of $t^\ast$, to obtain
$
    \Pr{|I_{t' + r'}| \geq 2r \mid |I_{t'}| < 2r} = 1 - O(\log n / n).
$
Repeating the argument a constant $i$ number of times, we obtain that
$
    \Pr{|I_{t' + r''}| \geq 2r} = 1 - O(\log n / n)^i,
$
for some $r'' = \Theta(\log n)$.
\end{proof}

Next we argue that once $2r$ agents have been informed, at least half of the agents (or $n/2$ if $|A|>n$) are informed after $O(\log n)$ additional rounds.

\begin{claim}
    \label{clm:nhalf}
    There is a constant $\epsilon >0$, such that
    if $2r \leq |I_t| \leq \min\{n,|A|\}/2$, then
    \[
        \Pr{|I_{t+1}| \geq (1+\epsilon)\cdot |I_t| \mid I_t}\geq 1/2.
    \]
\end{claim}
\begin{proof}
    Suppose that $|I_t| = k \in [2r, \min\{n,|A|\}/2 ]$.
    By the modification we have made, at least $k - r \geq k / 2$ informed agents are on leaf nodes before round $t+1$; let $B$ be the set of these agents.
    Let $L$ be the set of leaves visited by at least one informed agent in round $t+1$.
    By a Chernoff bound,
    \[
        \Pr{|L| \geq k/8} = 1- e^{-\Omega(k)},
    \]
    because for each agent $g$ among \emph{the first $k/2$} agents in $B$, the probability that  in round $t+1$, $g$ visits a leaf that no other agents among the first $k/2$ agents in $B$  visit in the round, is at least $1 - (k/2)/l \geq 1/2$.

    Given $|L|$,
    consider an agent $g$ which is at a leaf before round $t + 1$ and is not yet informed.
    The probability that $g$ visits a leaf in $L$ in round $t + 1$, and thus gets informed, is at least $|L| / l$.
    There are at least $|A| - r - k$ such agents, and therefore,
    the expected number of agents that get informed in round $t + 1$ is
    at least
    $(|A| - r - k) \cdot |L| / l \geq 16\epsilon |L|$ for a sufficiently small constant $\epsilon > 0$.
    Since the agents move independently, by a Chernoff bound we obtain
    \[
        \Pr{|I_{t+1}| \geq k + 16\epsilon |L| / 2 \mid |L| \geq k/8}
        \geq
        \Pr{|I_{t+1}| \geq (1 + \epsilon)k \mid |L| \geq k/8}
        =
        1 - e^{-\Omega(k)}.
    \]
%    where $\epsilon$ is chosen sufficiently small.
    The claim then follows by combining the two equations we have shown above.
%    , and using that $(1/2)\cdot (|A| - r - k)\cdot (k/8)/l \geq \epsilon \cdot k$, for a small enough constant $\epsilon > 0$.
\end{proof}

By applying Claim~\ref{clm:nhalf} repeatedly, for a logarithmic number of rounds, we obtain that if $|I_t| \geq 2r$, then w.h.p,
\[
    \min\{k\colon |I_{t+k}| \geq  \min\{n,|A|\}/2\} = O(\log n).
\]

Next we argue that once $\min\{n,|A|\}/2$ agents have been informed, the remaining agents are informed after $O(\log n)$ additional rounds.

\begin{claim}
    \label{clm:all}
    If $|I_t| \geq \min\{n,|A|\}/2$ and $t_g > t$, then $t_g = t + O(\log n)$ w.h.p.
\end{claim}
\begin{proof}
    We saw in the proof of Claim~\ref{clm:logfirst}, that if $g$ is on an internal node after round $t$, it  will reach a leaf after at most $O(\log n)$ rounds w.h.p.
    Suppose now that $g$ is at a leaf vertex before round $t'+1$, for some $t' \geq t$.
    As we saw earlier, the probability that $g$ visits leaves in all rounds $t'+1,\ldots, t' + r'$, where $r' = \log n$, is at least $1 - r' / l$.
    For a given round in which $g$ visits a leaf, let $q$ be the probability that no informed agent visits the same leaf.
    Since there are at least $\min\{n,|A|\}/2 - r$ informed agents at leaf vertices before each round,
%    if $g$ visits a leaf in a given round, the probability $q$ that no informed agent visits the same leaf in that round is
    \[
        q
        \leq
        \frac{1}{l - 1} + \left(1 - \frac{1}{l} \right)^{\min\{n,|A|\}/2 - r}
%        1/(l-1) + (1 - 1/l)^{\min\{n,|A|\}/2 - r}
        \leq
        \beta
        <1,
    \]
    for a constant $\beta$ that depends on $\alpha$.
    This bound follows from the observation that
    $q$ is maximized when all $\min\{n,|A|\}/2 - r$
    informed agents are on the same leaf before the round.
    It follows
    \[
        \Pr{t_g > t' + r'}
        \leq
        r' / l +
        q^{r'}
        =
        O(n^{-\gamma}),
    \]
    for some constant $\gamma > 0$.
    By repeating the argument a constant number of times we obtain the claim for an arbitrary high probability.
\end{proof}

Combining all the above results we complete the proof of (c).
\end{proof}

\subsection{Siamese Heavy Binary Trees}
\label{sec:twotrees}
We
consider now an example where both random walk based processes are slow, while rumor spreading is fast.
Let $D_n$ denote a graph obtained by taking two copies of the graph $B_n$ described above and merging the two roots into a single root vertex; see Fig.~\ref{fig:examples}(d).

\begin{lemma}
    \label{lem:twotrees}
    For the graph $D_n$ described above and any source vertex $s$,
%    $s\in V(D_n)$,
    (a)~$\Tpush = O(\log n)$, w.h.p.,
    (b)~$\Exp{\Tvisit} = \Omega(n)$,
    and
    (c)~$\Exp{\Tmeet} = \Omega(n)$.
\end{lemma}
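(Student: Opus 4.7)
The plan is to handle part~(a) by adapting the argument of Lemma~\ref{lem:tree}(a) to each of the two copies of $B_n$ inside $D_n$, and to derive parts~(b) and~(c) from a single stationary-probability observation about the root vertex. For~(a), the only structural difference between a copy of $B_n$ sitting inside $D_n$ and $B_n$ itself is that the merged root has degree $4$ rather than $2$, which merely rescales the constants in the proof of Lemma~\ref{lem:tree}(a): the logarithmic broadcast time inside the leaf clique (with $1/l$ failure per round) is unchanged, the probability that a given leaf samples its parent in a round is still $1/l$, and the \push bound of~\cite{Feige1990} for bounded-degree logarithmic-diameter graphs applies to each internal binary tree. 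Starting from any source $s$, \push then informs the entire copy of $B_n$ containing $s$ in $O(\log n)$ rounds w.h.p., including the root; the same argument reapplied to the other copy, seeded at the now-informed root, informs everything else in a further $O(\log n)$ rounds w.h.p., and a union bound concludes.

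For~(b) and~(c) the key observation is that each copy of $B_n$ contributes a leaf clique on $\Theta(n)$ vertices and hence $\Theta(n^2)$ edges, so $|E(D_n)| = \Theta(n^2)$, while the merged root $r$ has degree only $4$. Thus $\pi(r) = \deg(r)/(2|E|) = \Theta(1/n^2)$, and since each of the $|A| = \Theta(n)$ agents is distributed according to $\pi$ at every round, the expected number of agent-visits to $r$ during $T$ rounds is $|A| \cdot T \cdot \pi(r) = \Theta(T/n)$. For $T = cn$ with a sufficiently small constant $c > 0$, Markov's inequality gives that with probability at least $1/2$ no agent visits $r$ in the first $cn$ rounds. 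Since $r$ is the unique cut vertex separating the two copies of $B_n$, no information can cross between the sides in \visit during those rounds: if $s$ lies on one side the other side is entirely uninformed, and if $s = r$ not even a single agent has been informed. In either case $\Tvisit \ge cn$ with probability at least $1/2$, yielding $\Exp{\Tvisit} = \Omega(n)$.

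Part~(c) uses the same cut argument. In \meet, every information transfer happens at a common vertex, so any transfer from an informed agent on one side to an uninformed agent on the other must be preceded by at least one of them traversing $r$; the $\Omega(n)$-round barrier on root-visits then forces a linear delay before any agent on the far side can be informed. By stationarity, a constant fraction of the agents start on each side, so at time $cn$ at least one agent on the side opposite $s$ remains uninformed with probability at least $1/2$, giving $\Exp{\Tmeet} = \Omega(n)$.

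I do not expect any serious technical obstacle: parts~(b) and~(c) reduce to the cut-vertex structure of $D_n$ combined with a one-line stationary-probability bound, while~(a) is a careful transcription of Lemma~\ref{lem:tree}(a) to each half. The only step requiring modest care is justifying that the per-round, per-agent root-visit events aggregate cleanly, which follows by linearity of expectation because the walks start from $\pi$ and $\pi$ is invariant; the within-agent temporal correlations then do not affect the Markov bound on total visits.
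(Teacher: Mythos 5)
Your proposal is correct and follows essentially the same route as the paper, which simply invokes the arguments of Lemma~\ref{lem:tree} for parts (a) and (b) and, for (c), combines the fact that w.h.p.\ agents start on both sides with the $\Omega(n)$ expected time before any agent reaches the (cut-vertex) root. Your write-up just makes explicit the stationary-probability computation $\pi(r)=\Theta(1/n^2)$ and the two-phase spread for \push that the paper leaves implicit.
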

\begin{proof}
Parts (a) and (b) follow from the same arguments used to prove the corresponding bounds in Lemma~\ref{lem:tree}.
For~(c), we observe that w.h.p.\ at least one agent will start from each of the two trees.
Then, for the information to pass from agents on the one tree to agents on the other, some agent must reach the root, which requires $\Omega(n)$ rounds in expectation, as we showed in the proof of  Lemma~\ref{lem:tree}(b).
\end{proof}

\subsection{Cycle of Stars of Cliques}
\label{sec:cycle-stars-cliques}

Finally,
we present a graph on which \visit is faster than \meet, by a logarithmic factor.
We note that this graph is (almost) regular, unlike the highly non-regular graphs we considered in the previous sections.
We leave open the question whether there are graphs on which \visit is asymptotically faster than \meet by a polynomial factor.

\begin{lemma}
    \label{lem:cycle-stars-cliques}
    There is a graph $G=(V,E)$ with $|V|=\Theta(n)$ such that for any source vertex $s\in V$,
    (a)~$\Exp{\Tvisit} = O(n^{2/3})$,
    and
    (b)~$\Exp{\Tmeet} = \Omega(n^{2/3} \log n)$.
\end{lemma}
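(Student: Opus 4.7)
Plan: Both parts exploit the structure of $G$: it consists of $\Theta(n^{1/3})$ ``clusters'' arranged in a cycle, where each cluster is a star whose $\Theta(n^{1/3})$ neighbors are the gateways of $\Theta(n^{1/3})$-sized cliques; all degrees are $\Theta(n^{1/3})$, so by stationarity each vertex hosts $\Theta(1)$ agents per round in expectation.

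For part (a), I would establish $\Exp{\Tvisit} = O(n^{2/3})$ by a phase argument. First, the source's cluster becomes fully informed in $O(n^{1/3})$ rounds: the source's clique is informed in $O(\log n)$ rounds by standard clique-rumor-spread, then the cluster's star center is reached in $O(n^{1/3})$ rounds (the informed gateway sees $\Theta(1)$ agents per round, each moving to the center with probability $\Theta(1/n^{1/3})$). Second, the information hops between adjacent cluster centers around the cycle, one hop per $O(n^{1/3})$ rounds: once $C_i$'s center is informed, $\Theta(1)$ agents visit it each round and become informed by \visit's node memory, and each moves to $C_{i+1}$'s center with probability $\Theta(1/n^{1/3})$. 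Summing over $\Theta(n^{1/3})$ hops: $O(n^{2/3})$. Third, each cluster fills fully in $O(n^{1/3})$ additional rounds, symmetric to the first phase: once a center is informed it provides a constant stream of informed agents to all clique gateways, each of which then fills its clique in $O(\log n)$ rounds.

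For part (b), I would establish $\Exp{\Tmeet} = \Omega(n^{2/3} \log n)$ by showing each cluster hop in \meet requires $\Omega(n^{1/3} \log n)$ rounds in expectation. The crucial difference from \visit is the absence of node memory: informing a new agent requires the simultaneous presence of an informed and an uninformed agent at the same vertex. The key bottleneck is that the $\Theta(n^{1/3})$ cliques of a cluster are only connected through the star center, so informed agents in one clique can seed another clique only by journeying through the center --- a trip of $\Theta(n^{1/3})$ rounds. Tracking $k_i(t)$, the number of cliques in $C_i$ containing informed agents, and $m_i(t) = \Theta(k_i(t) \cdot n^{1/3})$, one shows $k_i$ doubles only every $\Theta(n^{1/3})$ rounds. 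Since the expected outgoing crossing rate from $C_i$ to $C_{i+1}$ is $\Theta(m_i(t)/n) = \Theta(k_i(t)/n^{2/3})$, the integrated expected crossings up to time $T$ is $\Theta((e^{T/n^{1/3}} - 1)/n^{1/3})$, reaching $\Theta(1)$ only at $T = \Theta(n^{1/3} \log n)$. Summing over $\Theta(n^{1/3})$ hops around the cycle yields the claim.

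The main technical obstacle is rigorously establishing the bottleneck claim --- that $m_i(t)$ really does grow only at the slow rate dictated by inter-clique diffusion through the star center, rather than by some faster route I have overlooked --- and converting the integrated-crossing-rate calculation into a probabilistic lower bound on the hop time via a first-moment (Markov) argument on the expected number of early crossings. An additional wrinkle is that the $\Theta(n^{1/3})$ cycle hops are sequentially dependent, so the per-hop bound must be established conditionally on the state when each hop begins, requiring an inductive setup on cluster index.
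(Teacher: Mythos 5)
Your proposal follows essentially the same route as the paper's (sketch) proof: for (a) the same phase decomposition (clique, then center, then $\Theta(n^{1/3})$ ring hops of $O(n^{1/3})$ rounds each, then local fill-in), and for (b) the same bottleneck — since $c_i$ stores no information in \meet, a ring crossing needs $\Omega(n^{1/3})$ informed visits to $c_i$, whose arrival rate grows only with the number of seeded cliques, giving $\Omega(n^{1/3}\log n)$ per hop; your exponential-growth integral is just the continuous form of the paper's harmonic sum. One minor slip: the final fill-in of a cluster's $n^{1/3}$ star leaves is a coupon-collector step costing $O(n^{1/3}\log n)$ rather than your claimed $O(n^{1/3})$, but this is still dominated by $O(n^{2/3})$ and does not affect the conclusion.
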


\begin{proof}[Proof Sketch]
An example of a graph $G$ with the above properties is a \emph{cycle-of-stars-of-cliques}, obtained as follows:
Consider a cycle graph of length $n^{1/3}$, consisting of vertices $c_i$, $i\in\{1,\ldots,n^{1/3}\}$.
For each $i$ consider a new set of $n^{1/3}$ vertices $l_{i,j}$, $j\in\{1,\ldots,n^{1/3}\}$, and connect $c_i$ to each $l_{i,j}$.
Finally, for each $j$ consider a new set of $n^{1/3}$ vertices $q_{i,j,k}$, $k\in\{1,\ldots,n^{1/3}\}$, add an edge between each pair $q_{i,j,k},q_{i,j,k'}$, and also between $l_{i,j}$ and all $q_{i,j,k}$.
See Fig.~\ref{fig:examples}(e) for an illustration of this graph.
We denote by $Q_{i,j}$ the $(n^{1/3}+1)$-clique induced by the vertex set $\{l_{i,j}\}\cup \{q_{i,j,1}\ldots q_{i,j,n^{1/3}}\}$.

The core-idea is that since vertices $c_i$ are not informed in $\meet$, the information advances from $c_i$ to its neighboring ring vertices $c_{i-1}$ and $c_{i+1}$ slower than in $\visit$.
Below we give a sketch of the analysis.
To make it rigorous, one needs to use techniques similar to those in the other proofs of the paper, namely, bounding above and below the number of agents at subgraphs of $G$.
% in which can be made rigorous using similar techniques as in the other proofs of the paper.
The number of rounds we refer to below are all in expectation.

\textbf{(a):}
%For \visit, we consider three phases of the process: first, until the first ring vertex $c_i$ is informed; second, until all ring vertices are informed; and third, until all vertices of the cliques $Q_{k,j}$ are informed, which means the process is complete.
%Therefore, we can assume that the source vertex is in some clique $Q_{i,j}$.
Suppose that the source vertex $s$ is in clique $Q_{i,j}$.
Then it takes $O(\log n)$ rounds until all vertices of the clique are informed.
After that, vertex $c_i$ gets informed in $O(n^{1/3})$ additional rounds, which is the average time it takes for the first agent to cross the edge from $l_{i,j}$ to $c_i$, since a constant number of agents visit each vertex on average.
From $c_{i}$, the information passes to $c_{i-1}$ and $c_{i+1}$ in $O(n^{1/3})$ rounds after $c_i$ is informed.
Thus, it takes $O(n^{2/3})$ rounds before all ring nodes $c_{i'}$ are informed.
Once $c_{i'}$ is informed, it takes $O(n^{1/3}\log n)$ rounds (by coupon collector's) until all cliques $Q_{i',j'}$ are informed.
It follows that the total broadcast time is $O(n^{2/3})$.\footnote{Alternatively, one can prove the statement assuming \push instead of \visit, and then apply Theorem~\ref{thm:intro}, since graph $G$ is (almost) regular.}

%In another $O(n^{1/3})$ rounds, the information passes from $c_i$ to its ring neighbours, $c_{i-1}$ and $c_{i+1}$,
%After that, it takes another $\Theta(n^{1/3})$ time for the information to pass from  $c_i$ to its ring neighbors $c_{i-1}$ and $c_{i+1}$, since on average a constant number of agents visit $c_i$ per round (by the regularity of the graph).
%So it takes $\Theta(n^{2/3})$ rounds for all ring nodes to get informed, while the time to inform all $Q_{i,j}$ once $c_i$ has been informed is much lower, namely $O(n^{1/3}\log n)$.
%Thus the total broadcast time is $\Theta(n^{2/3})$.

\textbf{(b):}
Suppose again that the source $s$ is in clique $Q_{i,j}$.
We first lower bound the number of rounds until at least $\Omega(n^{1/3})$ informed agents visit $c_i$, which is the average number of agents until one of them moves to either $c_{i-1}$ or $c_{i+1}$.
It takes $\Omega(n^{1/3})$ rounds until the first  informed agent visits $c_i$.
This agent will move to another clique $Q_{i,j'}$ with probability $1 - O(n^{-1/3})$.
After that, the next informed agent visiting $c_i$ can come from $Q_{i,j}$ or $Q_{i,j'}$, and, therefore, the expected number of rounds until such a visit is halved.
In general once $\ell$ of the $n^{1/3}$ cliques $Q_{i,*}$ have received an informed agent, $c_i$ is visited by informed agents at the rate of once every $n^{1/3}/\ell$ rounds.
It follows that it takes $\Theta(n^{1/3}\cdot \log n)$ rounds before $c_i$ has been visited by $\Omega(n^{1/3})$ informed agents, and therefore, at least that many rounds are necessary until an informed agent moves to either $c_{i-1}$ or $c_{i+1}$.
Therefore, it takes $\Omega(n^{2/3}\cdot\log n)$ rounds before all nodes on the ring are informed.
\end{proof}

\section{Bounding \texorpdfstring{$\Tpush$}{Tpush} by \texorpdfstring{$\Tvisit$}{Tvisitx} on Regular Graphs}
\label{sec:push<msg}

In this section, we prove the following theorem, which upper bounds the broadcast time of \push in a regular graph by the broadcast time of \visit.

\begin{theorem}
    \label{thm:Tpush<Tmsg}
%    Let $G = (V, E)$ be any $d$-regular graph with $|V|=n$ and $d = \Omega(\log n)$.
    For any constants $\varepsilon,\alpha,\lambda > 0$, there is a constant $c>0$, such that
    for any  $d$-regular graph $G = (V, E)$ with $|V| = n$ and $d \geq \varepsilon\log n$, and for any source vertex $s\in V$, the broadcast times of \push and \visit, with $|A| \leq \alpha  n$ agents, satisfy
    \[
        \Pr{\Tpush \leq ck} \geq \Pr{\Tvisit \leq k} - n^{-\lambda},
    \]
    for any $k\geq 0$.
\end{theorem}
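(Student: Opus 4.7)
The plan is to set up a coupling between \push and \visit via shared per-vertex randomness, then trace the information flow in \visit backwards to $s$ to obtain, under the coupling, a bound on \push's broadcast time. Concretely, for each vertex $u$ let $\sigma_u=(\sigma_u(1),\sigma_u(2),\ldots)$ be an infinite i.i.d.\ sequence of uniformly random neighbors of $u$; set the $j$-th push sample of $u$ (after $u$ becomes informed) and the destination of the $j$-th move made by an \emph{informed} agent leaving $u$ in \visit to both equal $\sigma_u(j)$, while uninformed-agent moves in \visit use fresh independent randomness. The marginal distributions of the two processes are unchanged. Now, if $v$ is first informed in \visit at round $\tau_v$ by an informed agent moving from some neighbor $u_{\ell-1}$, and this is the $j_\ell$-th informed move out of $u_{\ell-1}$, then by the coupling $u_{\ell-1}$'s $j_\ell$-th push sample equals $v$; iterating this back to $s$ yields an ``informing path'' $s=u_0,u_1,\ldots,u_\ell=v$ with informing times $0=t_0<t_1<\cdots<t_\ell=\tau_v$, and by induction along the path $\pi_v\le\sum_{i=1}^\ell j_i$, where $\pi_v$ is the round $v$ is informed in \push.

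The heart of the proof is to establish the path-congestion bound $\sum_{i=1}^\ell j_i=O(k)$, uniformly over all $v$, with failure probability at most $n^{-\lambda}$, conditional on $\Tvisit\le k$. The key structural fact is that once $u_{i-1}$ becomes informed every subsequent visiting agent is immediately informed as well, so $j_i$ equals the total number of agents visiting $u_{i-1}$ in the time window $[t_{i-1},t_i-1]$; since these windows partition $[0,\tau_v-1]\subseteq[0,k-1]$, the cost telescopes to the total ``congestion along the path''---the number of pairs $(a,r)$ with $r\in[0,\tau_v-1]$ such that agent $a$ sits on the unique path vertex $u^*(r)$ active at round $r$. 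By stationarity, each agent's expected contribution is at most $\tau_v/n\le k/n$, so the expected total congestion is at most $|A|\cdot k/n\le\alpha k$.

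To upgrade this expectation bound to high probability I would decompose the congestion into its $|A|$ independent per-agent contributions---each a sum of at most $k$ correlated indicators along a single random walk---and apply Chernoff/Bernstein concentration across these independent summands, followed by a union bound over the $n$ destinations $v$. The degree hypothesis $d\ge\varepsilon\log n$ is used both to control the random-walk variance of each per-agent contribution (at low degree a walk could revisit the same vertex repeatedly in a short burst, blowing up the variance) and to make the regime $k=o(\log n)$ essentially vacuous: by a volume-growth argument \visit cannot inform all of $V$ in fewer than $\Omega(\log n/\log\log n)$ rounds, so $\Pr{\Tvisit\le k}\le n^{-\lambda}$ for sufficiently small $k$ and the conclusion holds trivially.

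The main obstacle is the concentration step. The congestion random variable depends both on the random informing path in \visit and on the whole collection of agent trajectories, and a naive union bound over fixed graph paths is infeasible since there can be $d^{\Omega(k)}$ of them. Expressing the congestion in the abstract, coupling-level double-sum form above is what makes the problem tractable: it reduces concentration to sums of independent per-agent quantities, at the cost of having to bound the variance of each per-agent contribution, which is exactly where the high-degree assumption is essential. Bookkeeping issues---such as handling agents present at $u_{i-1}$ exactly at round $t_{i-1}$, correctly attributing the informing agent itself to the right window, and treating edge cases near $t_0$ and $t_\ell$---should be routine.
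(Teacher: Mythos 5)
Your coupling and the reduction of $\tau_v$ to the path congestion $\sum_i j_i$ match the paper's first steps (its C-counters and its Lemma~\ref{lem:tauleqC} are exactly your telescoping bound). The genuine gap is in the concentration step, which is the heart of the matter. The informing path $u^*(\cdot)$ is a function of the entire collection of agent trajectories --- indeed of their \emph{future} behaviour, since which chain of informing events eventually reaches $v$ first is not determined at round $r$ --- so the per-agent contributions $\sum_r \mathbf{1}[x_a(r)=u^*(r)]$ are neither independent across agents nor independent of the walk of agent $a$ itself. Your expectation bound $|A|k/n$ already treats the target $u^*(r)$ as if it were deterministic and independent of $x_a$, and the proposed ``Chernoff/Bernstein across independent per-agent summands'' has no independent summands to act on; a bounded-differences workaround also fails, because changing one agent's trajectory can reroute the entire informing path and change the congestion by $\Theta(k)$. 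You correctly identify that a union bound over the $d^{\Omega(k)}$ graph paths is infeasible, but the per-agent decomposition you propose instead does not close the loop.

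The paper's resolution is precisely a union bound, but over a compressed index set. It (i) truncates the process (\tvisit) so that every vertex's neighbourhood contains $O(d)$ agents in every round, which makes the number of visits to the next path vertex stochastically dominated by $\Bin(O(d),1/d)+1$ \emph{independently of the past}; (ii) considers all ``canonical walks'' (walks that at each step either stay put or follow an agent), so that the realized informing path is always one of them; and (iii) encodes these walks by ``concise descriptors'' whose number is only $2^{O(k)}$ and which suffice to describe all walks w.h.p., so that the single-walk congestion tail $2^{-\Omega(\beta k)}$ beats the union bound. Ingredients (i) and (iii) are exactly what your sketch is missing, and I do not see how to obtain the theorem without them or a substitute of comparable strength. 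A secondary issue: your fallback for small $k$ yields only $\Tvisit=\Omega(\log n/\log\log n)$, which does not cover the whole range $k=o(\log n)$ in which a Chernoff bound with mean $O(k)$ cannot reach failure probability $n^{-\lambda}$; the paper proves the stronger $\Tvisit=\Omega(\log n)$ w.h.p.\ (by exhibiting vertices visited by no agent for that long) precisely to dispose of this regime.
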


From Theorem~\ref{thm:Tpush<Tmsg}, it is immediate that if $\Tvisit \leq T$ w.h.p., then $\Tpush = O(T)$ w.h.p.
Moreover, using Theorem~\ref{thm:Tpush<Tmsg} and the known $O(n\log n)$ upper bound on $\Tpush$ which holds w.h.p.~\cite{Feige1990}, one can easily obtain that $\Exp{\Tpush} = O(\Exp{\Tvisit})$.

%\hnote{Suggestion: bring the statement theorem \ref{thm:visitx-lower-bound} on the lower bound of \meet here. Then, before the proof overview, have a corollary, which removes the additive $\log n$.}
%\fnote{The statement is very hard to parse. Can't we say  $ \Pr{\Tpush \leq c(\Tvisit+\log n)} \geq 1 - n^{-\lambda}$?}
%\hnote{That's wrong, you need both $T_{push}$ and $\Tvisit$ to be well concentrated around their means for that. But I agree about the hardness to parse, hence, I was suggesting to write $\Tpush(1/n) = O(\Tvisit(1/n))$ instead, where $T_{process}(p)$ is the number of rounds until the graph is informed w.p. at least $1-p$. This is equivalent to what George has written above.}
%\fnote{there must be an easier way of stating this. can't we add the non-concentration error to the other error? }
%\fnote{Can't we say there is a coupling?}

\paragraph{Proof Overview of Theorem~\ref{thm:Tpush<Tmsg}.}

The proof uses the following coupling of processes \push and \visit:
%The coupling is as follows:
For each vertex $u$, let $\langle\pi_u(1),\pi_u(2),\ldots\rangle$ be the sequence of neighbors that $u$ samples in \push after getting informed.
Similarly, for \visit, consider all moves of informed agents from $u$ to its neighbor vertices in chronological order, and let $\langle p_u(1),p_u(2),\ldots\rangle$ be the destination vertices in those moves (we order moves in the same round by, say, agent ID).
We couple the two processes by setting $\pi_u(i)=p_u(i)$, for all $u,i$.

The intuition for this coupling is that in \visit, at most a constant number of agents \emph{in expectation} visits each vertex $u$ in a round (since the graph is regular and $|A|=O(n)$), and thus the same number of agents leaves $u$ per round in expectation.
The coupling ensures that for each informed agent that moves from $u$ to a neighbor $v$, vertex $u$ samples the same neighbor $v$ in \push.
Thus, if we had a constant upper bound $c$ on the \emph{actual} number (rather than the expected number) of visits to each vertex on each round, then the coupling would immediately yield $\Tpush \leq c\cdot \Tvisit$ for the coupled processes.
In reality, however, a super-constant number of agents may visit a vertex in a round, and, moreover, the number of visits depends on the past history of the process.

An basic idea we use to tackle dependencies on the past history is to consider a tweaked version of \visit, called \tvisit.
The only difference between this process and \visit, is that it arbitrarily removes some agents after each round to ensure that the neighborhood of any vertex contains at most $O(d)$ agents.
For $d = \Omega(\log n)$ and $|A| = O(n)$, we have that in the first $\mathrm{poly}(n)$ rounds the two processes are identical w.h.p. Therefore, we can consider \tvisit in our proofs.
The benefit we get is that since the neighborhood of any vertex $u$ contains $O(d)$ agents in round $t$, at round $t+1$ the number of agents that visit $u$ will be bounded by the binomial distribution $\Bin(\Theta(d), 1/d)$, \emph{independently of the past}.

To prove the theorem is suffices to show that under our coupling,
with probability at least $1 - n^{-\lambda}$, if $\Tvisit \leq k$ then $\Tpush \leq ck$.
%we have w.h.p. (precisely, with probability at least $1-n^{-\lambda}$) that if $\Tvisit \leq k$ then $\Tpush \leq ck$.
Further, we will assume that $k$ is at least $\Omega(\log n)$;
for $k = O(\log n)$ the theorem is obtained by showing that $\Tvisit = \Omega(\log n)$ w.h.p.
%and for $k = \omega(n^2\log n)$ it follows by using the known result that $\Tpush  =O(n\log n)$ w.h.p.~\cite{Feige1990}.

To show that w.h.p.\ $\Tvisit \leq k$ implies $\Tpush \leq ck$, we consider all possible paths of length $k$ through which information travels in \visit, and for each path we count the total number of (non-distinct) agents encountered along this path, called the congestion of the path.
Formally, we use the notion of a \emph{canonical walk} $\theta$, which is represented by a sequence of vertices $\theta = (\theta_0,\theta_1,\ldots,\theta_k)$ starting from $\theta_0 = s$:
In each round $1\leq t\leq k$, the walk either stays put and $\theta_t = \theta_{t-1}$, or it follows one of the agents $g$ that leave $\theta_{t-1}$ in round $t$, and, in that case, $\theta_{t}$ is the new vertex that $g$ moves to.
For any round $t$, we count the agents that are in $\theta_t$.
The sum of these counts, for $0 \leq t < k$ is the \emph{congestion} $Q(\theta)$ of the walk $\theta$.

The congestion of a canonical walk is used to bound the time needed for information to travel along the \emph{same} path in the coupled \push process.
Intuitively, larger congestion implies longer travel time for \push, for the following reason.
Suppose there are $m$ agents in $u$ at some round after it is informed by \visit.
The coupled \push process, using the same random decisions for the choice of neighbors as \visit, will take $m$ rounds to ``go through'' these $m$ agents.

To relate the congestion of canonical walks with the time it takes for information to spread in \push, we introduce \emph{C-counters}:
%
%For each vertex $u$ we maintain a counter $C_u$, which keeps track of the number of agents that visit $u$ since the round $t_u$ when $u$ becomes informed in \visit.
%The counter is initialized in round $t_u$ with the value of the C-counter of the neighbor from which the first informed agent arrived to $u$.
%
For each vertex $u$, we maintain a counter $C_u$.
The counter is initialized in the round $t_u$ in which $u$ becomes informed in \visit.
Its initial value is the value of the C-counter of the neighbor from which the first informed agent arrived to $u$.
In each subsequent round $t>t_u$, $C_u$ increases by the number of agents that visited $u$ in round $t-1$.
C-counters have the following two properties:
If $\tau_u$ is the round when $u$ gets informed in \push then $\tau_u \leq C_u(t_u)$; and for any $t\geq t_u$, there is a canonical walk $\theta$ of length $t$ such that $C_u(t) = Q(\theta)$.
Therefore, to show that w.h.p.\ $\Tvisit \leq k$ implies $\Tpush \leq ck$, it suffices to show that the maximum congestion of \emph{all} canonical walks of length $k$ is at most $ck$ w.h.p.

We can bound the congestion of a \emph{single} canonical walk of length $k$ using the property of \tvisit that the number of agents at a node is bounded by a binomial distribution with constant mean.
This results in the desired bound of $ck$ for a single walk with probability at least $1 - a^{-k}$, for some constant $a>1$.
We would like to take a union bound over all canonical walks, which would give the desired result.
For this to work, however, we should also bound the total number of canonical walks of length $k$ by at most $a^{k}/n^\lambda$.

We bound the number of canonical walks of length $k$ by introducing a set of \emph{descriptors} for these walks.
A descriptor is represented by a matrix, which, together with a given execution of \visit, uniquely defines a canonical walk.
Additionally, the set of descriptors suffices to encode all canonical walks, and therefore, it is at least as large as the set of all walks.
Thus, we can use a bound on the number of descriptors that can be computed by a simple combinatorial argument involving the number of elements used in the matrix, and the values they can take.
A naive construction of descriptors, however, is too wasteful giving us a much larger bound than the $a^{k}/n^\lambda$ we need.
A key idea here is that the majority of the descriptors represent walks only in executions that happen with low probability.
So, we construct a set of \emph{concise descriptors} that can describe all canonical walks in a random execution \emph{w.h.p.}
We show that the size of the set of concise descriptors can be bounded by $a^{k}/n^\lambda$, as desired.
Next we give the details of the proof.

\subsection{Notation and Coupling Description}
\label{sec:push<msg-coupling}

For each vertex $u\in V$, we denote by $\tau_u$ the round when $u$ gets informed in \push.
For $i \geq 1$, let $\pi_u(i)$ be the $i$th vertex that $u$ samples, i.e., the vertex it samples in round $\tau_u + i$.
Note that $\tau_{\pi_u(i)} \leq \tau_u + i$.
In \visit, we denote by $t_u$ the round when vertex $u$ gets informed.
For any agent $g\in A$ and $t\geq 0$, we denote by $x_g(t)$, the vertex that $g$ visits in round $t$.
Thus, $\{x_g(t)\}_{t\geq 0}$ is a random walk on $G$.
Let $Z_u(t)$ be the set of all agents that visit $u$ in round $t$, i.e.,
%\fnote{call it $V_u(t)$?}
\[
    Z_u(t) = \{g\in A\colon x_g(t)=u\}.
\]
Thus, $Z_u(t)$ is also the set of agents that depart from $u$ in round $t+1$.
Consider all visits to $u$ in rounds $t\geq t_u$, in chronological order, ordering visits in the same round with respect to a predefined total order over agents.
For each $i \geq 1$, consider the agent $g$ that does the $i$th such visit, and let $p_u(i)$ be the vertex that $g$ visits next. %after visiting $u$.
Formally, let
$
    X_u =\{(t,g)\colon t\geq t_u, x_g(t) = u\},
$
and order its elements such that $(t,g)<(t',g')$ if $t<t'$, or $t=t'$ and $g<g'$.
If $(t,g)$ is the $i$th smallest element in $X_u$, then $p_u(i) = x_g(t+1)$.

\paragraph{Coupling.}
We couple processes \push and \visit by setting $\pi_u(i) = p_u(i)$.
Formally, let $\{w_u(i)\}_{u\in V,i\geq 1}$, be a collection of independent random variables, where $w_u(i)$ takes a uniformly random value from the set $\Gamma(u)$ of $u$'s neighbors.
Then, for every $u\in V$ and $i\geq 1$, we set
$
    \pi_u(i) = p_u(i) = w_u(i)
    .
$

\subsection{Upper Bound on Agents and Tweaked Visit-Exchange}
\label{sec:tweaked}

We will use the next simple bound on the number of agents that visit a given set $S$ of vertices in some round $t$ of \visit.
The proof is by a simple Chernoff bound, and relies on the assumption that agents execute independent walks starting from stationarity.

\begin{lemma}
    \label{lem:msgdensityub}
    For any $S\subseteq V$, $t\geq0$, and $\beta\geq 2e\cdot |A|/n$,
    \[
        \Pr{\sum_{v\in S}|Z_v(t)| \leq \beta\cdot |S|}
        \geq
        1-2^{-\beta\, |S|}.
    \]
\end{lemma}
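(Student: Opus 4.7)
The plan is to observe that the whole statement is essentially a one-line Chernoff bound once we exploit the fact that the agents start from stationarity and move independently. Concretely, on a $d$-regular graph the stationary distribution is uniform, so for every agent $g\in A$ and every round $t\ge 0$, the position $x_g(t)$ is uniformly distributed on $V$. Hence, letting $X_g=\indicator{x_g(t)\in S}$, the indicators $\{X_g\}_{g\in A}$ are independent Bernoulli random variables with success probability $|S|/n$, and
\[
    \sum_{v\in S}|Z_v(t)| \;=\; \sum_{g\in A} X_g, \qquad
    \mu \;:=\; \Exp{\sum_{g\in A} X_g} \;=\; |A|\cdot\frac{|S|}{n}.
\]

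First I would invoke the standard upper-tail Chernoff bound in the form $\Pr{\sum_g X_g \ge t} \le (e\mu/t)^t$, valid for any $t\ge\mu$. Plugging in $t=\beta\cdot|S|$ we get $e\mu/t = e|A|/(\beta n)$, and the hypothesis $\beta \ge 2e\,|A|/n$ ensures $e\mu/t \le 1/2$. Therefore
\[
    \Pr{\sum_{v\in S}|Z_v(t)| > \beta\cdot|S|}
    \;\le\; \left(\frac{e|A|}{\beta n}\right)^{\beta|S|}
    \;\le\; 2^{-\beta|S|},
\]
which is exactly the complement of the desired event.

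The only points worth highlighting are that (i) independence of the $X_g$ relies on the agents performing \emph{independent} random walks, which is part of the definition of \visit, and (ii) the identification of the marginal of $x_g(t)$ with the uniform distribution relies on the initial stationary placement together with $d$-regularity (on a regular graph the stationary distribution is uniform and is preserved by the Markov chain). There is no genuine obstacle here — the hypothesis $\beta \ge 2e|A|/n$ was calibrated precisely so that the Chernoff ratio $e\mu/(\beta|S|)$ drops below $1/2$, and the whole proof fits in a few lines.
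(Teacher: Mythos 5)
Your proof is correct and follows essentially the same route as the paper's: both identify $\sum_{v\in S}|Z_v(t)|$ as a sum of independent Bernoulli indicators with mean $|A|\cdot|S|/n$ (using stationarity plus regularity) and then apply a standard upper-tail Chernoff bound, with the hypothesis $\beta\geq 2e\cdot|A|/n$ supplying exactly the slack needed to get the $2^{-\beta|S|}$ tail. The only cosmetic difference is that you use the $(e\mu/t)^t$ form of the bound while the paper invokes the $\Pr{X\geq\beta'\mu}\leq 2^{-\beta'\mu}$ form; these are interchangeable here.
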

\begin{proof}%[Proof of Lemma~\ref{lem:msgdensityub}]
Since each random walk starts from stationarity, and $G$ is a regular graph, it follows that for any agent $g\in A$, $\Pr{x_g(t)\in S} = |S|/n$.
Thus, the expected number of agents that visit $S$ in round $t$ is
$
|A|\cdot |S|/n \leq \beta\cdot |S|/(2e).
$
Then, by the independence of the random walks, we can use a standard Chernoff bound to show that the number of agents that visit $S$ at $t$ is at most $\beta\cdot |S|$ with probability at least $1-2^{-\beta\cdot |S|}$.
\end{proof}

We remark that Lemma~\ref{lem:msgdensityub} holds also in the case where $|A| = n$ and exactly one walk starts from each vertex.
This implies that Theorem~\ref{thm:Tpush<Tmsg} holds in the above case as well, because the rest of the proof does not require any assumptions about the initial distribution of agents.

%{A simple follow-up argument also proves the statement in the case when exactly $n$ walks start from unique vertices. This is enough to prove our results for regular graphs starting from that configuration.}

%\paragraph{Tweaked Visit-Exchange Process.}

In parts of the analysis, we will use a ``tweaked'' variant of \visit, called \tvisit, defined as follows.
%Recall that $|A|=O(n)$ and
Let
\begin{equation}
    \label{eq:a}
    \gamma \geq 2e\cdot |A|/n
\end{equation}
be a (sufficiently large) constant to be specified later.
If in some round $t\geq 0$, there is a vertex $u\in V$ for which the following condition does \emph{not} hold:
\begin{equation}
    \label{eq:alphaCondition}
    \sum_{v \in \Gamma(u)} |Z_v(t)| \leq \gamma\cdot d,
\end{equation}
then before round $t+1$, we remove a minimal set of agents from the graph in such a way that the above condition holds for all vertices $u$, when counting just the remaining agents.
%\hnote{
%\tvisit proceeds like \visit with a modification which ensures that for all $u \in V$
%\[
%\sum_{v \in \Gamma(u)} |Z_v(t)| \leq \gamma\cdot d.
%\]
%Before executing round $t+1$ of \tvisit, we remove a minimal (possibly empty) set of agents from the system in such a way that the above condition holds for all vertices $u$, when counting just the remaining agents.}

It follows from Lemma~\ref{lem:msgdensityub} that
if constant $\gamma$ is large enough, and $d = \Omega(\log n)$,
then w.h.p.\ the modified process is identical to the original in the first polynomial number of rounds.

\begin{lemma}
    \label{lem:tweacked}
    The probability that Eq.\eqref{eq:alphaCondition} holds simultaneously  for all $u\in V$ and $0 \leq t < k$ is at least $1-kn\cdot 2^{-\gamma d}$.
\end{lemma}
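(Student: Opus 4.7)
The plan is to obtain the bound by a direct application of Lemma~\ref{lem:msgdensityub} combined with a union bound over vertices and rounds. The key observation is that Eq.~\eqref{eq:alphaCondition} for a fixed vertex $u$ and round $t$ is exactly the event whose probability is lower-bounded by Lemma~\ref{lem:msgdensityub}, when the set $S$ is instantiated as the neighborhood $\Gamma(u)$ and $\beta$ as the constant $\gamma$.

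First, I would fix an arbitrary vertex $u \in V$ and round $0 \leq t < k$, and set $S := \Gamma(u)$, so $|S| = d$ since $G$ is $d$-regular. By the assumption~\eqref{eq:a}, we have $\gamma \geq 2e \cdot |A|/n$, so $\beta = \gamma$ is a valid choice in Lemma~\ref{lem:msgdensityub}. Applying that lemma gives
\[
\Pr{\sum_{v \in \Gamma(u)} |Z_v(t)| \leq \gamma \cdot d} \geq 1 - 2^{-\gamma d},
\]
which means that Eq.~\eqref{eq:alphaCondition} fails at this specific $(u,t)$ with probability at most $2^{-\gamma d}$.

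Next, I would take a union bound over the $n$ choices of $u$ and the $k$ choices of $t \in \{0,1,\ldots,k-1\}$. This yields that the probability that Eq.~\eqref{eq:alphaCondition} fails for \emph{some} pair $(u,t)$ in this range is at most $kn \cdot 2^{-\gamma d}$, and hence the probability it holds for all such pairs is at least $1 - kn \cdot 2^{-\gamma d}$, as claimed.

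There is essentially no obstacle here: the lemma is a straightforward corollary of Lemma~\ref{lem:msgdensityub}, the only subtlety being to verify that the assumption $\gamma \geq 2e|A|/n$ from~\eqref{eq:a} matches the hypothesis $\beta \geq 2e|A|/n$ of that lemma, which is immediate. The statement is written in this form precisely so that, when $d = \Omega(\log n)$ and $k$ is polynomial in $n$, choosing $\gamma$ to be a sufficiently large constant yields a high-probability guarantee that \visit and \tvisit coincide throughout the window of interest, enabling the subsequent arguments to work with the tweaked process.
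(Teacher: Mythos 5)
Your proof is correct and matches the paper's own argument exactly: both apply Lemma~\ref{lem:msgdensityub} with $S=\Gamma(u)$ (so $|S|=d$) and $\beta=\gamma$, then union bound over all $n$ vertices and $k$ rounds. No issues.
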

\begin{proof}
The claim follows by applying Lemma~\ref{lem:msgdensityub}, for each $0\leq t < k$ and each pair $u,S$, where $u\in V$ and $S = \Gamma(u)$, and then combining the results using a union bound.
\end{proof}

We use the same definitions and notations for both \visit and \tvisit.

\subsection{C-Counters}

Recall that $t_u$ is the round when vertex $u$ gets informed in \visit.
If $u\neq s$, this is the first round when some informed agent visits $u$.
We are interested in the neighbor $v$ of $u$ from which that agent arrived.
Note that $t_v < t_u$.
Note also that there may be more than one such neighbors $v$, if more than one informed agent visit $u$ at round $t_u$.
For each $u\in V$, let
\[
    S_u = \{v\in \Gamma(u)\colon t_v<t_u,\, Z_v(t_u-1)\cap Z_u(t_u) \neq\emptyset\},
\]
i.e., $S_u$ contains all neighbors $v$ of $u$ for which some informed agent moved from $v$ to $u$ in round $t_u$.
%Thus, $S_u$ contains those neighbors that ``informed'' $u$.
%Let $S_u$ be the set of all neighbors $v$ of $u$, such that some informed agent moved from $v$ to $u$ in round $t_u$ (so, $S_u$ contains the neighbors that `informed' $u$).
%Formally,
%\[
%    S_u = \{v\in \Gamma(u)\colon t_v<t_u,\, Z_u(t_u)\cap Z_v(t_u-1)\neq\emptyset \}.
%\]
%Recall that $Z_u(t)$ is the set of agents that visit $u$ in round $t$, and thus also the set of agents that depart form $u$ in round $t+1$.
Next, for each $t\geq 0$, we define the counter variable
\begin{equation}
    \label{eq:Cu}
    C_u(t)
    =
    \begin{cases}
    0, & \text{if $t<t_u$ or $t=t_u=0$}
    \\
    \min_{v\in S_u} C_v(t), & \text{if $t=t_u>0$}
    \\
    C_u(t-1) + |Z_u(t-1)|, & \text{if $t>t_u$}.
    \end{cases}
\end{equation}
That is, $C_u$ is initialized in round $t_u$ to the minimum counter value of the neighbors in $S_u$ (or to zero if $u=s$), and $C_u(t)- C_u(t_u)$ is the number of visits to $u$ from round $t_u$ until round $t-1$, or equivalently, the number of departures of agents from $u$ in rounds $t_u+1$ up to $t$.
%\fnote{maybe first give intuitive explanation then the formal definition}
%\hnote{The definition is not intuitive at all i.e. why are we doing this. Also, at least Nicolas and I, kept getting confused because $C(t)$ is the counter up to $t - 1$ not $t$.}

%The reason we introduce $C_u$ is that it provides an upper bound for $\tau_u$ under our coupling, as shown in the next lemma.
%The proof is by induction along a path $(v_0=s,v_1,\ldots, v_k=u)$ for which $C_{v_j}(t_{v_j}) = C_{v_{j-1}}(t_{v_j})$.

\begin{lemma}
    \label{lem:tauleqC}
    For any $u\in V$, $\tau_u \leq C_u(t_u)$.
\end{lemma}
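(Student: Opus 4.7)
The plan is to prove this by induction on $t_u$, the round at which $u$ gets informed in \visit, exploiting the coupling $\pi_u(i) = p_u(i)$ from Section~\ref{sec:push<msg-coupling}. For the base case, $u = s$: since $t_s = 0$, the definition gives $C_s(0) = 0$, and in \push we also have $\tau_s = 0$, so the inequality holds trivially. Note that for any $u \neq s$, only $s$ is informed in round zero of \visit (initial agents at $u$ are not informed), so $t_u \geq 1$ and the recursive case of the definition applies.

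For the inductive step, fix an arbitrary $v \in S_u$. By the definition of $S_u$ we have $t_v < t_u$, so the inductive hypothesis gives $\tau_v \leq C_v(t_v)$. Also by the definition of $S_u$, there is an agent $g \in Z_v(t_u - 1) \cap Z_u(t_u)$, that is, an agent that visits $v$ at round $t_u - 1$ and moves to $u$ at round $t_u$. Let $j$ be the index of the pair $(t_u - 1, g)$ in the chronologically ordered set $X_v$ used to define $p_v$; then $p_v(j) = x_g(t_u) = u$, and the coupling yields $\pi_v(j) = u$, meaning $v$ samples $u$ in \push at round $\tau_v + j$. Hence $\tau_u \leq \tau_v + j$. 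Telescoping the recursion for $C_v$ gives
\[
C_v(t_u) - C_v(t_v) = \sum_{\tau = t_v}^{t_u - 1} |Z_v(\tau)|,
\]
which is exactly the number of agent visits to $v$ in rounds $t_v, \ldots, t_u - 1$. Since $g$'s visit at round $t_u - 1$ is one of these, its index $j$ in the ordering of $X_v$ satisfies $j \leq C_v(t_u) - C_v(t_v)$, so
\[
\tau_u \leq \tau_v + j \leq C_v(t_v) + \bigl(C_v(t_u) - C_v(t_v)\bigr) = C_v(t_u).
\]
Since $v \in S_u$ was arbitrary, taking the minimum gives $\tau_u \leq \min_{v \in S_u} C_v(t_u) = C_u(t_u)$, as required.

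The argument is entirely deterministic once the coupling is fixed, so there is no probabilistic or combinatorial obstacle. The only care needed is in the bookkeeping: matching the chronological indexing of $X_v$ (first by round, then by agent ID) with the counts appearing in the telescoping sum, and confirming that the distinguished agent $g$ is indeed counted within the range of departures from $v$ by round $t_u$.
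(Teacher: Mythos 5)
Your proof is correct and follows essentially the same argument as the paper: the key step in both is to use the coupling to identify the index $j$ of the informing agent's visit in $X_v$ with a sampling round of $v$ in \push, and to bound $j$ by the telescoped difference $C_v(t_u)-C_v(t_v)=\sum_{t_v\le \tau<t_u}|Z_v(\tau)|$. The only cosmetic difference is that you run a strong induction on the informing time $t_u$ and handle an arbitrary $v\in S_u$ before taking the minimum, whereas the paper builds the explicit minimizing path back to $s$ and inducts along it; the two are interchangeable.
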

\begin{proof}
Consider the following path through which information reaches $u$ in \visit.
The path is $(v_0,v_1,\ldots, v_k)$, where $v_0=s$, $v_k=u$, and for each $0<j\leq k$, we have $v_{j-1}\in S_{v_j}$ and $C_{v_{j-1}}(t_{v_j}) = \min_{v\in S_{v_j}} C_{v}(t_{v_j}) = C_{v_j}(t_{v_j})$.
%It is easy to verify that such a path exists.
We prove by induction on $0\leq j\leq k$ that
\begin{equation}
    \label{eq:tauind}
    \tau_{v_j} \leq C_{v_j}(t_{v_j}).
\end{equation}
This holds for $j=0$, because $v_0=s$, $t_s=0$, and $\tau_s=0=C_s(0)$.
Let $0 < j\leq k$, and suppose that $\tau_{v_{j-1}} \leq C_{v_{j-1}}(t_{v_{j-1}})$; we will show that $\tau_{v_j} \leq C_{v_j}(t_{v_j})$.
We have
\begin{align*}
    C_{v_j}(t_{v_j})
    &=
    C_{v_{j-1}}(t_{v_j}),
%    &&
    \quad
    \text{by the path property}
    \\&
    =
    C_{v_{j-1}}(t_{v_{j-1}}) + \sum_{t_{v_{j-1}}\leq t < t_{v_j}}|Z_{v_{j-1}}(t)| ,
%    &&
%    \\&\qquad\qquad\qquad\qquad
    \quad\text{by recursive application of~\eqref{eq:Cu}}
    \\&
    \geq
    \tau_{v_{j-1}} + \sum_{t_{v_{j-1}}\leq t < t_{v_j}}|Z_{v_{j-1}}(t)|,
%    &&
    \quad
    \text{by the induction hypothesis.}
\end{align*}
Let $\ell = \min\{i\colon p_{v_{j-1}}(i) = v_{j}\}$, let $g$ be the agent that does the $\ell$th visit to $v_{j-1}$ since round $t_{v_{j-1}}$, and let $r$ be the round when that visit takes place, thus $x_g(r) = v_{j-1}$ and $x_g(r+1) = v_{j}$.
By the minimality of $\ell$, $r+1$ is the first round when some informed agent moves to $v_j$ from $v_{j-1}$.
Since $v_{j-1} \in S_{v_j}$, it follows that $r+1 = t_{v_j}$. Then
\[
    \ell
    \leq
    \sum_{t_{v_{j-1}}\leq t \leq r}|Z_{v_{j-1}}(t)|
    =
    \sum_{t_{v_{j-1}}\leq t < t_{v_j}}|Z_{v_{j-1}}(t)|.
\]
%Consider the smallest $i$ for which $p_{v_{j-1}}(i) = v_{j}$, and look at the round when the $i$th visit to $v_{j-1}$ since round $t_{v_{j-1}}$ takes place (the agent that does this visit, visits $v_j$  in the next round, and is the first informed agent to move to $v_j$ from $v_{j-1}$, by the minimality of $i$).
%Since $v_{j-1}\in S_{v_j}$, that visit takes place in round $t_{v_j}-1$, thus
%\[
%    i
%    \leq
%%    C_{v_{j-1}}(t_{v_j}) - C_{v_{j-1}}(t_{v_{j-1}})
%%    =
%    \sum_{t_{v_{j-1}}\leq t < t_{v_j}}|Z_{v_{j-1}}(t)|.
%\]
Also, from the coupling, $\pi_{v_{j-1}}(\ell) = p_{v_{j-1}}(\ell) =  v_{j}$, which implies
\[
    \tau_{v_{j}} \leq \tau_{v_{j-1}} + \ell.
\]
Combining all the above we obtain
$
    C_{v_j}(t_{v_j})
    \geq
    \tau_{v_{j-1}} +\ell \geq \tau_{v_j}
    ,
$
completing the inductive proof of~\eqref{eq:tauind}.
Applying~\eqref{eq:tauind} for $j=k$, we obtain $\tau_u \leq C_u(t_u)$.
\end{proof}

\subsection{Canonical Walks and Congestion}

Let $\theta = (\theta_0,\theta_1,\ldots,\theta_k)$,
where $\theta_0 = s$ and $\theta_i\in \Gamma(\theta_{i-1})\cup\{\theta_{i-1}\}$ for $1\leq i \leq k$,
be a walk on $G$ constructed from \visit as follows.
We start from vertex $\theta_0 = s$ in round zero, and in each round $1\leq t \leq k$, we either stay put, in which case $\theta_t = \theta_{t-1}$, or we choose one of the agents $g\in Z_{\theta_{i-1}}(t-1)$, which visited $\theta_{i-1}$ in the previous round, and move to the same vertex as $g$ in round $t$, i.e., $\theta_t = x_t(g)$.
We call $\theta$ a \emph{canonical walk} of length $k$. %of \visit.
A \emph{labeled canonical walk} is a canonical walk that specifies also the agent $g_t$ that the walk follows in each step $t$, if $\theta_t\neq \theta_{t-1}$.
Formally, a labeled canonical walk corresponding to $\theta$ is $\eta = (\theta_0,g_1,\theta_1,g_2,\ldots,g_k,\theta_k)$, where $g_t \in Z_{\theta_{t-1}}(t-1)\cap Z_{\theta_t}(t)$ if $\theta_{t}\neq \theta_{t-1}$, and $g_t = \bot$ if $\theta_{t}=\theta_{t-1}$.
Note that different labeled canonical walks may correspond to the same (unlabeled) canonical walk.
We define the \emph{congestion} $Q(\theta)$
of a canonical walk $\theta$ as the total number of agents encountered along the walk,\footnote{%
    The same agents is counted more than once if encountered in multiple rounds.
}
not counting the last step, i.e.,
\[
    Q(\theta) = \sum_{0\leq t< k} |Z_{\theta_t}(t)|.
\]
The congestion of a \emph{labeled} canonical walk is the same as the congestion of the corresponding unlabeled walk.
%\hnote{It is not clear why this is called delay. This maybe can be explained together with the canonical walks, or in a general 'proof plan' beforehand? Something along the lines that: Due to the coupling push makes choices based on \visit. Thus, if there are $k$ messengers at a node it takes $k$ rounds for the push process to ``process'' them, hence there is a delay.}
%The following simple connection between canonical walks and C-counters.
%The proof uses the same path as the proof of Lemma~\ref{lem:tauleqC}.

\begin{lemma}
    \label{lem:DeqC}
    For any $u\in V$ and $t\geq t_u$, there is a canonical walk $\theta$ of length $t$ with $Q(\theta) = C_u(t)$.
\end{lemma}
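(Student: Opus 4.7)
The plan is to prove a slightly stronger statement by induction on $t$: for every $u\in V$ with $t_u \leq t$, there exists a canonical walk $\theta$ of length $t$ such that $\theta_t = u$ and $Q(\theta) = C_u(t)$. Requiring $\theta_t = u$ is essentially free given the definition of $Q$ (which excludes the last vertex), but it is precisely what lets the inductive step splice walks together; without it, when we pass from $C_u(t-1)$ to $C_u(t) = C_u(t-1) + |Z_u(t-1)|$, we would have no way to ensure the extra term picked up by extending the walk equals $|Z_u(t-1)|$ rather than some other $|Z_{v}(t-1)|$.

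For the base case $t=0$, the only vertex with $t_u=0$ is $s$, and the length-$0$ walk $\theta=(s)$ satisfies $\theta_0=s$ and $Q(\theta)=0=C_s(0)$. For the inductive step I would split on the two non-trivial branches of the definition of $C_u(t)$ in~\eqref{eq:Cu}.

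In the case $t > t_u$, the inductive hypothesis at time $t-1$ gives a canonical walk $\theta'$ of length $t-1$ with $\theta'_{t-1}=u$ and $Q(\theta')=C_u(t-1)$. I extend it by a \emph{stay} step, setting $\theta_t = u$; this is a legal canonical walk move, and
\[
    Q(\theta) = Q(\theta') + |Z_{\theta'_{t-1}}(t-1)| = C_u(t-1) + |Z_u(t-1)| = C_u(t),
\]
as required. In the case $t=t_u>0$, I pick $v^\ast \in S_u$ attaining the minimum in $C_u(t)=\min_{v\in S_u} C_v(t)$. Since $v^\ast \in S_u$ implies $t_{v^\ast} < t_u = t$, the inductive hypothesis at time $t-1$ applies to $v^\ast$, yielding a canonical walk $\theta''$ of length $t-1$ with $\theta''_{t-1}=v^\ast$ and $Q(\theta'')=C_{v^\ast}(t-1)$. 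By definition of $S_u$, there is an agent $g\in Z_{v^\ast}(t-1)\cap Z_u(t)$, so I can extend $\theta''$ by following $g$ from $v^\ast$ to $u$, producing a canonical walk $\theta$ of length $t$ with $\theta_t=u$ and
\[
    Q(\theta) = Q(\theta'') + |Z_{v^\ast}(t-1)| = C_{v^\ast}(t-1) + |Z_{v^\ast}(t-1)| = C_{v^\ast}(t) = C_u(t),
\]
where the second-to-last equality uses $t > t_{v^\ast}$ and the recursion in~\eqref{eq:Cu}.

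I do not expect any real obstacle here. The only thing that needs care is realizing that the straight lemma statement is not strong enough to carry its own induction; once the stronger ``$\theta_t = u$'' version is adopted, the two cases of the recursive definition of $C_u$ correspond exactly to the two ways of extending a canonical walk (stay, or follow an agent along an edge), and the algebra matches on the nose.
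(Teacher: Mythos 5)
Your proof is correct and is essentially the paper's argument made explicit: the paper builds the same walk by taking the path $(v_0,\ldots,v_k)$ of minimizing predecessors from the proof of Lemma~\ref{lem:tauleqC}, padding each segment with $t_{v_j}-t_{v_{j-1}}-1$ stay steps and appending $t-t_{v_k}$ copies of $u$, and then asserts that $Q(\theta)=C_u(t)$ follows ``by induction'' --- which is precisely the induction you carry out, with your strengthened hypothesis $\theta_t=u$ being the right invariant to make the two branches of~\eqref{eq:Cu} line up with the two walk-extension moves. No gaps.
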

\begin{proof}
We consider the same path $(v_0,v_1,\ldots, v_k)$ as in the proof of Lemma~\ref{lem:tauleqC}, where $v_0=s$, $v_k=u$, and for each $0<j\leq k$, $v_{j-1}\in S_{v_j}$ and $C_{v_j}(t_{v_j}) = C_{v_{j-1}}(t_{v_j})$.
Consider the canonical walk $\theta$ obtained from this path by adding between each pair of consecutive vertices $v_{j-1}$ and $v_j$, $t_{v_{j}}-t_{v_{j-1}}-1$ copies of $v_{j-1}$, and also appending after $v_k$ a number of $t-t_{v_{k}}$ copies of $v_k$.
%\fnote{argue why we can add them?}
It is then easy to show by induction that $Q(\theta) = C_u(t)$.
\end{proof}

\subsection{Concise Descriptors of Canonical Walks}

In this section, we bound the number of distinct labeled canonical walks of a given length $k$.
For that, we present a concise description for such walks, and bound the total number of the walks  by the total number of different possible descriptions.

We start with a rather wasteful way to describe labeled canonical walks, which we then refine in two steps.
Let $\AA_k$ denote the set of all $\alpha n\times k$ matrices $A_k = [a_{i,j}]$, where $a_{i,j}\in \{0,\ldots,i\}$.
%\fnote{describe intuition: Each $A_k$ describes one canonical path}
%for $1\leq i \leq n$ and $1\leq j\leq k$.
Let us fix the first $k$ rounds of \visit, and consider a labeled canonical walk
$\eta = (\theta_0=s,g_1,\theta_1,\ldots,g_k,\theta_k)$.
For each $1\leq t \leq k$,
let
\[
    \delta_t = |Z_{\theta_{t-1}}(t-1)|
\]
%\hnote{Let's use $z_t$ maybe, or something else (Latin). $\delta$ has a connotation of a difference and also especially because in the next theorem we use it differently.}
be the number of agents that visit $\theta_{t-1}$ in round $t-1$, and thus also the number of agents that depart from $\theta_{t-1}$ in round $t$.
%\hnote{now it's index $t$ for $z$ but $t-1$ on the right side}
Let $\rho_t = 0$ if $g_t = \bot$, otherwise, $\rho_t$ is equal to the \emph{rank} of $g_t$ in set $Z_{\theta_{t-1}}(t-1)$, i.e., $\rho_t = |\{g\in Z_{\theta_{t-1}}(t-1)\colon g \leq g_t\}|$.
We describe walk $\eta$ by a matrix $A_k\in \AA_k$ with the following entries:
For each $1\leq t \leq k$, if $\delta_t>0$, then $a_{\delta_t,j} = \rho_t$, for  $j = |\{t'\leq t\colon \delta_{t'} = \delta_t\}|$,
i.e., value $\rho_t$ is stored in the first unused entry of row $A_k[\delta_t,\cdot]$.
At most $k$ of the entries of $A_k$ are specified that way; the remaining entries can have arbitrary values.
We call $A_k$ a \emph{non-concise descriptor} of $\eta$.

For any given realization of \visit, each $A_k\in\AA_k$ describes exactly one labeled canonical walk of length $k$, and any labeled canonical walk of length $k$ has at least one non-concise descriptor $A_k\in \AA_k$ (in fact, several ones).
%\hnote{Could help to reiterate how to construct the path from a pair of execution and $A$, vaguely as follows: For a given execution of the \visit process and a descriptor $A$, we can construct a unique path vertex by vertex. The first one is $s$. At each step, let $i$ be the number of agents at the last added vertex $v$. Then the next element in the row $a_i$ determines the agent in $v$ that moves along the path. The destination of $v$ is the next vertex of the path.}
%
%\hnote{One thing that confused me in this definition was that this descriptor was too wasteful. We only use $n$ entries out of $nk$ ones, so to the reader may not be clear that he understood correctly.}
%
The total number of different non-concise descriptors is $|\AA_k|=\prod_{1\leq i\leq \alpha n} (i+1)^k$, which is too large for our purposes.
%\fnote{add a sentence}

A simple improvement is to use only entries in rows $A_k[i,\cdot]$ for which $i$ is a power of 2 (we assume w.l.o.g.\ that $\alpha n$ is also a power of 2).
Roughly speaking, if $\delta_t$ is between $2^{\ell-1}$ and $2^\ell$ then $\rho_t$ is stored in raw $A_k[2^\ell,\cdot]$.
Formally, let $b$ be a (large enough) constant, to be specified later, which is a power of 2.
The matrix $A_k\in \AA_k$ we use to describes~$\eta$ has the following entries.
For each $1\leq t\leq k$:
\begin{enumerate}
  \item If $2^{\ell-1} < \delta_t \leq 2^\ell$, where $\ell\in\{1+\log b,\ldots,\log (\alpha n)\}$, and $|\{t'\leq t\colon 2^{\ell-1} < \delta_{t'} \leq 2^\ell\}|=j$, then
      \begin{enumerate}
        \item if $\rho_t \neq 0$, we have $a_{2^{\ell},j} = \rho_t$,
        \item if $\rho_t = 0$, $a_{2^{\ell},j}$ can take any value in  $\{0\} \cup\{\delta_t+1,\ldots,2^\ell\}$.
      \end{enumerate}
  \item If $0 \leq \delta_t \leq b$  and $|\{t'\leq t\colon 0 < \delta_{t'} \leq b\}|=j$, then
      \begin{enumerate}
        \item if $\rho_t \neq 0$, we have $a_{b,j} = \rho_t$,
        \item if $\rho_t = 0$, $a_{b,j}$ can take any value in $\{0\} \cup\{\delta_t+1,\ldots,b\}$.
      \end{enumerate}
\end{enumerate}
The purpose of subcases (b) is to maintain the property that \emph{every} $A_k$ describes a labeled canonical walk, which would not be the case if we just set $a_{2^{\ell},j} = 0$ or $a_{b,j} = 0$, since values greater than $\delta_t$ would not correspond to a walk.
%\hnote{%Need to mention that semi-concise is essentially a bucketing of rows of $A$, and nothing is lost or gained.
%Do you think defining the normal descriptor helps? We could go into semi maybe straightaway. To understand the detail one needs to go into detail, draw an example perhaps, anyway.}
We call the matrix $A_k$ above a \emph{semi-concise descriptor} of~$\eta$.

%The above approach results in using at most $\log n$ out of the $n$ rows of $A_k$. By ignoring the unused rows, we reduce the number of different matrices $A_k$ to $\prod_{\log b \leq \ell\leq \log n} (2^\ell+1)^k$.

A second modification we make is based on the observation that, even in the logarithmic number of $A_k$' rows used in the above scheme, most entries are very unlikely to be actually used.
For each row $i = 2^\ell$, we specify a threshold index $k_i\leq k$, such that the first $k_i$ entries in each row $A_k[i,\cdot]$ suffice  w.h.p.\ to describe all labeled canonical walks of length $k$, in a random realization of \visit.
Let  $\BB_k$ be a subset of $\AA_k$ defined as follows.
Let
\[
    k_i = b\cdot k/i
    ,
\]
and recall that $b$ is a constant power of 2.
The set $\BB_k$ consists of all $A_k=[a_{i,j}]\in\AA_k$ such that
\begin{align*}
    &a_{i,j} \in \{0,\ldots,i\}, &&\text{if $i \in \{2^\ell\colon \log b \leq \ell\leq \log (\alpha n)\}$ and $j \leq k_i$}
    \\
    &a_{i,j} = 0, &&\text{otherwise}.
\end{align*}
A \emph{concise descriptor} of a labeled canonical walk $\eta$ of length $k$ is any semi-concise descriptor $A_k$ of $\eta$ that belongs to set $\BB_k$.

%
%%Here we provide just an informal overview; the formal definitions can he found in Sect.~\ref{sec:descriptors-full}, in the appendix.
%
%Let $\eta = (\theta_0=s,g_1,\theta_1,g_2,\ldots,g_k,\theta_k)$ be a labeled canonical walk of length $k$.
%A \emph{concise descriptor} $A_k$ of $\eta$ stores the following information:
%For each $1\leq t\leq k$, it stores whether $g_i=\bot$ (i.e., $\theta_i=\theta_{i-1}$) or not. If $g_i\neq\bot$, then it stores the \emph{rank} of $g_t$ in set $Z_{\theta_{t-1}}(t-1)$, with respect to some predefined total order over all agents.
%This rank is stored in a variable of range 1 up to $2^{\lceil\log(Z_{\theta_{t-1}}(t-1))\rceil}$, and all variables of the same range are stored in a list, in increasing order of $t$.
%The list containing entries in the range 1 up to $2^{i}$ have maximum length $k_i = bk/i$, for some large enough $b$ that is a constant power of $2$.
%(The precise definition is given in Sect.~\ref{sec:descriptors-full}.)
%
Next we compute an upper bound on the number of all possible concise descriptors of length~$k$.
%A basic combinatorial argument bounds the size of the set $\BB_k$ of all concise descriptors of length $k$.

\begin{lemma}
    \label{lem:cardBk}
    $|\BB_k| \leq  (4b)^{2k}$.
\end{lemma}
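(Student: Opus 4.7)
My plan is to bound $|\BB_k|$ by a direct counting argument followed by a geometric-series estimate. Since any $A_k\in\BB_k$ is determined entirely by its ``free'' entries (those in rows $i=2^\ell$ with $\log b \leq \ell \leq \log(\alpha n)$ and columns $j\leq k_i = bk/i$), and each such entry takes at most $i+1$ values, I would immediately write
\[
    |\BB_k| \;\leq\; \prod_{\ell=\log b}^{\log(\alpha n)} (2^\ell + 1)^{b k/2^\ell}.
\]

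Next I would take $\log_2$ of both sides and use $\log_2(2^\ell+1) \leq \ell+1$ to obtain
\[
    \log_2 |\BB_k| \;\leq\; bk \sum_{\ell=\log b}^{\log(\alpha n)} \frac{\ell+1}{2^\ell}.
\]
The key step is to substitute $m = \ell - \log b$, which turns $2^\ell$ into $b\cdot 2^m$ and cancels the leading factor of $b$:
\[
    \log_2 |\BB_k| \;\leq\; k \sum_{m=0}^{\log(\alpha n/b)} \frac{m + \log b + 1}{2^m}
    \;\leq\; k \sum_{m=0}^{\infty} \frac{m + \log b + 1}{2^m}.
\]

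Finally I would apply the standard identities $\sum_{m\geq 0} 2^{-m} = 2$ and $\sum_{m\geq 0} m \cdot 2^{-m} = 2$ to evaluate the tail, giving
\[
    \log_2 |\BB_k| \;\leq\; k\bigl(2 + 2(\log b + 1)\bigr) \;=\; k(2\log b + 4) \;=\; 2k\log_2(4b),
\]
and hence $|\BB_k|\leq (4b)^{2k}$ as required. There is no real obstacle here: the whole lemma is essentially a verification that the thresholds $k_i=bk/i$ were chosen precisely so that the row contributions form a convergent series whose sum is linear in $k$. The only thing to watch out for is the indexing of the rows as powers of $2$ (so the geometric series has ratio $1/2$) and making sure the base case $\ell=\log b$ aligns so that the substitution removes the factor of $b$ cleanly.
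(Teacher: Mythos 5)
Your proof is correct and follows essentially the same route as the paper's: both start from the counting bound $|\BB_k|\leq\prod_{\ell}(2^\ell+1)^{bk/2^\ell}$ and evaluate the resulting sum via the geometric-series identities $\sum_{m\geq 0}2^{-m}=\sum_{m\geq 0}m2^{-m}=2$. Your use of $2^\ell+1\leq 2^{\ell+1}$ is in fact slightly cleaner than the paper's splitting of the factor $(1+2^{-\ell})^{bk/2^\ell}$ into a separate product, and both arguments land on exactly $(4b)^{2k}$.
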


\begin{proof} \allowdisplaybreaks
From the definition of $\BB_k$, we have
\begin{align*}
    |\BB_k|
    &\leq
    \prod_{\log b \leq \ell\leq \log(\alpha n)}
    (2^\ell + 1)^{bk/2^\ell}
    \\&
    =
    \prod_{\log b \leq \ell\leq \log(\alpha n)}
    2^{\ell bk/2^\ell}
    \cdot
    \prod_{\log b \leq \ell\leq \log(\alpha n)}
    (1+2^{-\ell})^{bk/2^\ell}
    \\&
    \leq
    \frac
    {\prod_{\ell\geq 1} 2^{\ell bk/2^\ell}}
    {\prod_{\ell \leq \log b - 1} 2^{\ell bk/2^\ell}}
    \cdot
    \prod_{\ell \geq \log b}
    e^{bk/4^\ell}
    \\&
    =
    \frac{2^{2 bk}}
    {2^{(2(b-\log b -1)k}}
    \cdot
    e^{(4/3)k/b}
    \\&
    \leq
    2^{2(\log b + 2)k},
\end{align*}
where in the second-last line we used $\sum_{\ell\geq 1} \ell/2^\ell = 2$, $\sum_{\ell\leq y}\ell/2^\ell = 2^{-y}(2^{y+1}-y-2)$, and $\sum_{\ell\geq 0} 1/4^\ell = 4/3$; and in the last line we used that $e^{(4/3)}<4$.
\end{proof}

For any realization of \visit, each $A_k\in\BB_k$ is a concise descriptor of some labeled canonical walk of length $k$.
However it is not always the case that a labeled canonical walk has a concise descriptor.
The next lemma shows that w.h.p.\ all labeled canonical walks of length $k$ have concise descriptors for an appropriate choice of constant parameter $b$.
Note that the lemma assumes the \tvisit process.
The proof is given in Section~\ref{sec:consisewhp}.

\begin{lemma}
    \label{lem:consisewhp}
    If $b \geq \max\{2\gamma e^2,64\}$ then,
    with probability at least
    $1-2^{-bk/4} \log(\alpha n)$,
    all labeled canonical walks of length $k$
    in a random realization of \tvisit
    have concise descriptors.
\end{lemma}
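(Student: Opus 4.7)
The plan is to reduce the statement to a per-band estimate and then apply a union bound. A labeled canonical walk $\eta$ of length $k$ has a concise descriptor iff, for every $\ell \in \{\log b + 1, \dots, \log(\alpha n)\}$, the count $|\{t \le k : \delta_t(\eta) \in (2^{\ell-1}, 2^\ell]\}|$ is at most $k_{2^\ell} = bk/2^\ell$. (The band $\delta_t \le b$ sits in row $b$ of the descriptor, which has $k_b = k$ slots and never overflows.) Writing $M := 2^{\ell-1}$ and $m := bk/2^\ell$, it therefore suffices to show, for each fixed $\ell \ge \log b + 1$, that $\Pr{\exists\,\eta : |\{t \le k : \delta_t(\eta) > M\}| > m} \le 2^{-bk/4}$; the full bound then follows by a union bound over the $\log(\alpha n)$ choices of $\ell$.

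The main probabilistic ingredient is a conditional tail bound on $|Z_v(t)|$ that exploits the \tvisit tweak crucially. Conditional on the history $\mathcal{F}_{t-1}$ through time $t-1$, the agents in $\Gamma(v)$ at time $t-1$ number at most $\gamma d$ by Eq.~\eqref{eq:alphaCondition}, and each of them moves to $v$ at time $t$ independently with probability $1/d$. Hence $|Z_v(t)| \mid \mathcal{F}_{t-1}$ is stochastically dominated by $\mathrm{Bin}(\gamma d, 1/d)$, and a standard Chernoff bound gives
\[
\Pr{|Z_v(t)| > M \mid \mathcal{F}_{t-1}} \;\le\; (e\gamma/M)^M \;\le\; (1/(2e))^M \;\le\; 2^{-M},
\]
using $M \ge b \ge 2\gamma e^2$. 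Peeling off heavy events in chronological order and taking repeated conditional expectations, for any fixed schedule $(t_1, v_1, \dots, t_{m+1}, v_{m+1})$ with $t_1 < \cdots < t_{m+1}$ in $[k]$, the probability that $|Z_{v_i}(t_i-1)| > M$ holds for every $i$ is at most $2^{-(m+1)M} \le 2^{-bk/2}$.

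Finally, I lift this fixed-schedule bound to a uniform bound over all walks by enumerating witness schedules combinatorially. A walk with more than $m$ band-$\ell$ visits admits a witness tuple as above, with $v_i$ equal to the walk's position $\theta_{t_i-1}$; to bound the number of admissible witnesses I use the structure of canonical walks---at each step a walk in \tvisit has at most $\gamma d + 1$ options (stay, or follow one of at most $\gamma d$ agents), and the heavy times can be selected in at most $\binom{k}{m+1}$ ways---combined with the per-schedule bound $2^{-bk/2}$ to obtain the claimed $2^{-bk/4}$, once the combinatorial factors are absorbed for $b \ge 64$. The main obstacle is keeping this combinatorial accounting tight, since the naive count $(\gamma d + 1)^k$ of walks is too large for a constant $b$ when $d = \Omega(\log n)$. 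Overcoming this will require enumerating walks more compactly through the semi-concise descriptor framework already in hand, so that the entropic cost of specifying each heavy round (a rank in $\{1, \dots, 2^\ell\}$) is paid for by the probabilistic savings $2^{-M}$ per heavy round.
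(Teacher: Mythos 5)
Your reduction of the statement to per-band overflow events, and your conditional tail bound $\Pr{|Z_v(t)|>M\mid \mathcal F_{t-1}}\le (e\gamma/M)^M$ via the \tvisit cap and binomial domination, match the probabilistic core of the paper's argument (its Lemma~\ref{lem:distributionzt}). However, the step you yourself flag as ``the main obstacle'' is precisely the content of the lemma, and your proposal does not close it. Your fixed-schedule witness argument costs a factor of $n^{m+1}$ to specify the vertices $v_i$, which is not absorbed by $2^{-(m+1)M}$ unless $M\gtrsim\log n$, i.e., it fails for all bands with $2^{\ell}\le\mathrm{poly}\log n$; and tying the $v_i$ to positions of a canonical walk puts you back to enumerating walks, where the honest count per step is $\delta_t+1=\Theta(d)$ options and $(\gamma d+1)^k$ is far too large.

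The paper escapes this with two ideas you would need. First, the union bound is taken not over walks or schedules but over the set $\BB_k$ of \emph{concise descriptors}, which has size at most $(4b)^{2k}$ (Lemma~\ref{lem:cardBk}); for a fixed descriptor the walk it induces is a measurable function of the realization, so the conditional tail bound applies round by round without conditioning on the future. Second --- and this is the subtle point your sketch is missing entirely --- a union bound over $\BB_k$ only controls walks that \emph{already have} concise descriptors, which is circular. The paper breaks the circularity with a maximal-common-prefix argument: if some walk $\eta'$ had no concise descriptor, take a concisely-describable walk $\eta$ sharing a maximal common prefix with $\eta'$, observe that the first disagreement must be an agent label $g_i'\ne g_i$, and consider the walk that follows the common prefix, takes $g_i'$, and then stays put forever; its failure to have a concise descriptor forces a band overflow already within the common prefix, contradicting the event $\bigcap_{A_k\in\BB_k}\EE_{A_k}$. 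Your closing remark that the entropic cost of a rank in $\{1,\dots,2^\ell\}$ is ``paid for by the probabilistic savings $2^{-M}$'' is the right intuition for why $|\BB_k|$ is small enough, but without the descriptor set and the bootstrapping step the proof as proposed does not go through.
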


\subsection{Proof of Lemma~\ref{lem:consisewhp}}
\label{sec:consisewhp}

First, we bound the number of steps $t$ in which more than $i$ agents are encountered in a canonical walk of length $k$.

\begin{lemma}
    \label{lem:distributionzt}
    Fix any $A_k\in\AA_k$, and let
    $\eta = (\theta_0,g_1,\theta_1,\ldots,g_k,\theta_k)$
    be the labeled canonical walk with semi-concise (or non-concise) descriptor $A_k$  in \tvisit.
    For any $i\geq e^2\gamma$ and $\beta \geq e^2 \gamma$,
    \[
        \Pr{|\{t\in\{1,\ldots,k\} \colon  \delta_t > i\}| \geq \beta k/i }\leq 2^{-\beta k}.
    \]
%    \hnote{Recall definition of $\delta_t$ in the lemma. It helps to see it, also because $\delta_t$ is the number for round \underline{$t-1$}}.
\end{lemma}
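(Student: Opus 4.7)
The plan is to bound the per-step quantity $\delta_t = |Z_{\theta_{t-1}}(t-1)|$ by a conditional stochastic domination argument, and then convert the per-step tail bound into a bound on the number of indices $t \in \{1, \ldots, k\}$ with $\delta_t > i$. Fix a round $t$ and condition on the $\sigma$-algebra $\F_{t-2}$ generated by all agent positions through round $t-2$. Under this conditioning $\theta_{t-2}$ and $\{Z_v(t-2)\}_{v \in V}$ are determined. For any vertex $u$, the defining \tvisit constraint (Eq.~\eqref{eq:alphaCondition}) gives $\sum_{v \in \Gamma(u)} |Z_v(t-2)| \leq \gamma d$, while each agent at a neighbor of $u$ independently moves to $u$ with probability $1/d$. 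Hence the number of agents present at $u$ in round $t-1$ is stochastically dominated by $\Bin(\gamma d, 1/d)$, a distribution with mean at most $\gamma$, and a standard Chernoff tail bound yields $\Pr{|Z_u(t-1)| \geq i \mid \F_{t-2}} \leq e^{-i}$ for $i \geq e^2\gamma$.

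The next step is to handle the fact that $\theta_{t-1}$ is itself random: it is the round-$(t-1)$ position of a distinguished agent $g^\ast \in Z_{\theta_{t-2}}(t-2)$ whose rank is dictated by $A_k$. Conditioning on $\theta_{t-1} = u$ forces only the single move of $g^\ast$ to point towards $u$, while leaving all other agent moves independent; the distribution of $|Z_u(t-1)|$ conditional on $\F_{t-2}$ and $\theta_{t-1} = u$ is therefore dominated by $1 + \Bin(\gamma d, 1/d)$, so the same tail bound yields $\Pr{\delta_t > i \mid \F_{t-2}, \theta_{t-1} = u} \leq p$ for some $p \leq e^{-i + O(1)}$. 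Averaging over the value of $\theta_{t-1}$ then gives $\Pr{\delta_t > i \mid \F_{t-2}} \leq p$.

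Setting $Y_t = \indicator{\delta_t > i}$, I note that $Y_t$ is $\F_{t-1}$-measurable while $Y_1, \ldots, Y_{t-1}$ are all $\F_{t-2}$-measurable, so the tower property gives $\Pr{Y_t = 1 \mid Y_1, \ldots, Y_{t-1}} \leq p$. Consequently $\sum_{t=1}^k Y_t$ is stochastically dominated by $\Bin(k, p)$, and the standard tail estimate $\Pr{\Bin(k, p) \geq m} \leq (ekp/m)^m$ with $m = \beta k/i$ reduces the goal to $(eip/\beta)^{\beta k/i} \leq 2^{-\beta k}$, i.e.\ to $eip/\beta \leq 2^{-i}$. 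With $p \leq e^{-i + O(1)}$ and $\beta \geq e^2\gamma$, the factor $(e/2)^i$ comfortably dominates the linear factor $ei$ once $i \geq e^2\gamma$, so the inequality holds (the function $i\,(2/e)^i$ has a bounded absolute maximum, which is absorbed into the slack afforded by $\beta \geq e^2\gamma$).

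The main obstacle is the second step: one must verify that the randomness in the walk direction $\theta_{t-1}$ does not inflate the tail of $\delta_t$. The resolution is that conditioning on the destination of a single distinguished agent only removes one $\mathrm{Bernoulli}(1/d)$ trial from the count at $u$, leaving the Chernoff bound essentially unaffected. This is precisely where the tweaked process \tvisit is indispensable: the deterministic inequality $\sum_{v \in \Gamma(u)} |Z_v(t-2)| \leq \gamma d$ holds for every possible history, which is what allows the per-step Chernoff argument to be made uniformly in the random value of $\theta_{t-1}$; in plain \visit, a large local cluster of agents could in principle persist from round to round and defeat the domination.
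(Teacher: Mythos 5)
Your proof is correct and follows essentially the same route as the paper's: stochastic domination of each $\delta_t$ by $1+\Bin(\gamma d,1/d)$ via the \tvisit constraint on the previous round's neighborhood (with the distinguished agent whose move defines $\theta_{t-1}$ accounting for the $+1$), followed by domination of the count of bad steps by $\Bin(k,p)$ and the $(ekp/m)^m$ binomial tail bound. The only point worth tightening is the boundary case $t=1$, where $\F_{-2+1}$ and the round-$(t-2)$ \tvisit constraint do not exist; there $\delta_1\sim\Bin(|A|,1/n)$ has mean at most $\alpha\le\gamma$, so the same tail estimate applies, exactly as the paper handles this case separately.
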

\begin{proof}
Recall that $\delta_t = |Z_{\theta_{t-1}}(t-1)|$ is the number of agents that visit vertex $\theta_{t-1}$ in round $t-1$, and thus also the number of agents that depart from $\theta_{t-1}$ in round $t$.
We argue that for any $t\geq 1$, conditioned on $\delta_1,\ldots,\delta_{t}$, variable $\delta_{t+1}$ is stochastically dominated by the binomial random variable $\Bin(\gamma d,1/d)+1$:
From~\eqref{eq:alphaCondition}, applied for vertex $\theta_{t}$ and round $t-1$, we get
\[
    \sum_{v \in \Gamma(\theta_{t})} |Z_v(t-1)| \leq \gamma\cdot d,
\]
thus, there are at most $\gamma d$ agents in the neighborhood of $\theta_{t}$ before round $t$.
%If $g_t = \bot$ (thus $\theta_t = \theta_{t-1}$),
If $\theta_t = \theta_{t-1}$,
then each one of those at most $\gamma d$ agents %in the neighborhood of $\theta_{t}$ at the beginning of round $t$,
will visit $\theta_{t}$ in round $t$ independently with probability $1/d$.
%If $g_t \neq \bot$ (thus $\theta_t \neq \theta_{t-1}$ and $g_{t} \in Z_{\theta_{t-1}}(t-1) \cap Z_{\theta_{t}}(t)$),
If $\theta_t \neq \theta_{t-1}$ (thus $g_{t} \in Z_{\theta_{t-1}}(t-1) \cap Z_{\theta_{t}}(t)$),
then each of the at most $\gamma d$ agents %in the neighborhood of $\theta_{t}$ at the beginning of round $t$,
will visit $\theta_{t}$ in round $t$ independently with probability $1/d$, \emph{except} for agent $g_{t}$ who visits $\theta_{t}$ with probability 1.
In both cases, the number $\delta_{t+1}$ of agents that visit $\theta_t$ is dominated by $\Bin(\gamma d,1/d)+1$.
It follows that for any $t\geq 1$ and $i\geq 1$,
\begin{align*}
    \Pr{\delta_{t+1} > i \mid \delta_1,\ldots,\delta_{t}}
    &\leq
    \Pr{\Bin(\gamma d,1/d) + 1> i}
%    \\&
    =
    \Pr{\Bin(\gamma d,1/d) \geq i}
    \\&
    \leq
    \binom{\gamma d}{i}\cdot\frac{1}{d^{i}}
    \leq
    \left(\frac{e\gamma d}{i}\right)^i\cdot\frac{1}{d^{i}}
    =
    \left(\frac{e\gamma}{i}\right)^i.
\end{align*}
Similarly, for $\delta_1$
%, since $|A|\leq \gamma  n/(2e) < \gamma n$ by~\eqref{eq:a},
we have
\[
    \Pr{\delta_1\geq i}
    =
    \Pr{\Bin(\alpha n,1/n) \geq i}
    \leq
    \left(\frac{e\alpha}{i}\right)^i
    <
    \left(\frac{e\gamma}{i}\right)^i.
\]
Let
$
    p_i=
    \left(\frac{e\gamma}{i}\right)^i.
$
It follows from the above that for any $\ell\geq 1$,
\begin{align}
    \label{eq:prztell}
    \Pr{|\{t\in\{1,\ldots,k\} \colon  \delta_t > i\}| \geq \ell }
    &\leq
    \Pr{\Bin(k,p_i) \geq \ell}
%    \notag\\&
    \leq
    \binom{k}{\ell}\cdot p_i^{\ell}
%    \\&
    \leq
    \left(\frac{e kp_i}{\ell}\right)^\ell.
\end{align}
For $\ell \geq \beta k/i$ and $i\geq e^2\gamma$,
\begin{align*}
    \left(\frac{e kp_i}{\ell}\right)^\ell
    &\leq
    \left(\frac{e k(e\gamma/i)^i}{\beta k/i}\right)^\ell,
%    &&
    \qquad\text{by } p_i=\left(\frac{e\gamma}{i}\right)^i
    \text{ and } \ell \geq \beta k/i
    \\&
    =
    \left(\frac{e^2 \gamma}{\beta}\cdot \left(\frac{e\gamma}{i}\right)^{i-1}\right)^\ell
    \leq
    \left(\frac{e\gamma}{i}\right)^{(i-1)\ell},
%    &&
    \qquad\text{by } \beta\geq e^2\gamma
    \\&
    \leq
    \left(\frac{e\gamma}{i}\right)^{(1-1/i)\beta k},
%    &&
    \qquad\text{by } \ell \geq \beta k/i
    \\&
    \leq
    \left(\frac{1}{e}\right)^{(1-1/e^2)\beta k},
%    &&
    \qquad\text{by } i\geq e^2\gamma\geq e^2
    \\&
    \leq
    2^{-\beta k}.
\end{align*}
Substituting that to~\eqref{eq:prztell} completes the proof of Lemma~\ref{lem:distributionzt}.
\end{proof}

We proceed now to the proof of the main claim.
For any $A_k\in \AA_k$, and for $\eta = (\theta_0,g_1,\theta_1,\ldots,\theta_k)$ the labeled canonical walk
with semi-concise descriptor $A_k$, let $\EE_{A_k}$ denote the event:
\[
    |\{t\in\{1,\ldots, k\} \colon  2^{\ell-1}< \delta_t \leq 2^{\ell}\}| \leq k_{2^\ell},
    \
    \text{for all}\,
    \ell \in \{\log b+1,\ldots,\log (\alpha n)\}
    .
\]
Applying Lemma~\ref{lem:distributionzt}, for $i=2^{\ell-1}$ and $\beta = b/2$, for each $\ell \in \{\log b+1,\ldots,\log (\alpha n)\}$, and then using a union bound, we obtain
\[
    \Pr{\EE_{A_k}}
    \geq
    1 - 2^{-bk/2} \log (\alpha n).
\]
By another union bound and Lemma~\ref{lem:cardBk},
\begin{align}
    \label{eq:bigcupEEAk}
    \Pr{\bigcap_{A_k\in \BB_k}\EE_{A_k}}
    &\geq
    1- |\BB_k|\cdot 2^{-bk/2}\log(\alpha n)
%    \notag\\&
    \geq
    1- (4b)^{2k}\cdot 2^{-bk/2}\log(\alpha n)
    \notag\\&
    \geq
    1 - 2^{-bk/4}  \log(\alpha n),
\end{align}
where the last inequality holds if $b\geq 64$.
Next we show that event $\bigcap_{A_k\in \BB_k}\EE_{A_k}$ implies that every labeled canonical walk $\eta$ has a concise descriptor $A_k\in \BB_k$.
From this and~\eqref{eq:bigcupEEAk}, the lemma follows.

%To complete the proof it suffices to show that event $\bigcap_{A_k\in \BB_k}\EE_{A_k}$ implies the event that every labeled canonical walk $\eta$ has a concise descriptor $A_k\in \BB_k$.

Fix a realization of \tvisit conditioned on the event $\bigcap_{A_k\in \BB_k}\EE_{A_k}$.
Suppose, for contradiction, that there is some labeled canonical walk $\eta' = (\theta_0',g_1',\theta_1',\ldots,g_k',\theta_k)$ that does \emph{not} have a concise descriptor.
Let $\eta = (\theta_0,g_1,\theta_1,\ldots,g_k,\theta_k)$ be a labeled canonical walk that \emph{does} have a concise descriptor $A_k\in \BB_k$, and shares a \emph{maximal common prefix} with $\eta'$.
Consider the first element where $\eta'$ and $\eta$ are different.
We first argue that this element is not a vertex:
Suppose, for contradiction, that $(\theta_0',\ldots,g_i')=(\theta_0,\ldots,g_i)$ and $\theta'_i\neq \theta_i$, for some $0\leq i\leq k$.
Then $i\neq 0$, as $\theta_0'=s=\theta_0$.
Moreover, if $i>0$, then by definition, $(\theta_0',\ldots,g_i')=(\theta_0,\ldots,g_i)$ implies $\theta'_i = \theta_i$, contradicting our assumption.
Thus, the first element where $\eta'$ and $\eta$ are different must be an agent.
Suppose
$(\theta_0',g_1',\ldots,\theta_{i-1}')=(\theta_0,g_1,\ldots,\theta_{i-1})$ and $g'_{i}\neq g_{i}$, for some $1\leq i\leq k$.
Then, by the maximal prefix assumption, the labeled canonical walk $(\theta_0,\ldots,\theta_{i-1},g'_i, \theta'_i,\bot,\theta'_i,\bot,\ldots,\bot,\theta'_i)$, which stays put at vertex $\theta'_i$ in rounds $i+1$ up to $k$, has no concise descriptor.
%This is the case, however, only if
This can only be true if
$
    |\{t\in\{1,\ldots,i-1\} \colon  2^{\ell-1}< \delta_t \leq 2^{\ell}\}| > k_{2^\ell},
$
for some $\ell \in \{\log b+1,\ldots,\log n\}$.
But this contradicts event $\EE_{A_k}$.
Therefore, there exists no labeled canonical walk $\eta'$ of length $k$ such that  $\eta'$ has no concise descriptor.

\subsection{Upper Bound on Congestion}

The next lemma gives un upper bound on the congestion of a single canonical walk of length $k$.
%The last piece we need is the following upper bound on the congestion of a single canonical walk. Similar to Lemma~\ref{lem:consisewhp}, its proof relies on assumption~\eqref{eq:alphaCondition} of \tvisit.

\begin{lemma}
    \label{lem:congestionbound}
    Fix any $A_k\in\BB_k$, and let $\eta$
    be the labeled canonical walk with concise descriptor $A_k$  in \tvisit.
    Then, for any $\beta \geq 2e\gamma+1$,
    $
        \Pr{Q(\eta) \leq \beta k} \geq 1- 2^{-(\beta-1) k}.
    $
\end{lemma}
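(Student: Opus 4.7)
The plan is to observe that the congestion $Q(\eta)$ is exactly the sum of the $\delta_t$'s from the previous lemma. Specifically, by the definitions,
\[
    Q(\eta)
    = \sum_{0\leq t<k} |Z_{\theta_t}(t)|
    = \sum_{t=1}^{k} |Z_{\theta_{t-1}}(t-1)|
    = \sum_{t=1}^{k} \delta_t,
\]
so it suffices to get a tail bound on $\sum_{t=1}^k \delta_t$ in \tvisit.

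The key input is already established inside the proof of Lemma~\ref{lem:distributionzt}: conditioned on $\delta_1,\dots,\delta_t$, the next term $\delta_{t+1}$ is stochastically dominated by $\Bin(\gamma d, 1/d)+1$, because condition~\eqref{eq:alphaCondition} caps the number of agents in $\Gamma(\theta_t)$ by $\gamma d$, each moving to $\theta_t$ independently with probability $1/d$, with at most one deterministic contribution from the labeled agent $g_t$. Iterating this domination along $t=1,\dots,k$, we obtain that $Q(\eta)=\sum_{t=1}^k \delta_t$ is stochastically dominated by $\Bin(k\gamma d, 1/d) + k$, a sum of independent Bernoullis (plus the deterministic $k$).

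It then remains to apply a standard Chernoff-type tail bound. Writing $m=(\beta-1)k$ and using $\binom{n}{m}\leq (ne/m)^m$,
\[
    \Pr{Q(\eta) > \beta k}
    \leq \Pr{\Bin(k\gamma d, 1/d) \geq (\beta-1)k}
    \leq \binom{k\gamma d}{(\beta-1)k}\cdot d^{-(\beta-1)k}
    \leq \left(\frac{e\gamma}{\beta-1}\right)^{(\beta-1)k}.
\]
Under the hypothesis $\beta \geq 2e\gamma+1$, i.e.\ $\beta-1\geq 2e\gamma$, the base $e\gamma/(\beta-1)\leq 1/2$, so the bound is at most $2^{-(\beta-1)k}$, which gives the desired inequality.

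The argument is essentially the same mechanism used in Lemma~\ref{lem:distributionzt}, just applied to the full sum rather than to the count of large $\delta_t$'s. The only subtle point — and the one to make explicit — is the chain of conditional stochastic dominations that legitimizes treating $\sum \delta_t$ as dominated by a single binomial; this follows because the domination bound on $\delta_{t+1}$ is uniform in the conditioning on $\delta_1,\ldots,\delta_t$, so one can couple the $\delta_t$'s with independent $\Bin(\gamma d,1/d)+1$ draws step by step.
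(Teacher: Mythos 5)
Your proof is correct and follows essentially the same route as the paper's: write $Q(\eta)=\sum_{t}\delta_t$, dominate it by $k$ plus a sum of independent binomials with total mean at most $\gamma k$ using the conditional domination established in the proof of Lemma~\ref{lem:distributionzt}, and finish with a Chernoff-type tail bound (the paper invokes the packaged bound $\Pr{X\geq 2e\mu}\leq 2^{-2e\mu}$ where you compute the binomial tail directly via $\binom{N}{m}p^m\leq(e\mu/m)^m$ --- the same mechanism). The only cosmetic discrepancy is that the paper models the first term $\delta_1$ by $\Bin(\gamma n,1/n)$ rather than $\Bin(\gamma d,1/d)+1$, since round zero is governed by the stationary placement over $n$ vertices rather than by condition~\eqref{eq:alphaCondition}; as both variants have mean at most $\gamma$ and obey the identical tail estimate, this does not affect the argument.
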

\begin{proof}
Let $\eta = (\theta_0,g_1,\theta_1,\ldots,g_k,\theta_k)$.
Then
$
    Q(\eta) = \sum_{1\leq t\leq k} \delta_t,
$
where $\delta_t = |Z_{\theta_{t-1}}(t-1)|$.
By the same reasoning as in the proof of Lemma~\ref{lem:distributionzt}, $Q(\eta)$ is stochastically dominated by $k+\sum_{1\leq t\leq k} B_t$, where $B_1,\ldots,B_k$  are independent binomial random variables, such that $B_1\sim \Bin(\gamma n,1/n)$ and, for $t>1$, $B_t\sim \Bin(\gamma d,1/d)$.
It follows that
$
    \Exp{Q(\eta) - k} \leq  k\gamma,
$
and
\[
    \Pr{Q(\eta) \geq \beta k}
    =
    \Pr{Q(\eta)-k \geq (\beta-1) k}
    \leq
    2^{-(\beta-1) k},
\]
by a Chernoff bound,
since $(\beta-1)k \geq 2e\cdot\Exp{Q(\eta) - k}$.
\end{proof}

\subsection{Putting the Pieces Together -- Proof of Theorem~\ref{thm:Tpush<Tmsg}}

We consider first the case where $k$ is
at most logarithmic.
In Theorem~\ref{thm:visitx-lower-bound}, we show that $\Tvisit  = \Omega(\log n)$ w.h.p., by arguing that some vertices are not visited by any agent (informed or not) during the first logarithmic number of rounds.
Thus, there is some constant $\epsilon > 0$ such that if $k \leq \epsilon \log n$, $\Pr{\Tvisit \leq k} \leq n^{-\lambda}$.
From this, the theorem's statement follows for $k \leq \epsilon \log n$.
In the rest of the proof, we assume that $k\geq \epsilon \log n$.

We have
$
    \Tpush
    =
    \max_{u\in V} \tau_u,
$
and from Lemma~\ref{lem:tauleqC},
\[
    \Tpush
    \leq
    \max_{u\in V} C_u(t_u)
    .
\]
Since for any fixed realization of \visit and any $u\in V$, $C_u(t)$ is a non-decreasing function of $t$, and since $t_u \leq \Tvisit$, it follows
\[
    \Tpush
    \leq
    \max_{u\in V} C_u(\Tvisit)
    .
\]
By Lemma~\ref{lem:DeqC}, for any $u\in V$, there is a canonical walk $\theta$ of length $t = \Tvisit$ with congestion $Q(\theta) = C_u(\Tvisit)$.
Thus, there is also a \emph{labeled} canonical  walk $\eta$ of length $\Tvisit$ with $Q(\eta) = Q(\theta) = C_u(\Tvisit)$.
It follows
\begin{equation}
    \label{eq:TpushD}
    \Tpush
    \leq
    \max_{\eta\in \HH(\Tvisit)} Q(\eta)
    ,
\end{equation}
where $\HH(t)$ denotes the set of all labeled canonical walks of length $t$ in \visit.

Next we bound $\max_{\eta\in \HH(k)} Q(\eta)$.
Consider \tvisit, and
for any $A_k\in\BB_k$, let $\eta_{A_k}$ be the labeled canonical walk with concise descriptor $A_k$ in \tvisit.
From Lemma~\ref{lem:congestionbound},
for any $A_k\in\BB_k$ and $\beta \geq 2e\gamma+1$,
%the congestion of $\eta_{A_k}$ satisfies
$
    \Pr{Q(\eta_{A_k}) \leq \beta k} \geq 1- 2^{-(\beta-1) k}.
$
Then
\begin{align*}
    \Pr{\max_{A_k\in \BB_k} Q(\eta_{A_k}) \leq \beta k}
    &\geq
    1 - 2^{-(\beta-1) k}\cdot|\BB_k|
%    \\&
    \geq
    1 - 2^{-(\beta-1) k}\cdot(4b)^{2k},
\end{align*}
%where for the last inequality we used %the upper bound on $|\BB_k|$ from
by Lemma~\ref{lem:cardBk}.
Choosing constant $\beta$ large enough so that $(\beta-1)/2 \geq 2\log (4b)$, yields
\[
    \Pr{\max_{A_k\in \BB_k} Q(\eta_{A_k}) \leq \beta k}
    \geq
    1 - 2^{-(\beta-1)k/2}.
\]
From Lemma~\ref{lem:consisewhp},
the probability that
all labeled canonical walks of length $k$
have concise descriptors is at least
$1-2^{-bk/4} \log(\alpha n)$, if $b \geq \max\{2\gamma e^2,64\}$.
It follows
\[
    \Pr{\max_{A_k\in \BB_k} Q(\eta_{A_k}) = \max_{\eta\in \HH^\ast(k)} Q(\eta)}
    \geq
    1-2^{-bk/4} \log(\alpha n)
    ,
\]
where $\HH^\ast(t)$ is the set of all labeled canonical walks of length $t$ in \tvisit.
By Lemma~\ref{lem:tweacked}, however, we can couple \visit and \tvisit, by using the same collection of random walks for both, such that the two processes are identical until round $k$ with probability at least $1-kn\cdot 2^{-ad}$.
Thus
\[
    \Pr{ \HH(k) = \HH^\ast(k)}
    \geq
    1-kn\cdot 2^{-\gamma d}.
\]
Combining the last three inequalities above, we obtain
\[
    \Pr{\max_{\eta\in \HH(k)} Q(\eta) \leq \beta k}
    \geq
    1 - 2^{-(\beta-1)k/2}
    -2^{-bk/4} \log(\alpha n)
    - kn\cdot e^{-\gamma d}.
\]
Since $k\geq \epsilon \log n$ and $d\geq \varepsilon \log n$,
for any given constant $\lambda>0$
we can choose constants $\beta, b, \gamma$ large enough such that
\begin{equation}
    \label{eq:allk}
    \Pr{\max_{\eta\in \HH(k)} Q(\eta) \leq \beta k}
    \geq
    1 - n^{-\lambda}.
\end{equation}
From~\eqref{eq:TpushD} and~\eqref{eq:allk}, we obtain
\begin{align*}
    \Pr{\Tpush \leq \beta k}
    &\geq
    \Pr{\max_{\eta\in \HH(\Tvisit)} Q(\eta) \leq \beta k},
    \text{\qquad by ~\eqref{eq:TpushD}}
    \\&
    \geq
    \Pr{\{\Tvisit \leq k\} \cap \left\{\max_{\eta\in \HH(k)} Q(\eta) \leq \beta k\right\}}
    \\&
    \geq
    \Pr{\Tvisit \leq k } - \Pr{\max_{\eta\in \HH(k)} Q(\eta) > \beta k}
    \\&
    \geq
    \Pr{\Tvisit \leq k } - n^{-\lambda},
    \text{\qquad  by ~\eqref{eq:allk}}.
\end{align*}
This completes the proof of Theorem~\ref{thm:Tpush<Tmsg}.

\section{Bounding \texorpdfstring{$\Tvisit$}{Tvisitx} by \texorpdfstring{$\Tpush$}{Tpush} on Regular Graphs}
\label{sec:msg<push}

The following theorem upper bounds the broadcast time of \visit in a regular graph by the broadcast time of \push.

\begin{theorem}
    \label{thm:Tmsg<Tpush}
    For any constants $\alpha,\beta,\lambda > 0$ with $\alpha\cdot \beta$ sufficiently large, there is a constant $c > 0$, such that for any $d$-regular graph $G = (V, E)$ with $|V| = n$ and $d \geq \beta \log n$,  and for any source $s\in V$, the broadcast times of \push and \visit, with $|A| \geq \alpha n$ agents, satisfy
    \[
        \Pr{\Tvisit \leq ck} \geq \Pr{\Tpush \leq k} - n^{-\lambda},
    \]
    for any $k\geq 0$.
    \htodo{$k$ is used in the proofs as the length of path.}
\end{theorem}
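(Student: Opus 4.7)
The plan is to reuse the coupling from Section~\ref{sec:push<msg-coupling}, namely $\pi_u(i) = p_u(i) = w_u(i)$, and show that under this coupling, the fastest informing path in \push can be simulated in \visit with only a constant-factor slowdown. For each vertex $u$ that \push informs by round $\tau_u$, I would identify a fastest informing path $(s = v_0, v_1, \ldots, v_\ell = u)$ with times $0 = \tau_0 < \tau_1 < \cdots < \tau_\ell = \tau_u$, such that $v_i$ is $v_{i-1}$'s $f_i$-th sample after becoming informed in \push, where $f_i = \tau_i - \tau_{i-1}$. By the coupling, $v_i = p_{v_{i-1}}(f_i)$, so in \visit the $f_i$-th informed-agent departure from $v_{i-1}$ (after $v_{i-1}$ is informed) necessarily travels to $v_i$. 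Writing $t_{i-1}$ for the round $v_{i-1}$ is informed in \visit and $\Delta_i$ for the additional delay until this $f_i$-th departure, the goal reduces to proving $\sum_i \Delta_i \leq c\sum_i f_i = c\tau_u \leq ck$ w.h.p.

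To bound each $\Delta_i$, I would use the stationary-distribution assumption: the expected number of agents at $v_{i-1}$ in any round is $\alpha = |A|/n$, and by a Chernoff bound over the independent random walks, the number of visits during a window of $r$ rounds is at least $\alpha r / 2$ w.h.p., provided $\alpha r = \Omega(\log n)$. This yields $\Delta_i = O(f_i/\alpha)$ whenever $f_i$ is at least logarithmic. The hypothesis that $\alpha \cdot \beta$ is sufficiently large ensures that $\alpha d = \Omega(\log n)$, so that concentration is strong enough in each neighborhood, analogously to the role of $\gamma$ in the \tvisit tweak of Section~\ref{sec:tweaked}.

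The \emph{every-other-round} trick, alluded to in the overview of Theorem~\ref{thm:intro}, would be applied here: rather than coupling round-by-round, I would couple each step of \push to a pair of consecutive rounds of \visit, incurring only a constant-factor loss. Grouping rounds in pairs helps because an agent at $v_{i-1}$ in round $t$ is absent in round $t+1$ but may return in round $t+2$; looking at pairs smooths out this "forced departure" effect, decorrelates the visit counts across pairs, and makes the concentration argument for the $\Delta_i$ far cleaner than if one tried to work round-by-round.

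The main obstacle, as I see it, is controlling $\sum_i \Delta_i$ when many $f_i$ are small (so per-step Chernoff bounds are too weak) and simultaneously taking a union bound over the exponentially many candidate paths of length $\leq k$ in $G$. My approach would be to aggregate consecutive short segments into super-segments of total length $\Omega(\log n)$, apply Chernoff within each super-segment to its total waiting time, and then take a union bound via a descriptor-style counting argument analogous to Lemma~\ref{lem:cardBk} and Lemma~\ref{lem:consisewhp}, using that $\sum_i f_i \leq k$ bounds the complexity of paths that can arise. Finally, the small-$k$ regime $k = O(\log n)$ needs a separate argument: standard diameter and vertex-expansion lower bounds give $\Pr{\Tpush \leq k} \leq n^{-\lambda}$ for $k$ below an appropriate constant times $\log n$, so the theorem holds vacuously there.
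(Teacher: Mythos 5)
Your overall shape --- follow, for each vertex $u$, the fastest informing path of \push and bound the extra delay \visit needs to push an informed agent across each edge of that path --- is the same as the paper's. But there is a genuine gap at the heart of the argument: the dependency from the future. Under the coupling $\pi_u(i)=p_u(i)=w_u(i)$ that you propose to reuse, the \emph{same} random variables $w_u(i)$ determine both the \push samples (and hence which path is the fastest informing path $P$) and the destinations of informed agents leaving $u$ in \visit. Once you fix $P$ to be the fastest path, you have conditioned on an event involving exactly those variables, and your claim that ``the number of visits during a window of $r$ rounds is at least $\alpha r/2$ w.h.p.\ by a Chernoff bound over the independent random walks'' is no longer justified: the walks are not independent of the conditioning. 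The paper's resolution is not the full coupling plus pairing of rounds for ``smoothing''; it is a genuinely different coupling (Section~\ref{sec:visit<push-coupling}), $\pi_u(i)=p_u^{odd}(i)$, in which only the odd-round moves of agents are tied to \push and the even-round moves remain independent of it. One can then condition on the \emph{entire} \push realization (hence on $P$), and the arrivals of agents at each path vertex in even rounds are still fresh; combined with the \rvisit lower bound of $\Omega(\alpha d)$ agents in every neighborhood, each inter-visit gap is dominated by twice a geometric random variable, and Lemmas~\ref{lem:stochastic-sum} and~\ref{lem:geom-chernoff} give $t_u \le c(\tau_u+\log n)$ with probability $1-n^{-\gamma}$ per vertex. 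You name the every-other-round trick but assign it the wrong role and do not change the coupling, so your concentration claims for the $\Delta_i$ do not go through as stated.

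A second, related problem is your plan to union-bound over all candidate paths of length $\le k$ via a descriptor argument. That machinery belongs to the \emph{other} direction (Theorem~\ref{thm:Tpush<Tmsg}), where the quantity being controlled (congestion) does not depend on which neighbor a walk steps to, so the walks can be encoded with $(4b)^{2k}$ descriptors. Here the object is a path specified by a sequence of neighbor choices, of which there are up to $d^{k}$, and the per-path failure probability you can hope for from geometric waiting times with a constant $c$ is only $2^{-\Theta(k)}$; the union bound cannot close. The point of the parity-split coupling is precisely that no union bound over paths is needed: after fixing the \push realization there is \emph{one} path per target vertex, and a union bound over the $n$ vertices suffices. (Your handling of the small-$k$ regime via $\Tpush=\Omega(\log n)$ w.h.p.\ matches the paper and is fine.)
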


From Theorem~\ref{thm:Tmsg<Tpush}, it is immediate that if $\Tpush \leq T$ w.h.p., then $\Tvisit = O(T)$ w.h.p.
Moreover, using Theorem~\ref{thm:Tmsg<Tpush} and the well-known $O(n^2\log n)$ upper bound w.h.p.\ on the cover time for a single random walk on a regular graph, which also applies to $\Tvisit$, one can easily obtain that $\Exp{\Tvisit} = O(\Exp{\Tpush})$.

%It is known that $\Tpush \geq \log n - 1$, as the
%We show in Theorem~\ref{thm:meetx-lower-bound}, in the appendix, that under the same assumptions as in Theorem~\ref{thm:Tpush<Tmsg}, we have $\Tvisit = \Omega(\log n)$ w.h.p.
%Combining Theorems~\ref{thm:Tpush<Tmsg} and~\ref{thm:visitx-lower-bound}, we obtain that if $\Tvisit \leq T$ w.h.p., then $\Tpush = O(T)$ w.h.p., as well.
%From the same two theorems, and the known result that $\Tpush = O(n\log n)$ w.h.p.~\cite{Feige1990}, we can also obtain that $\Exp{\Tpush} = O(\Exp{\Tvisit})$.
%
%
%We show in Theorem~\ref{thm:visitx-lower-bound} (in Appendix), that under the same assumptions as in Theorem~\ref{thm:Tpush<Tmsg}, we have $\Tvisit = \Omega(\log n)$ w.h.p.
%Combining Theorems~\ref{thm:Tpush<Tmsg} and~\ref{thm:visitx-lower-bound}, we obtain that if $\Tvisit \leq T$ w.h.p., then $\Tpush = O(T)$ w.h.p., as well.
%From the same two theorems, and the known result that $\Tpush = O(n\log n)$ w.h.p.~\cite{Feige1990}, we can also obtain that $\Exp{\Tpush} = O(\Exp{\Tvisit})$.

\paragraph{Proof Overview Of Theorem~\ref{thm:Tmsg<Tpush}.}

We use a coupling which is  similar to that in the proof of the converse result, stated in Theorem~\ref{thm:Tpush<Tmsg}, but with a twist (which we describe momentarily).
Unlike in the proof of Theorem~\ref{thm:Tpush<Tmsg}, where we essentially consider all possible paths through which information travels, here we focus on the first path by which information reaches each vertex.
Let $P = (u_0=s,u_1,\ldots,u_k=u)$ be such a path for vertex $u$ in \push, where each vertex $u_i$ in the path gets informed by $u_{i-1}$.
Let $\delta_i$ be the number of rounds it takes for $u_{i-1}$ to sample (and inform) $u_{i}$ in \push.
We consider the same path in \visit, and compare $\delta_i$ with the number $D_i$ of rounds until some informed agent moves from $u_{i-1}$ to $u_i$, counting from the round when $u_{i-1}$ becomes informed.
Note that $\sum_i \delta_i$ is precisely the round when $u$ is informed in \push, while  $\sum_i D_i$ is an upper bound on the round when $u$ is informed in \visit.

The coupling from Section~\ref{sec:push<msg} seems suitable for this setup.
Recall, in that coupling we let the list of neighbors that a vertex $u$ samples in \push, be identical to the list of neighbors that informed agents visit in their next step after visiting $u$, in \visit.
The same intuition applies, namely, that on average each vertex is visited by $|A|/n = \Omega(1)$ agents per round, which suggests that $D_i$ should be close to $\delta_i$.
We can even apply a similar trick as in Section~\ref{sec:push<msg} to avoid some dependencies:
In each round, the number of agents in the neighborhood of a vertex is bounded below by $d \cdot |A| /n = \Omega(d)$, w.h.p.
This should imply that the number of agents that visit a vertex in a round is bounded below by a geometric distribution with constant expectation.
Let $\EE$ denote the event that the above $\Omega(d)$ bound holds for all $u$, for polynomially many rounds.

There is, however, a problem with this proof plan.
By fixing path $P$ in advance, to be the first path to inform $u$ in \push, we introduce \emph{dependencies from the future}.
So, when we analyse $D_i$ and $\delta_i$, we must condition on the event that the $i$-prefix of the path we have considered so far will indeed be a prefix of the first path to reach $u$.
These kind of dependencies seem hard to deal with.

We use the following neat idea to overcome this problem.
We only consider the \emph{odd rounds} of \visit in the coupling, i.e., we match the list of neighbors that a vertex $v$ samples in \push (in all rounds), to the list of neighbors that informed agents visit in round $2k+1$ after visiting $u$ in round $2k$, for all $k\geq 0$.
In even rounds, agents take steps \emph{independently} of the coupled \push process.

Under this coupling, we proceed as follows.
We condition on the high probability event $\EE$ defined earlier (formally, we modify \visit to ensure $\EE$ holds).
We then fix \emph{all} random choices in \push, and thus the information path $P$ to $u$.
For each even round of \visit, we have that vertex $u_i$ in $P$ is visited by at least one agent with constant probability, independently of the past and of the fixed choices in future odd rounds.
If indeed some vertex visits $u_i$ in an even round, then in the next round it will visit a vertex dictated by the coupling.
This allows us to show that under this coupling,
$\sum_i D_i \leq c\left(\sum_i \delta_i + \log n\right)$, w.h.p.
We get rid of the $\log n$ term in the final bound, by using that $\Tpush  = \Omega(\log n)$ w.h.p.
%The detailed proof can be found in the full version of the paper.\gtodo{fix that}

\subsection{Coupling Description}
\label{sec:visit<push-coupling}
We use mostly the same notation as in Section~\ref{sec:push<msg-coupling}.
For each vertex $u$, we denote by
$\tau_u$ the round when vertex $u$ gets informed in \push.
For $i \geq 1$, let $\pi_u(i)$ be the $i$th the vertex that $u$ samples (in round $\tau_u + i$).
We denote by $t_u$ the round when vertex $u$ gets informed in \visit.
For an agent $g \in A$ and round $t \geq 0$, let $x_g(t)$ be the vertex that $g$ visits in round $t$.
%Thus, $\{x_g(t)\}_{t\geq 0}$ is a random walk in $G$.
Let $Z_u(t)$ be the set of agents that visit $u$ in round $t$, i.e.,
$
    Z_u(t) = \{g\in A\colon x_g(t)=u\}.
$

The next definition differs from the corresponding one in Section~\ref{sec:push<msg-coupling}, as it distinguishes between even and odd rounds.
Fix a vertex $u \in V$, and
consider all visits to $u$ in \emph{even} rounds $t\geq t_u$, in chronological order, ordering visits in the same round with respect to a predefined total order over all agents.
We call these visits \emph{even visits} to vertex $u$.
For each $i \geq 1$, consider the agent $g$ that performs the $i$th even visit
and let $p_u^{odd}(i)$ be the vertex that $g$ visits in the next (\emph{odd}) round.
Formally, let
\[
    W_u^{even} =\{(t,g)\colon t\geq t_u, t\in \IN_{even}, x_g(t) = u\},
\]
where $\IN_{even}$ is the set of non-negative even integers.
Order the elements of $W_u^{even}$ such that $(t,g)<(t',g')$ if $t<t'$, or $t=t'$ and $g<g'$.
If $(t,g)$ is the $i$th smallest element in $W_u^{even}$, then $p_u^{odd}(i) = x_g(t+1)$.
%Define $p_u^{even}(i)$ similarly.

\paragraph{Coupling.}
We couple processes \push and \visit by setting $\pi_u(i) = p_u^{odd}(i)$.
Formally, let $\{w_u(i)\}_{u\in V,i\geq 1}$, be a collection of independent random variables
each taking a uniformly random value from the set $\Gamma(u)$ of $u$'s neighbors in $G$.
For all $u\in V$ and $i\geq 1$, we set
\[
    \pi_u(i) = p_u^{odd}(i) = w_u(i)
    .
\]

\subsection{Lower Bound on Agents and Re-Tweaked Visit-Exchange}
\label{sec:tweaked2}

We will use the following simple lower bound on the number of agents visiting a given set of vertices $S$ in a round of \visit.
The proof is almost the same as its counterpart Lemma~\ref{lem:msgdensityub}.

\begin{lemma}
    \label{lem:msgdensitylb}
    For any $S\subseteq V$ and $t\geq0$,
    \[
        \Pr{\sum_{v\in S}|Z_v(t)|
        \geq
        |A|\cdot|S|/(2n)}
        \geq
        1-e^{-|A|\cdot|S|/(8n)}.
    \]
\end{lemma}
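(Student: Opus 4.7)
The plan is to mirror exactly the proof of Lemma~\ref{lem:msgdensityub}, but invoking the Chernoff lower tail instead of the upper tail. First I would observe that since $G$ is $d$-regular, the stationary distribution is uniform, so for any agent $g\in A$ and any $t\geq 0$ we have $\Pr{x_g(t)\in S} = |S|/n$. Hence
\[
\Exp{\sum_{v\in S}|Z_v(t)|} = \sum_{g\in A} \Pr{x_g(t)\in S} = \frac{|A|\cdot|S|}{n} =: \mu.
\]

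Next, since the random walks of different agents are independent, the random variable $X = \sum_{v\in S}|Z_v(t)| = \sum_{g\in A}\indicator{x_g(t)\in S}$ is a sum of independent $\{0,1\}$-valued indicators with expectation $\mu$. I would then apply the standard multiplicative Chernoff lower tail bound,
\[
\Pr{X \leq (1-\delta)\mu} \leq e^{-\delta^2\mu/2},
\]
with $\delta = 1/2$, which yields
\[
\Pr{X < \mu/2} \leq e^{-\mu/8} = e^{-|A|\cdot|S|/(8n)}.
\]
Taking complements gives the stated bound. There is no real obstacle here; the only subtlety is that we rely crucially on (i) the stationarity of the initial distribution (so that marginals remain uniform at every time $t$, even though individual walks have temporal correlations) and (ii) the mutual independence \emph{across agents}, which is what legitimises the use of Chernoff.
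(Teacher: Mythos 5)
Your proof is correct and follows essentially the same route as the paper's: stationarity plus regularity gives $\Pr{x_g(t)\in S}=|S|/n$ for each agent, independence across agents makes the count a sum of independent indicators with mean $|A|\cdot|S|/n$, and the multiplicative Chernoff lower tail with $\delta=1/2$ gives exactly the stated bound. No issues.
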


\begin{proof}
Since each agent's walk starts from the stationary distribution and $G$ is a regular graph,
we have that for any given agent $g \in A$ and round $t \geq 0$,
%Since each agent executes an independent random walk on $G$ starting from the stationary distribution, and $G$ is a regular graph, it follows that for any given agent $g\in A$, we have
$\Pr{x_g(t)\in S} = |S|/n$.
Therefore the expected number of agents visiting $S$ in round $t$ is
\[
    \Exp{|\{g\in A \colon x_g(t)\in S\}|} = |A|\cdot |S|/n.
\]
By the independence of the walks, we can use a standard Chernoff bound to show that $|\{g\in A \colon x_g(t)\in S\}|\geq |A|\cdot |S|/(2n)$,
with probability at least $1-e^{-|A|\cdot|S|/(8n)}$.
%and by the independence of the random walks, we can use a standard Chernoff bound to show that $|\{g\in A \colon x_g(t)\in S\}|\geq |A|\cdot |S|/(2n)$,
%with probability at least $1-e^{-|A|\cdot|S|/(8n)}$.
\end{proof}

\paragraph{Re-Tweaked Visit-Exchange Process.}

Similar to the analysis in Section~\ref{sec:push<msg}, it is convenient to work with a slightly modified version of \visit.
We call the new process \rvisit and is identical to \visit except for the following modification.
If in some odd round $t\geq 0$, there is a vertex $u\in V$ for which the next condition is \emph{not} true,
\begin{equation}
    \label{eq:alphaConditionRT}
    \sum_{v \in \Gamma(u)} |Z_v(t)| \geq \frac{|A|}{2n}\cdot d
    %\alpha \cdot d , \text{\quad where } \alpha = |A|/(2n),
\end{equation}
then before round $t + 1$, we add a minimal set of new agents to the graph such that the above condition holds for all vertices $u$.
An agent $g$ added to vertex $u$ adopts the state (informed or non-informed) of $u$ at the end of round $t$.

Recall that $|A| \geq \alpha n$.
The next lemma allows us to consider the \rvisit process in the rest of the proof, and argue that the results also hold for \visit.
\begin{lemma}
    \label{lem:retweacked}
%    The probability that no agent is relocated during any of the first $k$ rounds of \rvisit is at least
%    $1 - kn\cdot e^{-\alpha d / 8}$.

    The probability that Eq.\eqref{eq:alphaConditionRT} holds simultaneously  for all $u\in V$ and $0 \leq t < k$ is at least $1-kn\cdot 2^{-\alpha d/8}$.
\end{lemma}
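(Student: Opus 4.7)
The plan is to mirror the proof of Lemma~\ref{lem:tweacked} almost verbatim, but using the lower-bound Chernoff estimate of Lemma~\ref{lem:msgdensitylb} in place of the upper-bound one. First I would fix a vertex $u \in V$ and a round $t$ with $0 \leq t < k$, and apply Lemma~\ref{lem:msgdensitylb} with $S = \Gamma(u)$. Since $G$ is $d$-regular, $|S| = d$, so the lemma yields
\[
    \Pr{\sum_{v\in \Gamma(u)} |Z_v(t)| \geq \frac{|A|\cdot d}{2n}} \geq 1 - \exp\!\left(-\frac{|A|\cdot d}{8n}\right).
\]
Using the assumption $|A| \geq \alpha n$, the exponent is at most $-\alpha d/8$, and since $e > 2$, we further have $e^{-\alpha d/8} \leq 2^{-\alpha d/8}$. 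Hence Eq.~\eqref{eq:alphaConditionRT} fails at the pair $(u,t)$ with probability at most $2^{-\alpha d/8}$.

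Next I would take a union bound over the at most $n$ choices of $u \in V$ and the $k$ choices of $t \in \{0, 1, \ldots, k-1\}$, giving a total failure probability at most $kn \cdot 2^{-\alpha d/8}$. Equivalently, Eq.~\eqref{eq:alphaConditionRT} holds simultaneously for all such pairs $(u,t)$ with probability at least $1 - kn \cdot 2^{-\alpha d/8}$, which is exactly the claimed bound.

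I do not expect any real obstacle here, since the required Chernoff estimate is already packaged in Lemma~\ref{lem:msgdensitylb}, and the proof is a direct application plus union bound. The only minor point to be careful about is that Lemma~\ref{lem:msgdensitylb} applies to the unmodified \visit process (where walks are independent and start from stationarity), whereas the condition we care about is stated for the \rvisit process; however, since \rvisit only \emph{adds} agents to enforce Eq.~\eqref{eq:alphaConditionRT}, the counts in \visit are stochastically dominated by those in \rvisit, so a lower-tail bound proved for \visit immediately transfers. Thus the lemma follows as stated.
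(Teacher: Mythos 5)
Your proof is correct and follows essentially the same route as the paper: apply Lemma~\ref{lem:msgdensitylb} with $S=\Gamma(u)$ for each vertex and round, use $|A|\geq \alpha n$ and $|S|=d$ to get the per-pair failure probability $e^{-\alpha d/8}\leq 2^{-\alpha d/8}$, and finish with a union bound over the $kn$ pairs. Your closing remark about transferring the bound from \visit to \rvisit is a reasonable extra care point (the paper implicitly evaluates the condition on the unmodified \visit process, so the estimate applies directly), but it does not change the argument.
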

\begin{proof}
For each $u \in V$, if we set $S = \Gamma(u)$,
then Lemma~\ref{lem:msgdensitylb} implies that the condition \eqref{eq:alphaConditionRT}
holds with probability at least $1 - e^{|A|\cdot |S| / (8n)} \geq 1 - e^{\alpha d / 8}$.
The claim in the lemma follows after applying union bound for each $0 \leq t < k$ and each $u \in V$.
\end{proof}

%\subsection{Infection of a Vertex}

\subsection{Proof of Theorem~\ref{thm:Tmsg<Tpush}}
We first compare the times until a given vertex $u$ gets informed in \push and  in \rvisit.

\begin{lemma}
    \label{lem:retweak<push}
    The coupling described in Section~\ref{sec:visit<push-coupling}, when applied to \push and \rvisit, yields the following property.
    For any constant $\gamma > 0$, there is a constant $c > 0$ such that for any $u\in V$,
    \[
        \Pr{t_u' \geq c(\tau_u + \log n)} \leq n^{-\gamma},
    \]
    where $\tau_u$ and $t'_u$ are the rounds when $u$ is informed in the coupled processes \push and in \rvisit, respectively.
\end{lemma}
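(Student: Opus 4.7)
The plan is to fix the randomness that determines \push and bound the speed of \rvisit using only the remaining randomness. Fix a realization of the family $\{w_u(i)\}_{u,i}$; this freezes the \push execution entirely, so in particular the first informing path $P = (u_0=s, u_1, \ldots, u_k=u)$ to $u$, and the per-edge waiting times $\delta_i := \tau_{u_i} - \tau_{u_{i-1}}$, become deterministic functions of the fixed $w$'s, with $\sum_{i=1}^k \delta_i = \tau_u$ and $u_i = \pi_{u_{i-1}}(\delta_i) = w_{u_{i-1}}(\delta_i)$. I then bound, for the same path $P$, the speed at which \rvisit informs the vertices along $P$.

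The reduction uses the coupling of Section~\ref{sec:visit<push-coupling} directly: the $j$-th agent to perform an even visit to $u_{i-1}$ at or after $t'_{u_{i-1}}$ moves in the next odd round to $w_{u_{i-1}}(j)$; in particular, the $\delta_i$-th such agent moves to $u_i$ and thereby informs it. Consequently, once $u_{i-1}$ is informed in \rvisit, the number of additional rounds needed to inform $u_i$ is at most twice the number of even rounds needed to accumulate $\delta_i$ visits to $u_{i-1}$ (plus an additive constant).

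The benefit of the even/odd split now materializes. Fix any odd round $t \geq t'_{u_{i-1}}$. By~\eqref{eq:alphaConditionRT}, the neighborhood $\Gamma(u_{i-1})$ contains at least $|A|d/(2n) \geq \alpha d/2$ agents at round $t$, and each of them moves to $u_{i-1}$ in round $t+1$ independently with probability $1/d$. Therefore, conditional on the entire history up to round $t$ and on the fixed $\{w_u(i)\}$, at least one agent visits $u_{i-1}$ in round $t+1$ with probability at least $p := 1 - (1-1/d)^{\alpha d/2} \geq 1 - e^{-\alpha/2}$, a positive constant. Iterating this conditional stochastic domination across all even rounds and stitching together the $k$ phases of $P$, the total number of rounds in \rvisit until $u$ is informed is stochastically dominated by $2N$, where $N \sim \mathrm{NegBin}(\tau_u, p)$ counts the number of i.i.d.\ Bernoulli($p$) trials needed for $\tau_u$ successes. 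A standard Chernoff tail bound for $N$ yields, for a sufficiently large $c = c(\alpha,\gamma)$,
\[
    \Pr{2N > c(\tau_u + \log n)} \leq n^{-\gamma},
\]
and taking expectation over the frozen $\{w_u(i)\}$ gives the lemma.

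The main technical difficulty that this approach overcomes is the dependence-on-the-future problem flagged in the proof overview. If both even and odd rounds of \visit were coupled to \push using the same random walks, then fixing the path $P$ (which is a function of future walk increments) would entangle past arrivals at each $u_{i-1}$ with later random choices, and the conditional stochastic domination used above would fail. Restricting the coupling to odd rounds separates these two sources of randomness cleanly: $\{w_u(i)\}$ controls only the odd-round departures from already-informed vertices, while the independent even-round arrivals at $u_{i-1}$ are governed purely by the uniform lower bound~\eqref{eq:alphaConditionRT} on neighborhood occupation. That separation is exactly what enables the geometric/negative-binomial tail bound to go through.
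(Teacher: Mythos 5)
Your proposal is correct and follows essentially the same route as the paper's proof: fix the \push randomness (hence the informing path and the $\delta_i$'s), use the even/odd coupling so that the $\delta_i$-th even visit to $u_{i-1}$ informs $u_i$, invoke the \rvisit lower bound~\eqref{eq:alphaConditionRT} to dominate each inter-visit gap by twice a Geometric($p$) variable with $p \geq 1-e^{-\alpha/2}$, and conclude with a Chernoff bound on the resulting negative binomial sum of $\tau_u$ geometrics. The paper merely packages the same steps through its auxiliary Lemmas on stochastic domination of sums and geometric tails.
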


\begin{proof}
In this proof, we will use the same notation for \rvisit as those defined for \visit.
(We used $t'_u$ instead of $t_u$ in the lemma's statement to avoid confusion when we apply the lemma, but in the proof there is no such fear, because only \rvisit is used.)
%We will use the coupling between \push and \rvisit, described in Section~\ref{sec:visit<push-coupling}.

As described in the proof overview, we consider a path from the source $s$ to vertex $u$ that \push uses to inform $u$, and count the number of rounds \visit takes to traverse the same path.
First, we consider a single edge $(v, w)$ such that $w$ is informed by $v$ in a realization of \push that we fix.
We also fix the first $t_v$ rounds of \rvisit, i.e., until $v$ becomes informed.
Let $\delta_{v,w} = \tau_w - \tau_v$ be the number of rounds of \push that it takes to inform $w$ counting from when $v$ gets informed.
Similarly, we define $D_{v,w} = t_w - t_v$ for \rvisit.
We will bound $D_{v, w}$ in terms of $\delta_{v, w}$.

%Fix a realization of \push and let $(v, w)$ be an edge such that $w$ is informed by $v$.
%Fix also the first $t_v$ rounds of $\rvisit$ (such that the coupling is preserved).
%It takes $\delta_{v, w} = \tau_w - \tau_v$ rounds for \push to inform $w$ since the round when $v$ was informed.
%Similarly, we define $D_{v, w} = t_w - t_v$ for \rvisit.
%We will bound $D_{v, w}$ in terms of $\delta_{v, w}$.

Recall that we have defined a natural total order over the set $W_v^{even}$ of even visits to vertex $v$.
%Let $(t, g)$ be element $\delta_{v,w}$ of that order.
For $j \geq 1$, let $(t, g)$ be the $j$th element of $W_v^{even}$ in that order.
By the coupling, at the odd round $t + 1$, agent $g$ will move to the neighbor of $v$ that is sampled by \push in round $\pi_v(j) = \tau_v + j$.
In particular, since $\pi_v(j) = w$ for $j = \delta_{v,w}$, vertex $w$ gets informed after $\delta_{v,w}$ even visits to $v$ in \rvisit (possibly earlier).

Formally, let $B_v^{(j)}$ be the number of \rvisit rounds between even visits $j - 1$ and $j$ (when $j = 1$, $B_v^{(j)}$ is the number of rounds until the first even visit  since $t_v$).
$B_v^{(j)}$ can be $0$, if two agents visit $v$ at the same even round.
With this definition,
\[
    D_{v,w} \leq \sum_{j=1}^{\delta_{v,w}} B_v^{(j)}.
\]
%
%Consider agents visiting $v$ in even rounds after $v$ is informed, in round $t_v$.
%We call their visits \emph{even visits}.
%By the coupling, during an odd round after an even visit to $v$, an agent goes the neighbor of $v$ that is sampled by \push in the corresponding round after $v$ is informed.
%%each such agent goes to some neighbor of $v$ that $v$ samples in \push, after it gets informed.
%By definition, after $\delta_{v, w}$ such samples, $w$ gets informed in \push.
%This means that, in \rvisit, after $\delta_{v, w}$ even visits to $v$ since $v$ is informed, vertex $w$ will get informed as well (possibly even earlier).
%%This means that, after $\delta_{v, w}$ even visits to $v$, vertex $w$ will get informed in \rvisit as well (possibly even earlier).
%We order the even visits to $v$ in increasing chronological order, and define $B_v^{(j)}$ as the number of rounds between even visits $j - 1$ and $j$ for $j \geq 1$.
%Notice that $B_v^{(j)}$ can be $0$ if two agents visit $v$ at the same round.
%With this definition,
%\[
%D_{v, w} \leq \sum_{j=1}^{\delta_{v, w}} B_v^{(j)}.
%\]
By condition \eqref{eq:alphaConditionRT} and assumption $|A| \geq \alpha\cdot n$, there are at least $\alpha\cdot d/2$ agents in the neighborhood of $v$ at any round of \rvisit.
Let $p = 1 - e^{-\alpha / 2}$ and
recall that, for an even $t > 0$, the agents move independently from \push,
and therefore, some agent visits $v$ in round $t$ with probability at least $1 - (1 - 1 / d)^{\alpha d / 2} \geq p$.
For $t = 0$, when agents are placed according to the stationary distribution,
some agent is placed at $v$ with probability $1 - (1 - 1/n)^{\alpha n} \geq 1 - e^{-\alpha} \geq p$.
%Since the agents move independently from \push in even rounds, the probability that some agent visits $v$ in any even round is at least $p = 1 - (1 - 1/d)^{\alpha\cdot d / 2} \geq 1 - e^{-\alpha / 2}$.
%\gtodo{mention special case of $t = 0$; in this case  $p = 1 - (1-1/n)^{\alpha n} \geq 1 - e^{-\alpha}$}
It follows that the number of rounds between two even visits to $v$, namely $B_v^{(j)}$ for $1 \leq j \leq \delta_{v,w}$, is stochastically dominated by $2\cdot F_v^{(j)}$, where $\{ F_v^{(j)} \}_{j \geq 1}$ is a collection of independent geometric random variables with success probability $p$.
The coefficient $2$ appears because we have to take into account both odd and even rounds.
In other words, for any $b \geq 0$ and $1 \leq j \leq \delta_{v,w}$,
\[
    \Pr{B_v^{(j)} \leq b \mid  B_v^{(1)}, \dots, B_v^{(j-1)} } \geq \Pr{2\cdot F_v^{(j)} \leq b}.
\]
Using Lemma \ref{lem:stochastic-sum}, we get that, given $v$ is informed, $D_{v,w}$ is stochastically dominated by $2\cdot \sum_{j=1}^{\delta_{v,w}}F_v^{(j)}$:
\[
    \Pr{D_{v,w} \leq b \mid t_v} \geq \Pr{ \sum_{j=1}^{\delta_{v, w}} B_v^{(j)} \leq b \mid t_v } \geq \Pr{2\cdot \sum_{j=1}^{\delta_{v,w}}F_v^{(j)} \leq b }.
\]

We apply the above result to all edges on the path from $s$ to $u$ through which \push informed~$u$.
Let $P_u = (s = u_0, u_1, \dots, u_k = u)$ be a path in $G$ such that, in \push, $u_i$ is informed from $u_{i-1}$, for all $1 \leq i \leq k$.
By definition of $\tau_u$, $u_{i - 1}$ samples its neighbor $u_i$ at round $\tau_{u_i}$.
Define $\delta_i = \tau_{u_i} - \tau_{u_{i-1}}$ and $D_i = t_{u_i} - t_{u_{i-1}}$ for $1 \leq i \leq k$.
From our result above for a single edge it follows that
\[
    \Pr{D_i \leq b \mid D_1,\dots,D_{i-1}} \geq \Pr{2\cdot \sum_{j=1}^{\delta_i} F_{u_i}^{(j)}\leq b}.
\]
%where $F_u^{(j)}$ is a geometric random variable with success probability $p$, as before.
By Lemma~\ref{lem:stochastic-sum} and the fact that $t_u = t_{u_k} = \sum_{i=1}^k D_i$, we have that
$t_u$ is stochastically dominated by $2F = 2\cdot \sum_{i=1}^k \sum_{j=1}^{\delta_i} F_{u_{i-1}}^{(j)}$, i.e., for any $b \geq 0$,
\[
    \Pr{ t_u \leq b } \geq \Pr{ 2F \leq b }.
\]
The random variable $F$ is a sum of exactly $\tau_k$ independent and identical geometrically distributed random variables, hence, $\E{F} = \tau_k / p$.
Thus, for any constant $c \geq 4 / p$, by Lemma~\ref{lem:geom-chernoff},
%
%Since the random variable $F$ is a sum of at least $\tau_k$ independent and identical geometrically distributed random variables, we can apply Lemma~\ref{lem:geom-chernoff} to bound $F$, and thus, also bound $t_u$:
%
%Thus, once again by Lemma \ref{lem:stochastic-sum}, the number of rounds until vertex $u_k$ is informed, $t_{u_k} = \sum_{i=1}^{k} D_i$, is stochastically dominated by $2F = 2\cdot \sum_{i=1}^k \sum_{j=1}^{\delta_i} F_{u_{i-1}}^{(j)}$.
%Since the random variable $F$ is a sum of $\tau_k$ independent geometrically distributed random variables, we can apply Lemma \ref{lem:geom-chernoff} to bound $F$, and thus, also bound $t_{u_k}$:
%Since $\E{F} = \tau_{u_k}/p$, setting $c \geq 4 / p$ implies that $\frac{c}{2}(\tau_{u_k} + \log n) \geq 2\E{F}$.
%We can now apply Lemma \ref{lem:geom-chernoff} to $F$:
\begin{align*}
    \Pr{t_{u} \geq c(\tau_{u} + \log n)}
    &\leq \Pr{F \geq \frac{c}{2}(\tau_{u} + \log n)} \\
        &\leq
        \exp\left(-\frac{c(\tau_{u} + \log n)\cdot p}{16}\right) \\
        &\leq n^{-cp/16},
\end{align*}
%where for the second inequality we have to take $c \geq 4 / p$ and use $\E{F} = \tau_{u_k} / p$ to satisfy the condition of Lemma \ref{lem:geom-chernoff}.
Choosing $c$ large enough so that $cp / 16 \geq \gamma$, completes the proof.
\end{proof}

We can now complete the proof of our main result.
%
%\begin{proof}[Proof of Theorem \ref{thm:Tmsg<Tpush}]
    Recall that $\tau_u, t_u$ and $t_u'$ are the rounds when vertex $u$ gets informed in \push, \visit, and \rvisit, respectively.
    From Lemma \ref{lem:retweak<push}, and a union bound over all vertices, we obtain that for any constant $\gamma > 0$, there is a constant $c > 0$ such that
    \[
        \Pr{\forall\,u\in V\colon t_u' \leq c(\tau_u + \log n)} \geq 1-n\cdot n^{-\gamma}.
    \]
    Thus,
    \[
        \Pr{\max_{u\in V} t_u' \leq c\left(\max_{u\in V} \tau_u + \log n\right)} \geq 1-n\cdot n^{-\gamma}.
    \]
    It follows that for any $k\geq 0$,
    \begin{align*}
        \Pr{\max_{u\in V} t_u' \leq c\left(k + \log n\right)}
%        \\
        &\geq
        \Pr{\max_{u\in V} t_u' \leq c\left(\max_{u\in V} \tau_u + \log n\right) \cap \max_{u\in V} \tau_u \leq k}
        \\&
        \geq
        \Pr{\max_{u\in V} \tau_u \leq k} -n\cdot n^{-\gamma}.
    \end{align*}
    From Lemma~\ref{lem:retweacked}, it follows
    \begin{align*}
        \Pr{\max_{u\in V} t_u' \leq c\left(k + \log n\right)}
%        &
        -
        \Pr{\max_{u\in V} t_u \leq c\left(k + \log n\right)}
%        &
        \leq
        c (k + \log n )\cdot n\cdot e^{-\alpha d/8}.
    \end{align*}
    Combining the last two inequalities above we obtain
    \begin{align*}
%        &
        \Pr{\max_{u\in V} t_u \leq c\left(k + \log n\right)}
%        \\&
        \geq
        \Pr{\max_{u\in V} \tau_u \leq k} -n\cdot n^{-\gamma}
            -c (k + \log n )\cdot n\cdot e^{-\alpha d/8}.
    \end{align*}
    Substituting $\Tvisit = \max_{u\in V} t_u$ and $\Tpush = \max_{u\in V} \tau_u$, and using
    $d \geq \beta \log n$,
    yields
    \begin{align*}
        \Pr{\Tvisit \leq c\left(k + \log n\right)}
        \geq
        \Pr{\Tpush \leq k} - n^{-\gamma+1}
            -c(k + \log n)\cdot n^{1 - \alpha\beta / 8}.
    \end{align*}
    This implies the theorem for $\log n \leq k \leq \mathrm{poly}(n)$.
    For larger $k$, the theorem follows from the known polynomial upper bound on the cover time on regular graphs.
    For smaller $k$, it follows from the fact that $\Tpush = \Omega(\log n)$, w.h.p.
%\end{proof}

\section{Bounding \texorpdfstring{$\Tvisit$}{Tvisit} by \texorpdfstring{$\Tmeet$}{Tmeetx} on Regular Graphs}
\label{sec:visit<meet}

The next theorem bounds the broadcast time of \visit on a regular graph by the broadcast time of \meet.

\begin{theorem}
    \label{thm:Tvisit<Tmeet}
    For any constants $\alpha,\beta,\lambda > 0$ with $\alpha\cdot \beta$ sufficiently large, there is a constant $c > 0$, such that for any $d$-regular graph $G = (V, E)$ with $|V| = n$ and $d \geq \beta \ln n$, and any source $s\in V$, the broadcast times of \visit and \meet, both with $|A| \geq \alpha n$ agents, satisfy
    \[
        \Pr{\Tvisit \leq k + c\ln n} \geq \Pr{\Tmeet \leq k} - n^{-\lambda},
    \]
    for any $k\geq 0$.
\end{theorem}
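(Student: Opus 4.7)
The plan is to couple \visit and \meet so that all agents perform the same random walks in both processes, and to establish the bound in two steps: a domination claim saying that every agent informed in \meet is also informed in \visit, and an $O(\ln n)$-round covering step showing that once all agents are informed the remaining vertices are quickly visited.

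For the coupling claim I would prove by induction on $t$ that if an agent $g$ is informed in \meet at round $t$ then $g$ is informed in \visit by round $t$. The base case $t = 0$ is immediate since agents initially at $s$ are informed in both processes. For the inductive step, an agent $g$ informed in \meet at round $t>0$ either becomes informed by making the first visit to $s$ (in which case $g$ is at $s$ at round $t$ in \visit as well, and since $s$ is informed from round $0$ in \visit, $g$ is informed by round $t$) or by meeting some agent $g'$ at a vertex $v$ in round $t$, where $g'$ was informed in \meet at some earlier round $t'<t$ (by induction $g'$ is informed in \visit by round $t'$, so $g'$'s visit to $v$ at round $t$ makes $v$ informed in \visit, and hence so does $g$). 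On $\{\Tmeet \leq k\}$ this gives that all agents are informed in \visit by round $k$.

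Next I would show that once all agents are informed, every vertex is visited within $T = c\ln n$ additional rounds, except with probability $n^{-\lambda}$. Fix $v\in V$, let $\mathcal{F}_t$ denote the $\sigma$-algebra of the agents' positions through round $t$, and let $N_v(t) = \sum_{u \in \Gamma(v)}|Z_u(t)|$ and $\mathcal{E}_t = \{N_v(t) \geq \alpha d/2\}$. Lemma~\ref{lem:msgdensitylb} applied with $S = \Gamma(v)$, together with $|A|\geq\alpha n$ and $d \geq \beta \ln n$, gives $\Pr{\mathcal{E}_t} \geq 1 - e^{-\alpha d/8} \geq 1 - n^{-\alpha\beta/8}$. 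Conditional on $\mathcal{F}_t$, each agent in $\Gamma(v)$ moves to $v$ independently with probability $1/d$, so
\[
    \Pr{v \text{ not visited at round } t+1 \,\big|\, \mathcal{F}_t} \,\leq\, (1-1/d)^{N_v(t)} \,\leq\, e^{-\alpha/2} \text{ on } \mathcal{E}_t.
\]
Iterating this bound via the tower property over $T$ consecutive rounds, and subtracting by union bound the probability that some $\mathcal{E}_{k+t}$ fails, yields that the probability that $v$ is not visited in any of rounds $k,k+1,\ldots,k+T$ is at most $e^{-\alpha T/2} + T\cdot n^{-\alpha\beta/8}$. A union bound over $v \in V$, combined with a large enough choice of $c$ and the hypothesis that $\alpha\beta$ is sufficiently large, drives this below $n^{-\lambda}$.

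Combining the two steps, on $\{\Tmeet \leq k\}$ every vertex not already informed by round $k$ will be informed the first time any agent visits it in rounds $\geq k$, because all agents are informed from round $k$ onwards. Therefore
\[
    \Pr{\Tmeet \leq k,\, \Tvisit > k+T} \,\leq\, \Pr{\exists v\in V : v \text{ not visited in rounds } k,\ldots,k+T} \,\leq\, n^{-\lambda},
\]
which rearranges to the claim. The main obstacle is the dependence across rounds in the covering step: although $v$ is visited with constant probability in a single round whenever $\mathcal{E}_t$ holds, the events at different rounds are not independent because the same agents contribute repeatedly, which is why I cannot simply multiply probabilities and must instead resort to the filtration/tower argument combined with the density lemma; the hypothesis that $\alpha\beta$ be sufficiently large is precisely what lets the final union bound over vertices absorb both the density-failure term and the covering failure term.
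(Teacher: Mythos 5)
Your proposal is correct and follows essentially the same route as the paper: a coupling via shared random walks giving that all agents informed in \meet are informed in \visit by the same round, followed by an $O(\ln n)$-round covering argument based on the $\Omega(d)$ lower bound on the number of agents in each neighborhood (Lemma~\ref{lem:msgdensitylb}). The only cosmetic difference is that the paper routes the density condition through the modified \rvisit process and a union bound over all $k+\ell$ rounds (which forces it to treat very large and very small $k$ separately), whereas you apply the density lemma only to the $T$ rounds of the covering window, which works equally well.
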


\begin{proof}
    Let $\Rvisit$ be the number of rounds until all \emph{agents} are informed in \visit.
    Under the natural coupling of \visit and \meet,
    that uses the same random walks for both processes,
    it is immediate that
    \begin{align}
        \label{eq:rvisit-vs-tmeet}
        \Pr{ \Rvisit \leq k }
            \geq \Pr{ \Tmeet \leq k }.
    \end{align}
    Let $\ell = c\ln n$ for constant $c$ to be determined later.
    Next we show that in $\ell$ rounds of \visit, every vertex is visited by at least one agent, with probability at least $1 - n^{-\lambda}$.
    For that we consider the process \rvisit from Section \ref{sec:tweaked2}, which ensures that for every vertex $u \in V$ and round $t \geq 0$,
    \[
        \sum_{v \in \Gamma(u)} |Z_v(t)| \geq |A| \cdot d / (2n) \geq \alpha d / 2,
    \]
    where $Z_v(t)$ is the set of agents visiting $v$ in round $t$.
    
    Fix a vertex $u$.
    In any round $t \in R_k = \{ k + 1, \dots, k + \ell\}$ of \rvisit,
    the probability that no agent visits $u$ in that round is at most
    $(1 - 1 / d)^{\alpha d / 2} \leq e^{-\alpha / 2}$,
    since the neighborhood of $u$ contains at least $\alpha d / 2$ agents
    before round $t$.
    This holds for every round $t$ independently,
    hence $u$ is visited by some agent in rounds $R_k$
    with probability at least $1 - e^{-\alpha \ell / 2}$.
    By a union bound, with probability at least $1 - n \cdot e^{-\alpha \ell / 2 }$,
    every vertex $u$ is visited by some agent in rounds $R_k$.
    By Lemma \ref{lem:retweacked}, \rvisit and \visit are identical in the first $(k + \ell)$ rounds of their execution with probability at least
    $1 - (k + \ell) n \cdot 2^{-\alpha d / 8}$.
    % \geq 1 - n^{-\lambda - 1} / 2$.
    The last two statements together imply that
    \begin{align*}
        \Pr{ \Tvisit \leq k + \ell }
            &\geq \Pr{ \Tvisit \leq k + \ell \mid \Rvisit \leq k } \cdot
                 \Pr{ \Rvisit \leq k } \\
            &\geq \left(1 - (k + \ell) n \cdot 2^{-\alpha d / 8} - n \cdot e^{-\alpha \ell / 2 } \right) \cdot \Pr{ \Rvisit \leq k }\\
            &\geq \Pr{ \Rvisit \leq k } - (k + l) n^{1 - \alpha \beta / 16} - n^{1 - \alpha c / 2}.
    \end{align*}
    Together with~\eqref{eq:rvisit-vs-tmeet}, this implies the theorem for $\mathrm{poly}(\log n) \leq k \leq \mathrm{poly}(n)$,
    since we can take $\alpha \cdot \beta$ and $c$ sufficiently large,
    depending on $\lambda$.
    For smaller $k$, the theorem follows from
    the fact that $\Tmeet = \Omega(\log n)$ w.h.p.~(Theorem \ref{thm:meetx-lower-bound}).
    For larger $k$, it follows from the fact that $\Tvisit, \Tmeet \leq \mathrm{poly}(n)$ w.h.p., by a known polynomial upper bound on the cover time of a random walk in a graph.
\end{proof}

\section{Logarithmic Lower Bounds for  \texorpdfstring{$\Tvisit$}{Tvisit} \& \texorpdfstring{$\Tmeet$}{Tmeet} on Regular Graphs}

\begin{theorem}
    \label{thm:visitx-lower-bound}
    For any $d$-regular graph $G = (V, E)$ with $|V| = n$ and $d = \Omega(\log n)$, and any source vertex $s \in V$, the broadcast time of \visit with $|A| = O(n)$ agents is $\Omega(\log n)$ w.h.p.
\end{theorem}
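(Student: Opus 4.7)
The plan is to track $M_t = |I_t| + |A_t|$, where $I_t$ is the set of informed vertices and $A_t$ the set of informed agents at the end of round~$t$, and show that with high probability $M_t$ grows by at most a constant factor per round. Since $M_0 \leq 1 + |A_0|$ and $|A_0| \sim \Bin(|A|, 1/n) = O(\log n)$ w.h.p., this yields $M_k = O(c^k \log n)$ for some constant $c>1$; when $k = \varepsilon \log n$ for a small enough constant $\varepsilon > 0$, $M_k = o(n)$, which forces $|I_k| < n$ and hence $\Tvisit > k$.

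First, apply Lemma~\ref{lem:msgdensityub} with $S = \Gamma(v)$ and take a union bound over $v \in V$ and $t \leq k$: since $d = \Omega(\log n)$ and $|A| = O(n)$, there is a constant $\gamma > 0$ such that, with probability $1 - n^{-\Omega(1)}$, the event
\[
    \EE_1 : \quad \sum_{u \in \Gamma(v)} |Z_u(t)| \leq \gamma d \text{ for all } v \in V \text{ and } t \leq k
\]
holds; condition on $\EE_1$ throughout. To bound the per-round growth, note that deterministically $|I_{t+1}| \leq |I_t| + |A_t|$, since a vertex becomes informed in round $t+1$ only when some $g \in A_t$ steps onto it. For $|A_{t+1}|$, condition further on the round-$t$ state and on the moves of the agents in $A_t$ at round $t+1$, so that $I_{t+1}$ is determined. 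The remaining agents $g \notin A_t$ step to independent uniform random neighbors, each landing in $I_{t+1}$ with conditional probability $|\Gamma(x_g(t)) \cap I_{t+1}|/d$. Swapping the order of summation and using $\EE_1$,
\[
    \sum_{g \notin A_t} \frac{|\Gamma(x_g(t)) \cap I_{t+1}|}{d}
    \;\leq\; \sum_{v \in I_{t+1}} \frac{\sum_{u \in \Gamma(v)} |Z_u(t)|}{d}
    \;\leq\; \gamma\, |I_{t+1}|.
\]
As these indicators are conditionally independent Bernoullis, a Chernoff bound yields at most $2\gamma|I_{t+1}| + C\log n$ newly informed agents with probability $1 - n^{-\lambda - 2}$ for $C$ sufficiently large. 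Combined with $|I_{t+1}| \leq M_t$, this gives $M_{t+1} \leq (2 + 2\gamma)\, M_t + C \log n$.

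Iterating the recursion from $M_0 = O(\log n)$ yields $M_k = O((2+2\gamma)^k \log n)$, and a union bound over the $O(\log n)$ rounds of Chernoff applications collects all the failure events into $n^{-\lambda}$. Choosing $\varepsilon < 1/\log(2+2\gamma)$ makes $M_k = o(n)$ at $k = \varepsilon \log n$, completing the lower bound. The main obstacle I anticipate is arranging the conditional Chernoff step carefully: $I_{t+1}$ is itself random and correlated with some agents' moves, so one must condition on the $A_t$-moves first before invoking independence for the moves of non-$A_t$ agents. The key reason the hypothesis $d = \Omega(\log n)$ pays off is that we bound agents across the \emph{neighborhood} $\Gamma(v)$ (uniformly $O(d)$ under $\EE_1$), rather than at $v$ itself, which would give only an $O(\log n/\log \log n)$ per-vertex bound and, after iteration, a weaker $\Omega(\log n/\log \log n)$ lower bound.
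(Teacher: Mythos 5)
Your proof is correct, but it takes a genuinely different route from the paper's. The paper never looks at the informed set at all: it tracks the set $U_t$ of vertices not yet visited by \emph{any} agent (informed or not), and shows, via the same neighborhood bound $\sum_{v\in\Gamma(u)}|Z_v(t)|\leq\gamma d$ that you use (enforced through the tweaked process \tvisit rather than by conditioning on $\EE_1$), that each unvisited vertex survives a round with probability at least $(1-1/d)^{\gamma d}\geq 4^{-\gamma}$; negative association of the survival indicators then gives $|U_t|\geq |U_{t-1}|4^{-\gamma}/2$ w.h.p., so unvisited (hence uninformed) vertices persist for $\Omega(\log n)$ rounds. You instead run the classic exponential-growth argument on the informed mass $M_t=|I_t|+|A_t|$, which is closer in spirit to the standard $\Omega(\log n)$ lower bound for \push. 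Each approach buys something: the paper's version is a strictly stronger statement (it lower-bounds the time until every vertex is merely \emph{visited}, which is what Section~\ref{sec:push<msg} actually invokes) and avoids tracking agents entirely, while yours replaces the negative-association step by genuine conditional independence of the non-$A_t$ moves, at the cost of the two-stage conditioning you correctly identify as the delicate point. Your double-counting identity $\sum_{g}|\Gamma(x_g(t))\cap I_{t+1}| = \sum_{v\in I_{t+1}}\sum_{u\in\Gamma(v)}|Z_u(t)|\leq \gamma d\,|I_{t+1}|$ is exactly the right way to exploit the neighborhood bound, and your closing remark about why the bound must be taken over $\Gamma(v)$ rather than at $v$ itself matches the paper's reason for the same design choice. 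One small caveat: as written, ``condition on $\EE_1$ throughout'' would also constrain the round-$(t+1)$ positions and spoil the independence you need; you should either condition only on the time-$t$ part of $\EE_1$ (which is measurable with respect to the round-$t$ state you already condition on) and union-bound the failure of $\EE_1$ separately, or adopt the paper's device of working in \tvisit, where the bound holds deterministically.
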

\begin{proof}
We argue that w.h.p.\ some vertices are not visited by any agent (informed or not)
during the first logarithmic number of rounds of \visit.
We only count the visits starting from round $1$, since the initial placement of agents cannot inform any vertex.
%Roughly, we show that the number of vertices which have not been visited by any agent
%shrinks by at most a constant factor in each round, w.h.p.
%Together with the fact that in round $0$ a linear number of vertices do not have agents,
%we complete the proof.
The formal argument follows next.

For a sufficiently large constant $\gamma > 0$, that will be fixed later,
we consider the process \tvisit defined in Section~\ref{sec:tweaked}.
Recall that in \tvisit, for every vertex $u \in V$,
\[
    \sum_{v \in \Gamma(u)} |Z_v(t)| \leq \gamma \cdot d,
\]
where $Z_v(t)$ is the set of agents that visit $v$ in round $t$.
In the rest of the proof we use \tvisit and use the fact that it is equivalent to
\visit w.h.p.~for the first logarithmic rounds of the process.

Let $\HH_t$ represent all random choices of \tvisit up to (and including) round $t$,
and let $U_t$ be the set of vertices that have not been visited by any agent (either informed, or not) in any round up to $t$.
Denote the event that $|U_t| \geq |U_{t-1}| \cdot 4^{-\gamma} / 2$ by $\AA_t$.
We will show that for any $t \geq 1$,
\begin{align}
    \label{eq:u-shrink-lbd}
    \Pr{ \AA_t \;\middle|\; \HH_{t-1}; |U_{t-1}| \geq \log^2 n }
                = 1 - n^{-\omega(1)}
                \geq 1 - n^{-\lambda - 1},
\end{align}
for any constant $\lambda > 0$.
By the definition of \tvisit,
for each $u \in U_{t-1}$, the total number of agents in $\Gamma(u)$ before round $t$ is at most $\gamma d$.
Each of these agents visits $u$ in round $t$ with probability $1 / d$, independently from one another.
Let $X_u$ be the indicator random variable that $u \in U_t$.
Then, for $u \in U_{t-1}$,
\[
    \Pr{ X_u = 1 \mid \HH_{t-1} }
        \geq (1 - 1/d)^{\gamma d} \geq 4^{-\gamma},
\]
which implies that
\[
    \E{ |U_t| \mid \HH_{t - 1} }
        = \E{ \sum_{u \in U_{t-1}} X_u \;\middle|\; \HH_{t-1} }
%        = \sum_{u \in U_{t-1}} \E{ X_u \mid \HH_{t-1} }
        \geq |U_{t-1}| \cdot 4^{-\gamma}.
\]
We observe that, conditioned on the history $\HH_{t-1}$, the random variables $X_u$ are negatively associated~\cite[Example~3.1]{Dubhashi_Panconesi_2009}.
Thus, we can apply standard Chernoff bounds on their sum to obtain
\[
    \Pr{ |U_t| \geq |U_{t-1}|\cdot 4^{-\gamma} / 2 \;\middle|\; \HH_{t-1} }
        \geq 1 - \expp{ |U_{t-1}| \cdot 4^{-\gamma} / 8 },
\]
which implies \eqref{eq:u-shrink-lbd}.

Let $\kappa = \floor { \log_{2 \cdot 4^{\gamma}}(n / \log^2n) }$,
and for $t \in \{ 1, \dots, \kappa \}$,
define $\XX_t = \bigcap_{1 \leq t' \leq t} \AA_{t'}$.
We prove that $\Pr{ \XX_t } \geq 1 - t\cdot n^{-\lambda - 1}$ by induction.
The $t = 1$ case is exactly the statement of inequality \eqref{eq:u-shrink-lbd} since $|U_0| = |V| = n$.
For $t > 1$,
\begin{align*}
    \Pr{ \XX_t }
        &\geq \Pr{ \AA_t \mid \XX_{t - 1} } \cdot \Pr{ \XX_{t - 1} } \\
        &\geq \left(1 - n^{-\lambda - 1}\right) \cdot \Pr{ \XX_{t - 1} },
        \quad \text{by \eqref{eq:u-shrink-lbd} since $\XX_{t-1}$ implies $|U_{t-1}| \geq \log^2n$,} \\
        &\geq \left(1 - n^{-\lambda - 1}\right) \cdot \left(1 - (t - 1)\cdot n^{-\lambda - 1}\right),
        \quad \text{by the inductive hypothesis,}\\
        &\geq 1 - t \cdot n^{-\lambda - 1}.
\end{align*}
Observe that $\XX_\kappa$ implies that there are at least $\log^2n$ vertices
that have not been visited by any agent, and thus
at least $\log^2n - 1$ vertices that are uninformed (the other one may be the source).
Therefore, with probability at least $1 - \kappa\cdot n^{-\lambda - 1}$,
there is an uninformed vertex in \tvisit after round $\kappa$.
%the broadcast time of \tvisit is at least $\kappa = \Omega(\log n)$.
By Lemma \ref{lem:tweacked},
\tvisit and \visit are identical in the first $\kappa$ rounds of their execution,
with probability at least $1 - \kappa n 2^{-\gamma d}$.
Combining the two statements,
we get that there is an uninformed vertex in \visit after round $\kappa$,
with probability at least
$1 - \kappa \cdot n^{-\lambda - 1} - \kappa \cdot n 2^{-\gamma d}$.
By choosing a sufficiently large $\gamma$ and using the fact that $d = \Omega(\log n)$,
we can make this probability to be at least $1 - n^{-\lambda}$,
while $\kappa = \Omega(\log n)$,
completing the proof.
%By taking a sufficiently large $\gamma$ and using the fact that $d = \Omega(\log n)$,
%we can make this probability to be at least $1 - n^{-\lambda - 1}$.
%It implies that the broadcast time of
%\visit is at least $\Omega(\log n)$ with probability at least $1 - n^{-\lambda - 1} - \kappa \cdot n^{-\lambda - 1} \geq 1 - n^{-\lambda}$.
\end{proof}

\begin{theorem}
    \label{thm:meetx-lower-bound}
    For any $d$-regular graph $G = (V, E)$ with $|V| = n$ and $d = \Omega(\log n)$, and any source vertex $s \in V$, the broadcast time of \meet with $|A| = O(n)$ agents is $\Omega(\log n)$ w.h.p.

%    Let $G = (V, E)$ be any $d$-regular graph with $|V| = n$ and $d = \Omega(\log n)$.
%    For any source vertex $s \in V$, the broadcast time of \meet, with $|A| = O(n)$ agents, is $\Tmeet = \Omega(\log n)$ w.h.p.
\end{theorem}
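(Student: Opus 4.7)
The plan is to show that the number of informed agents $|I_t|$ in \meet can grow by at most a constant factor per round, so that starting from $|I_0| = O(\log n)$ w.h.p.\ it takes $\Omega(\log n)$ rounds before $|I_t|$ can reach $|A|$. I will focus on the case $|A| = \Theta(n)$, which is the interesting regime; if $|A| = o(n)$ then my upper bound on $|I_t|$ will already exceed $|A|$ for a suitable range of $t$, or a separate hitting-time argument for the source handles the case directly.

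I would first bound the initial informed set. In \meet, the source $s$ informs agents only at round $0$ (the agents placed at $s$ by the stationary distribution) and at the first round $t_0 \geq 1$ at which some agent visits $s$ (when all agents arriving in that round become informed simultaneously). Since agents start from stationarity on a $d$-regular graph, the number of agents at $s$ in any fixed round is distributed as $\Bin(|A|, 1/n)$, with mean $|A|/n = O(1)$; a Chernoff bound together with a union bound over the first $\mathrm{poly}(n)$ rounds then give that, w.h.p., no round sees more than $C_0 \log n$ agents at $s$. In particular $|I_{t_0}| \leq C_0 \log n$ w.h.p., and $|I_t| = 0$ for $t < t_0$.

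For the heart of the proof I would work with a tweaked process \tmeet, defined exactly like \tvisit in Section~\ref{sec:tweaked} but with the \meet updating rule, so that $\sum_{v \in \Gamma(u)} |Z_v(t)| \leq \gamma d$ at every vertex $u$. Since $d = \Omega(\log n)$ and agents are at stationarity, the same Chernoff calculation as in Lemma~\ref{lem:tweacked} shows that, for $\gamma$ large enough, \tmeet and \meet coincide in the first $\mathrm{poly}(n)$ rounds w.h.p. Now fix a round $t$, let $\HH_t$ denote the history up to round $t$, and let $L_{t+1} = \{x_g(t+1) : g \in I_t\}$, which is a function of $\HH_t$ and of the moves of informed agents at round $t+1$. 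Conditional on $(\HH_t, L_{t+1})$, the moves of the uninformed agents at round $t+1$ remain mutually independent (of each other and of the informed moves), each uninformed $h$ stepping to a uniform neighbor of $x_h(t)$, and $h$ becomes informed iff $x_h(t+1) \in L_{t+1}$. Swapping the order of summation,
\[
\Exp{|I_{t+1}| - |I_t| \,\middle|\, \HH_t, L_{t+1}}
\;=\; \sum_{v \in L_{t+1}} \frac{|\{h \notin I_t : x_h(t) \in \Gamma(v)\}|}{d}
\;\leq\; \sum_{v \in L_{t+1}} \frac{\gamma d}{d}
\;\leq\; \gamma\, |I_t|,
\]
using the \tmeet bound at each $v \in L_{t+1}$ together with $|L_{t+1}| \leq |I_t|$. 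Since the increment is a sum of independent Bernoullis under this conditioning, a standard Chernoff bound yields $|I_{t+1}| - |I_t| \leq 2\gamma|I_t| + C_1 \log n$ w.h.p., for some constant $C_1 = C_1(\lambda)$.

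Iterating this one-step bound and taking a union bound over $c \log n$ rounds starting from $t_0$ yields $|I_{t_0 + c \log n}| \leq O(\log n) \cdot (1+2\gamma)^{c \log n}$ w.h.p. For any constant $c$ with $c \log(1+2\gamma) < 1/2$, this upper bound is $o(n)$, hence strictly smaller than $|A|$, and therefore $\Tmeet > t_0 + c \log n \geq c \log n$ w.h.p., which is the desired conclusion. The main obstacle will be the one-step growth bound: the delicate aspect is to line up the conditioning so that, once $\HH_t$ and the random set $L_{t+1}$ are fixed, the uninformed agents' moves are independent Bernoullis, while the \tmeet neighborhood bound (a statement about positions at time $t$, measurable with respect to $\HH_t$) is still available to control their conditional mean. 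A minor side technicality is that when $|I_t| = o(\log n)$ the multiplicative Chernoff bound degenerates; but the additive $O(\log n)$ slack is enough to carry the induction through the first few rounds.
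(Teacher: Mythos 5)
Your proposal is correct in its essentials, but it takes a genuinely different route from the paper. The paper runs a ``survival'' argument: it tracks the set $S_t$ of agents that have never met any other agent, shows $|S_t|$ shrinks by at most a constant factor per round, and concludes that after $\Omega(\log n)$ rounds at least $\log^2 n$ lonely agents remain, of which at most $O(\log n)$ were informed at round zero and at most one by visiting $s$. Because the events ``agent $g$ meets nobody'' are neither independent nor negatively associated, the paper has to split each round into two sub-rounds and invoke the Method of Bounded Differences. You instead run a ``growth'' argument on the informed set: conditioning on the history $\HH_t$ and on the informed agents' round-$(t{+}1)$ destinations $L_{t+1}$, the uninformed agents' moves are still independent, so the increment $|I_{t+1}|-|I_t|$ is a sum of independent Bernoullis whose mean you bound by $\gamma|I_t|$ via the \tmeet neighborhood cap, and a plain Chernoff bound gives $|I_{t+1}|\leq(1+2e\gamma)|I_t|+O(\log n)$ w.h.p.; iterating for a small constant times $\log n$ rounds keeps $|I_t|=o(n)<|A|$. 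This is arguably more elementary --- it recovers independence by choosing the conditioning well rather than by appealing to bounded differences --- and both approaches need the same $O(\log n)$ bound on the initial injection from the source.

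Two loose ends you should tighten. First, your Chernoff step gives the multiplicative constant $2e\gamma$ (via the bound $\Pr{X\geq\beta\mu}\leq 2^{-\beta\mu}$ for $\beta\geq 2e$), not $2\gamma$; this is harmless. Second, and more substantively, the conclusion $|I_{t_0+c\log n}|=o(n)<|A|$ only closes the argument when $|A|=\Omega(n/\mathrm{polylog}\,n)$, whereas the theorem allows any $|A|=O(n)$. You flag this, and it is genuinely fixable: for, say, $|A|\leq n/\log^2 n$ the expected number of visits to $s$ during the first $c\log n$ rounds is $|A|\cdot c\log n/n=o(1)$, so w.h.p.\ no agent is ever informed in that window, while for larger $|A|$ your main bound applies. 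Spelling out this case split (or simply restricting to $|A|=\Theta(n)$, which is the regime the paper actually uses elsewhere) would complete the proof.
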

\begin{proof}
The proof follows the same line of logic as the proof of Theorem ~\ref{thm:visitx-lower-bound}.
We show that, w.h.p., there is an agent that has not started its walk at the source, and that has not met any other agent (informed or not) in the first logarithmic number of rounds of \meet.

First observe that we can consider a tweaked process \tmeet, which has the same modification as \tvisit in Theorem~\ref{thm:visitx-lower-bound} that ensures that
the neighborhood of every vertex contains at most $O(d)$ agents at any round.
Recall that $\HH_t$ is the history of \tmeet until round $t$.
Let $S_t$ be the set of agents that have not met another agent in the first $t$ rounds, and
let $\AA_{t}$ be the event that
$|S_t|$ is a constant fraction of $|S_{t-1}|$.
The next inequality, which is analogous to~\eqref{eq:u-shrink-lbd}, is the key step of the proof and is proved next:
\begin{align}
    \label{eq:s-shrink-lbd}
    \Pr{ \AA_t \;\middle|\; \HH_{t-1} ; |S_{t-1}| \geq \log^2 n } \geq 1 - n^{-\lambda - 1}.
\end{align}
For every agent $g \in S_{t-1}$, consider the vertex $u = x_g(t)$ that $g$ visits in round $t$.
With constant probability no agent other than $g$ visits $u$ in round $t$,
therefore, there is a constant $\beta$ such that
$\E{ |S_t| \mid \HH_{t-1} } \geq \beta |S_{t-1}|$.
Unlike in Theorem~\ref{thm:visitx-lower-bound}, we do not have negative association of the events that agents in $S_{t-1}$ are also in $S_t$,
and therefore cannot use Chernoff bound directly.

Instead, we split round $t$ into two sub-rounds:
In the first sub-round, only the agents in $S_{t-1}$ make a step, and in the second one all other agents.
Consider the set $S'_t$, which contains agents $g \in S_{t-1}$ that do not meet another agent from $S_{t-1}$ in the first sub-round.
We have that
$\E{|S_t'| \mid \HH_{t-1}} \geq \E{|S_t| \mid \HH_{t-1} } \geq \beta |S_{t-1}|$.
Additionally,
$|S_t'|$ is a function of the independent steps taken by the agents in $S_{t-1}$,
and changing the step of one of them changes $|S_t'|$ by at most $2$.
It implies that, by the Method of Bounded Difference~\cite[Corollary 5.2]{Dubhashi_Panconesi_2009},
\[
    \Pr{ |S_t'| \geq \beta |S_{t-1}| / 2 \mid \HH_{t-1} } \geq 1 - e^{-\Omega(|S_{t-1}|)}.
\]
Consider the set of vertices $L_t$ where agents in $S_t'$ are located after the first sub-round.
We can now use the negative association argument from Theorem~\ref{thm:visitx-lower-bound}
%for the events that the agents in $A \setminus U_{t-1}$ visit vertices in $L_t$.
to show that, with probability at least $1 - e^{-\Omega(|L_t|)} = 1 - e^{-\Omega(|S_{t-1}|)}$,
a constant fraction of vertices in $L_t$ do not receive any agent in the second sub-round.
Hence, with the same probability, a constant fraction of agents in $S_t'$ do not meet a new agent.
%Using the fact that $|L_t| = |S_t'|$, and
Combining the above arguments, we prove~\eqref{eq:s-shrink-lbd}.

Applying~\eqref{eq:s-shrink-lbd} for $\kappa = \Omega(\log n)$ rounds,
we get that, w.h.p., at least $\log^2n$ agents have not met any other agent after the first $\kappa$ rounds.
Of these agents, at most $O(\log n)$ get informed in rounds $0$, w.h.p.
This follows from a standard bound on the largest bin in the balls-and-bins problem.
Additionally, at most one such agent could be the first one to visit $s$, while $s$ still contains the information.
Therefore, the broadcast time of \tmeet and thus also \meet is at least $\kappa = \Omega(\log n)$.
\end{proof}
%
%\begin{proof}[Old proof sketch]
%The proof follows the same line of logic as the proof of Theorem \ref{thm:visitx-lower-bound}.
%Instead of considering the set $U_t$ of vertices not visited by any agent in the first $t$ rounds, we now consider the set $A_t$ of {agents} that have not met another agent in the first $t$ rounds.
%As before, from Lemma~\ref{lem:msgdensityub}, we have that w.h.p.\ the neighborhood of each vertex $u$ contains at most $O(d)$ agents.
%For each $g \in A_t$, we apply this lemma for the vertex $u = x_g(t+1)$ that $g$ visits in round $t+1$, and obtain that $g$ does not meet any agent in the next round, i.e., $g \in A_{t+1}$, with constant probability.
%Using this fact, we show that,\gtodo{maybe the dependencies are a bit complicated}
%w.h.p., the size of $A_t$ does not shrink by more than a factor of $4^{-\beta}/2$ in one round, where $\beta$ is a sufficiently large constant.
%Using also the fact that the number of agents at vertex $s$ in round zero (which are automatically informed) is $O(\log n)$ w.h.p., we conclude that after $\kappa = \Omega(\log n)$ rounds there are still some non-informed agents, w.h.p.
%\end{proof}

\section{Open Problems}
\label{sec:open}

This work is the first systematic and thorough comparison of the running times of the standard \push and \ppull rumor spreading protocols with some very natural agent-based alternatives.
Several open problems remain.
The most obvious question to ask is whether our results for regular graphs hold also when the graph degree is sub-logarithmic.
Another question is whether there are graphs where \meet is slower than \visit by more than logarithmic factors.
In this paper we assumed a linear number of agents.
It would be interesting to study the performance of the protocols when a sub-linear number of agents is available.

The main attractive properties of standard rumor spreading protocols are simplicity, scalability, and robustness to failures~\cite{Feige1990}.
Arguably, \visit and \meet share the first two properties, but probably not the robustness property. %at least in their current form.
In particular, it seems that faulty nodes or links can result in agents getting lost.
It would be interesting to explore fault tolerant variants of these protocols.
For example, it seems likely that the protocols could tolerate some number of lost agents, if a \emph{dynamic} set of agents were used, where agents age with time and die, while new agents are born at a proportional rate.

\section{Acknowledgments}

We would like to thank Thomas Sauerwald and Nicol\'{a}s Rivera for helpful discussions.
This research was undertaken, in part, thanks to funding from
the ANR Project PAMELA (ANR-16-CE23-0016-01),
the NSF Award Numbers CCF-1461559, CCF-0939370 and CCF-18107,
the Gates Cambridge Scholarship programme,
and the ERC grant DYNAMIC MARCH.

\bibliographystyle{abbrv}%{plainnat}%{alpha}
\bibliography{frogwalk}

\appendix
\section*{APPENDIX}

\section{Concentration Bounds}

Below we state some standard bounds we use in our analysis.

\begin{theorem}[Chernoff bounds, {\cite[Theorems 4.2, 4.3]{Mitzenmacher_Upfal_2017}}]
    \label{thm:chernoff}

    Let $X_1, X_2, \dots, X_n$ be independent 0/1 random variables. Let $X = \sum_{i=1}^n X_i$ and $\mu = \E{X}$.
    Then,
    \begin{center}
        \begin{enumerate}[(a)]
            \item $\Pr{X \geq (1 + \delta)\cdot \mu } \leq \exp\left(-\frac{\mu \cdot \delta^2}{3}\right)$, for $0 < \delta \leq 1$.
            \item $\Pr{X \geq \beta\mu} \leq 2^{-\beta\mu}$, for $\beta \geq 2e$.
            \item $\Pr{X \leq (1 - \delta)\cdot \mu } \leq \exp\left(-\frac{\mu\cdot \delta^2}{2}\right)$, for $0 < \delta < 1$.
        \end{enumerate}
    \end{center}
\end{theorem}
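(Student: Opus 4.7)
The plan is to apply the classical moment-generating-function (Chernoff) argument: fix a real parameter $t$, use Markov's inequality on $e^{tX}$, bound the MGF via independence, and finally optimize in $t$. As the master estimate, I would first note that for each Bernoulli summand $X_i$ with $\E{X_i} = p_i$, we have $\E{e^{tX_i}} = 1 + p_i(e^t-1) \leq \exp\bigl(p_i(e^t-1)\bigr)$, so by independence $\E{e^{tX}} \leq \exp\bigl(\mu(e^t-1)\bigr)$. Combining with Markov, for any $t > 0$ and $a \geq \mu$,
\[
    \Pr{X \geq a} \;\leq\; \frac{\E{e^{tX}}}{e^{ta}} \;\leq\; \exp\bigl(\mu(e^t - 1) - ta\bigr).
\]

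For part (a), I would set $a = (1+\delta)\mu$ and optimize in $t$; the optimum $t^\ast = \ln(1+\delta)$ yields the well-known bound
\[
    \Pr{X \geq (1+\delta)\mu} \;\leq\; \left(\frac{e^\delta}{(1+\delta)^{1+\delta}}\right)^\mu.
\]
The remaining step is purely analytic: I would show $(1+\delta)\ln(1+\delta) - \delta \geq \delta^2/3$ for $0 < \delta \leq 1$, e.g.\ by differentiating $f(\delta) = (1+\delta)\ln(1+\delta) - \delta - \delta^2/3$ twice and checking $f(0)=f'(0)=0$ with $f''(\delta) = 1/(1+\delta) - 2/3 \geq 0$ on the interval. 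Exponentiating gives the claimed bound $\exp(-\mu\delta^2/3)$.

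For part (b), I would start from the same closed form and substitute $\beta = 1 + \delta$, rewriting
\[
    \left(\frac{e^\delta}{(1+\delta)^{1+\delta}}\right)^\mu
        = \left(\frac{e}{\beta}\right)^{\beta\mu}\cdot e^{-\mu}
        \leq \left(\frac{e}{\beta}\right)^{\beta\mu}.
\]
For $\beta \geq 2e$ we have $e/\beta \leq 1/2$, hence $(e/\beta)^{\beta\mu} \leq 2^{-\beta\mu}$, giving the stated inequality directly with no further analytic work.

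For part (c), I would use the mirror argument with $t < 0$: Markov applied to $e^{tX}$ and the same MGF bound yield, for any $t < 0$,
\[
    \Pr{X \leq (1-\delta)\mu} \;\leq\; \exp\bigl(\mu(e^t - 1) - t(1-\delta)\mu\bigr),
\]
and the optimum $t^\ast = \ln(1-\delta) < 0$ produces $\bigl(e^{-\delta}/(1-\delta)^{1-\delta}\bigr)^\mu$. The analytic step here is $(1-\delta)\ln(1-\delta) + \delta \geq \delta^2/2$ on $0 \leq \delta < 1$, which I would establish by expanding $\ln(1-\delta) = -\sum_{k \geq 1} \delta^k/k$ and observing that the resulting series for the left-hand side dominates $\delta^2/2$ term by term. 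The main obstacle in the whole proof is not probabilistic but analytic: carefully handling the two numerical inequalities that convert the exact MGF optima into the clean exponential forms claimed. The probabilistic core (Markov plus independence) is entirely mechanical.
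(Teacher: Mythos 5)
You should first note that the paper does not prove this statement at all: it is quoted as a black-box standard result from Mitzenmacher and Upfal (their Theorems 4.2, 4.3), so the only meaningful comparison is with the standard textbook argument, which is exactly the moment-generating-function route you take. Your master estimate, your part (b) (substituting $\beta = 1+\delta$, discarding the factor $e^{-\mu}$, and using $e/\beta \leq 1/2$ for $\beta \geq 2e$), and your part (c) (the series identity $(1-\delta)\ln(1-\delta)+\delta = \sum_{k \geq 2} \delta^k/(k(k-1)) \geq \delta^2/2$) are all correct.

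However, your part (a) contains a genuine error in the analytic step. For $f(\delta) = (1+\delta)\ln(1+\delta) - \delta - \delta^2/3$ you claim that $f''(\delta) = 1/(1+\delta) - 2/3 \geq 0$ on the whole interval $(0,1]$. This is false for $\delta > 1/2$: for instance $f''(1) = 1/2 - 2/3 = -1/6 < 0$. Consequently your convexity argument only establishes the inequality $(1+\delta)\ln(1+\delta) - \delta \geq \delta^2/3$ on $(0,1/2]$, and part (a) is left unproven on $(1/2,1]$. The inequality itself does hold on all of $(0,1]$ (at $\delta = 1$ it reads $2\ln 2 - 1 \approx 0.386 \geq 1/3$), and the repair is short: $f'(\delta) = \ln(1+\delta) - 2\delta/3$ satisfies $f'(0) = 0$, and since $f''$ is positive on $(0,1/2)$ and negative on $(1/2,1)$, the function $f'$ first increases and then decreases, so its minimum over $[0,1]$ is attained at an endpoint; as $f'(0) = 0$ and $f'(1) = \ln 2 - 2/3 > 0$, we conclude $f' \geq 0$ on $[0,1]$, hence $f$ is nondecreasing and $f \geq f(0) = 0$. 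This endpoint argument is essentially the one in Mitzenmacher--Upfal. With that single step repaired, your proof is complete and matches the standard one.
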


\begin{lemma}
    \label{lem:geom-chernoff}
    Let $F_1, \dots, F_n$ be independent and identical geometrically distributed random variables with parameter $p$, i.e., for any integer $k \geq 1$,
    $ \Pr{F_i = k} = (1 - p)^{k-1}p. $
    Let $F = \sum_{i=1}^n F_i$ and $\mu = \E{F}$. Then for any
    $k\geq 2\mu$,
    \[ \Pr{F \geq k} \leq \exp\left(-\frac{kp}{8}\right).\]
\end{lemma}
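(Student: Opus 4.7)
The plan is to reduce the event $\{F \geq k\}$ to a lower-tail deviation event for a binomial random variable, and then invoke the standard Chernoff bound already stated in Theorem~\ref{thm:chernoff}(c).

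First, I would observe the following identity: if we imagine a sequence of i.i.d.\ Bernoulli$(p)$ trials, then the sum $F = \sum_{i=1}^n F_i$ is distributed as the waiting time for the $n$-th success. Hence $F \geq k$ exactly when fewer than $n$ successes appear in the first $k-1$ trials, i.e.\
\[
    \Pr{F \geq k} = \Pr{\Bin(k-1, p) \leq n-1}.
\]
This turns the geometric tail into a binomial lower-tail, which we can attack with the Chernoff bound in its quadratic form $\Pr{X \leq a} \leq \exp\bigl(-(mp - a)^2/(2mp)\bigr)$ for $X \sim \Bin(m, p)$ and $a \leq mp$ (this is part (c) of Theorem~\ref{thm:chernoff}, rewritten with $a = (1-\delta)mp$).

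Applied with $m = k-1$ and $a = n-1$, this yields
\[
    \Pr{F \geq k}
    \leq
    \exp\!\left(-\frac{\bigl((k-1)p - (n-1)\bigr)^2}{2(k-1)p}\right),
\]
provided $n - 1 \leq (k-1)p$. The condition $k \geq 2\mu = 2n/p$ makes this valid and, moreover, lets me parametrize $c := kp/n \geq 2$ and write $(k-1)p = cn - p$ and $(k-1)p - (n-1) = (c-1)n + (1-p) \geq (c-1)n$. Plugging these crude but clean bounds into the exponent gives
\[
    \frac{\bigl((k-1)p - (n-1)\bigr)^2}{2(k-1)p}
    \;\geq\;
    \frac{(c-1)^2 n^2}{2cn}
    \;=\;
    \frac{(c-1)^2 n}{2c}.
\]

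The final step is the elementary algebraic check that $(c-1)^2/(2c) \geq c/8$ for all $c \geq 2$, which reduces to $2(c-1) \geq c$, i.e.\ $c \geq 2$, precisely our hypothesis. Combined with $kp/8 = cn/8$, this gives $\Pr{F \geq k} \leq \exp(-kp/8)$, as required. There is no real obstacle here beyond the algebra; the only subtlety is choosing the parametrization $c = kp/n$ so that the inequality hits equality exactly at the boundary $k = 2\mu$, which is what forces the constant $8$ in the exponent and makes the $k \geq 2\mu$ hypothesis tight for this proof strategy.
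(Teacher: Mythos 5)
Your proposal is correct and follows essentially the same route as the paper: both reinterpret $F$ as the waiting time for the $n$-th success in i.i.d.\ Bernoulli$(p)$ trials, convert $\{F \geq k\}$ into a binomial lower-tail event, and apply Theorem~\ref{thm:chernoff}(c). The only differences are cosmetic (you use the exact identity with $k-1$ trials and the parametrization $c = kp/n$, while the paper uses the one-sided implication $F \geq k \Rightarrow \Bin(k,p) \leq n$ and computes $\delta = 1 - \mu/k \geq 1/2$ directly), and both yield the same constant $8$.
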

\begin{proof}
    We define a coupling between random variables $(F_i)_{i=1}^n$ and a sequence of Bernoulli trials $(X_j)_{j=1}^{\infty}$ with parameter $p$.
    Let $j_0 = 0$ and for $i\geq 1$, let  $j_i = \min\{ j > j_{i-1} :  X_j = 1 \}$, i.e., $j_i$ is the index of $i$th 1 in $(X_j)$.
    We set $F_i = j_i - j_{i-1}$.
    With this coupling, $F \geq k$ implies $Y_k = \sum_{j=1}^k X_j \leq n$.
    Therefore,
    \[
    \Pr{F \geq k} \leq \Pr{Y_k \leq n},
    \]
    which we can bound using standard Chernoff bounds from Theorem~\ref{thm:chernoff}. We have that $\E{Y_k} = kp$, and
    $\mu = n\E{F_1} = n / p$.
    Then,
    \begin{align*}
        \Pr{F \geq k} &\leq \Pr{Y_k \leq n} \\
            &= \Pr{Y_k \leq \E{Y_k}\left(1 - \left(1 - \frac{\mu}{k}\right)\right)} \\
            &\leq \exp\left(-\frac{kp}{2}\left(1 - \frac{\mu}{k}\right)^2\right),
            \quad \text{by Chernoff bound,} \\
            &\leq \exp\left(-\frac{kp}{8}\right),
            \quad \text{since } k \geq 2\mu.
            \qedhere
    \end{align*}
\end{proof}

%\begin{lemma}[Concentration for negatively associated random variables, {\cite[Theorem 3.1]{Dubhashi_Panconesi_2009}}]
%    \label{lem:neg-assoc}
%    Let $X_1, X_2, \dots, X_n$ be \underline{negatively associated} Bernoulli random variables, i.e., $\E{X_iX_j} \leq \E{X_i}\E{X_j}$ for any $i \neq j$. Then, the results from Lemma \ref{lem:chernoff} apply.
%\end{lemma}

\begin{lemma}
    \label{lem:stochastic-sum}
    Let $Z_1, \dots, Z_k$ be (dependent) integer random variables, and $Z_i'$ be mutually independent random variables, such that for any $1 \leq i \leq k$ and $b \geq 0$,
    \[ \Pr{Z_i \leq b \mid Z_1, \dots, Z_{i - 1} } \geq \Pr{Z_i' \leq b}. \]
    Then, for any $b \geq 0$,
    \[ \Pr{\sum_{i=1}^k Z_i \leq b} \geq \Pr{\sum_{i=1}^k Z_i' \leq b }.\]
\end{lemma}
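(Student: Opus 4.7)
The plan is to prove the lemma by induction on $k$, using the fact that the hypothesis states $Z_i$ is stochastically dominated (in the usual CDF sense) by $Z_i'$ conditionally on $Z_1,\dots,Z_{i-1}$, and that such dominance is preserved under summation. The base case $k=1$ is exactly the assumption. For the inductive step, I would condition on the value of $Z_1$ and write
\[
\Pr{\sum_{i=1}^k Z_i \leq b} = \sum_{z} \Pr{Z_1 = z}\cdot \Pr{\sum_{i=2}^k Z_i \leq b-z \;\middle|\; Z_1 = z}.
\]

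For the inner factor, conditioning on $Z_1=z$ preserves the hypothesis: for every $i \geq 2$ and $b'\geq 0$ we still have $\Pr{Z_i \leq b' \mid Z_1=z,\, Z_2,\dots,Z_{i-1}} \geq \Pr{Z_i' \leq b'}$, because the original inequality is uniform in the conditioning. Applying the inductive hypothesis to $Z_2,\dots,Z_k$ under the law $\Pr{\,\cdot\mid Z_1=z\,}$, and to $Z_2',\dots,Z_k'$ (which remain mutually independent), yields
\[
\Pr{\sum_{i=2}^k Z_i \leq b-z \;\middle|\; Z_1 = z} \geq \Pr{\sum_{i=2}^k Z_i' \leq b-z}.
\]
Substituting back, $\Pr{\sum_{i=1}^k Z_i \leq b} \geq \E{g(Z_1)}$, where $g(z) := \Pr{\sum_{i=2}^k Z_i' \leq b-z}$ is a non-increasing function of $z$.

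The final step uses the standard characterization that $\Pr{Z_1 \leq b'} \geq \Pr{Z_1' \leq b'}$ for all $b'$ is equivalent to $\E{g(Z_1)} \geq \E{g(Z_1')}$ for every bounded non-increasing $g$; applying this gives $\Pr{\sum_{i=1}^k Z_i \leq b} \geq \E{g(Z_1')}$, and by independence of $Z_1'$ from $Z_2',\dots,Z_k'$ the right-hand side equals $\Pr{\sum_{i=1}^k Z_i' \leq b}$, closing the induction. An equivalent and slightly cleaner alternative is to build a sequential inverse-CDF coupling: draw i.i.d.\ uniforms $U_1,\dots,U_k$, set $Z_i' = F_{Z_i'}^{-1}(U_i)$ and $Z_i = F_{Z_i\mid Z_1,\dots,Z_{i-1}}^{-1}(U_i)$, so that the hypothesis forces $Z_i \leq Z_i'$ almost surely and summation gives the result in one line. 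I do not expect any substantial obstacle; the only point requiring care is remembering that the ``$\leq b$'' form of the hypothesis encodes that $Z_i$ is \emph{smaller} than $Z_i'$, which is why we must pair it with a \emph{non-increasing} test function $g$.
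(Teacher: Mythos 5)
Your proposal is correct, and in fact more detailed than the paper's own proof, which consists of the single sentence ``Follows from a simple coupling argument.'' The sequential inverse-CDF coupling you sketch at the end is precisely the argument the paper has in mind: the i.i.d.\ uniforms make the $Z_i'$ independent with the right marginals, the conditional construction gives $(Z_1,\dots,Z_k)$ its correct joint law, and the pointwise CDF domination forces $Z_i\le Z_i'$ termwise, hence for the sums. Your induction is a legitimate, fully elementary alternative that avoids constructing a joint space at all; the only thing it costs is the extra bookkeeping with the non-increasing test function $g$, and the only thing it buys is that every step is a direct manipulation of the stated inequalities. One small point worth making explicit in either version: the hypothesis is stated only for $b\ge 0$, but your induction invokes the domination at thresholds $b-z$ (and the test-function equivalence needs it at all thresholds), and likewise the coupling needs $F_{Z_i\mid\cdots}(t)\ge F_{Z_i'}(t)$ for every $t$. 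This is harmless because in the paper's application all the variables are non-negative integers, so the inequality at negative thresholds is the trivial $0\ge 0$; but as literally stated for general integer random variables the lemma would need the hypothesis extended to all $b$, and your proof should say it is using that extension.
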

\begin{proof}
Follows from a simple coupling argument.
\end{proof}

%Using multiple random walks to minimize network collisions and limit bandwidth usage is not a new concept. For example, the authors \cite{multitoken} study medium access using several tokens. In \cite{securitytokens}, the authors study multiple security tokens performing random walks to communicate securely between parties while minimizing network overhead.

%\cite{ClementiPS09,ClementiMPS11} study the broadcast time in a geometric version of the problem, where the agents move in a plane and an agent becomes informed if it is within the transmission radius of an informed agent.

\end{document}